\newif\ifconf
\title{A Direct Proof of the Strong Hanani--Tutte Theorem on the Projective
Plane\thanks{The project was partially supported by the Czech-French collaboration project EMBEDS (CZ:
      7AMB15FR003, FR: 33936TF).
      \'E.~C.~V. was partially supported by the French ANR Blanc project ANR-
12-BS02-005 (RDAM).
 V.~K. was partially supported by the project
      GAUK 926416. V.~K. and M.~T. were partially supported by the project
      GA\v{C}R 16-01602Y.
			P.~P. was supported by the ERC Advanced grant no. 320924.
      Z.~P. was partially supported by Israel Science Foundation grant ISF-768/12.
}
}
\titlerunning{Hanani--Tutte and the Projective Plane}
\author{\'{E}ric Colin de Verdi\`{e}re\inst{1}
\and Vojt\v{e}ch
Kalu\v{z}a\inst{2}
\and Pavel Pat\'{a}k\inst{3} \and Zuzana Pat\'{a}kov\'{a}\inst{3} \and Martin Tancer\inst{2}}
\institute{
D\'epartement d'informatique, \'Ecole normale sup\'erieure,
Paris and CNRS, France
\and
Department of Applied Mathematics,
Charles University in Prague, 
Czech Republic
\and
Einstein Institute of Mathematics, The Hebrew University of Jerusalem, Israel
}
\title{A Direct Proof of the Strong Hanani--Tutte Theorem on the Projective
Plane\thanks{The project was partially supported by the Czech-French collaboration project EMBEDS (CZ:
      7AMB15FR003, FR: 33936TF).
      \'E.~C.~V. was partially supported by the French ANR Blanc project ANR-
12-BS02-005 (RDAM).
 V.~K. was partially supported by the project
      GAUK 926416. V.~K. and M.~T. were partially supported by the project
      GA\v{C}R 16-01602Y.
			P.~P. was supported by the ERC Advanced grant no. 320924.
      Z.~P. was partially supported by Israel Science Foundation grant ISF-768/12.
}
}
\author[1]{\'{E}ric Colin de Verdi\`{e}re}
\author[2]{Vojt\v{e}ch Kalu\v{z}a}
\author[3]{Pavel Pat\'{a}k}
\author[3]{Zuzana Pat\'{a}kov\'{a}}
\author[2]{Martin Tancer}
\affil[1]{D\'epartement d'informatique, \'Ecole normale sup\'erieure,
Paris and CNRS, France}
\affil[2]{Department of Applied Mathematics,
Charles University in Prague, %Malostransk\'{e} n\'{a}m. 25,
%118~00~~Praha~1,  
Czech Republic}
\affil[3]{Einstein Institute of Mathematics, The Hebrew University of Jerusalem, Israel}
\date{}
\definecolor{red}{rgb}{.8,0,0}
\definecolor{blue3}{rgb}{.1,.0,.4}
\newtheorem*{theorem*}{Theorem}
\newtheorem{theorem}{Theorem}
\newtheorem*{lemma*}{Lemma}
\newtheorem{lemma}[theorem]{Lemma}
\newtheorem*{proposition*}{Proposition}
\newtheorem{proposition}[theorem]{Proposition}
\newtheorem*{fact*}{Fact}
\newtheorem*{question*}{Question}
\newtheorem*{corollary*}{Corollary}
\newtheorem{corollary}[theorem]{Corollary}
\newcounter{claimcounter}[theorem]
\numberwithin{claimcounter}{theorem}
\newtheorem*{claim*}{Claim}
\newtheorem{claim}[claimcounter]{Claim}
\theoremstyle{remark}
\newtheorem*{remark*}{Remark}
\theoremstyle{definition}
\newtheorem*{definition*}{Definition}
\newtheorem{definition}[theorem]{Definition}
\numberwithin{equation}{section}
\spnewtheorem{observation}[theorem]{Observation}{\bfseries}{\itshape}
\spnewtheorem*{observation*}{Observation}{\bfseries}{\itshape}
\spnewtheorem*{proofsketch}{Proof sketch}
{\itshape}{}
\newtheorem*{hypothesis*}{Hypothesis}
\newtheorem*{observation*}{Observation}
\newtheorem{observation}[theorem]{Observation}
\newcommand{\abs}[1]{\left|#1\right|}
\newcommand{\R}{\mathbb{R}}
\newcommand{\Z}{\mathbb{Z}}
\newcommand{\RP}{\mathbb{R}P}
\DeclareMathOperator{\crno}{cr}
\newcommand{\inn}[1]{#1^+}
\newcommand{\out}[1]{#1^-}
\newcommand{\zin}{(\inn{V}, \inn{E})}
\newcommand{\zout}{(\out{V}, \out{E})}
\newcommand{\gin}{G^{+0}}
\newcommand{\gout}{G^{-0}}
\newcommand{\ar}[2]{\overline{#1#2}}
\begin{document}

\maketitle
\vspace{-1cm}
\begin{abstract}
 We reprove the strong Hanani--Tutte theorem on the projective plane.
 In contrast to the previous proof by Pelsmajer, Schaefer and Stasi, 
 our method is constructive and 
 does not rely on the characterization of forbidden minors,
 which gives hope to extend it to other surfaces.
 Moreover, our approach can be used to provide an efficient algorithm
 turning a Hanani--Tutte drawing on the projective plane into an embedding.
\ifconf
\keywords{graph drawing, graph embedding, Hanani--Tutte theorem, projective
  plane, topological graph theory}
\fi
\end{abstract}

\section{Introduction}
A drawing of a graph on a surface is a \emph{Hanani--Tutte drawing} \ifconf(shortly
an \emph{HT-drawing}) \fi if no two
vertex-disjoint edges cross an odd number of times. We call vertex-disjoint
edges \emph{independent}.

Pelsmajer, Schaefer and Stasi~\cite{PSS09} proved the following theorem via
consideration of the forbidden minors for the projective plane. 
\begin{theorem}[Strong Hanani--Tutte for the projective plane, \cite{PSS09}]\label{thm:main}
 A graph $G$ can be embedded into the projective plane if and only if it admits
 \ifconf an HT-drawing \else a Hanani--Tutte drawing \fi on the projective
 plane.\footnote{Of course, the ``only if'' part is trivial.}
\end{theorem}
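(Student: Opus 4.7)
The plan is to bootstrap from the strong Hanani--Tutte theorem on the sphere by cutting the projective plane open along a suitable non-contractible cycle of $G$. Given an HT-drawing of $G$ on $\RP^2$, I first perform standard normalizations: apply known subdrawing/redrawing results to make a chosen spanning tree $T$ of $G$ crossing-free, and use the HT-property together with the ``edge-slide'' moves to push each remaining edge so that it crosses $T$ zero times. If at this point the resulting drawing happens to be contained in a disk of $\RP^2$, the planar strong Hanani--Tutte theorem immediately gives an embedding.

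Otherwise, the drawing must contain a cycle $C \subseteq G$ whose image is non-contractible in $\RP^2$. The key technical step is to choose $C$ carefully and redraw around it so that $C$ becomes a simple closed curve homotopic to a projective line and every other edge of $G$ crosses $C$ an even number of times; the HT-property supplies the parity information needed for this normalization. Cut $\RP^2$ along $C$ to obtain a closed disk $D$ whose boundary double-covers $C$ via the antipodal identification. Each edge of $G \setminus C$ is then cut into an even number of arcs lying in $D$, with endpoints paired up on $\partial D$ by the covering map.

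Having reduced to a planar problem, apply the strong planar Hanani--Tutte theorem to the disk drawing, treating it as an HT-drawing of an auxiliary graph $G'$ obtained from $G$ by subdividing along $C$ and splitting vertices of $C$ according to the cut. This produces an embedding of $G'$ in $D$. To finish, re-glue $D$ back to $\RP^2$ and verify that the planar embedding is compatible with the antipodal identification on $\partial D$, yielding a genuine embedding of $G$.

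The main obstacle I anticipate is this final compatibility step: the planar embedding of $G'$ in $D$ is not unique, and one must show that some valid such embedding respects the antipodal gluing on $\partial D$. The source of hope is that the parities of crossings with $C$, preserved throughout the reductions by the HT-property, encode exactly on which side of $C$ each arc emerges; translating this encoding into the combinatorics of the rotation system along $\partial D$ is where the real content of the theorem lies. A secondary obstacle is producing the initial cycle $C$ with all the required normalizations simultaneously (simple image, even crossings with every other edge, compatibility with $T$), which likely requires a careful inductive or minimal-counterexample argument.
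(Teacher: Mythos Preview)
Your plan has genuine gaps and diverges from the paper's approach in a way that relocates rather than resolves the main difficulty.

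The paper does \emph{not} cut along a non-contractible cycle. Working with a ``projective HT-drawing'' on $S^2$ (with a map $\lambda\colon E\to\Z_2$ recording crosscap parity), it reduces to the $2$-connected case, finds a homologically \emph{trivial} cycle $Z$, makes it simple and even, and proves a Separation Theorem (Theorem~\ref{t:black_box}): after redrawing, $Z$ is crossing-free, the inside and outside of $Z$ are separated, and at least one of the two sides becomes an \emph{ordinary} HT-drawing on $S^2$. That last clause is the heart of the argument; it is established through a structural analysis of ``arrow graphs'' (recording nontrivial proper walks between $Z$-vertices inside each bridge), a classification of the possible configurations (fans, squares, split triangles), and explicit redrawing procedures for each. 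Induction then handles the two sides separately and glues along $Z$.

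In your plan, already the first normalization is unjustified: the analogue of Theorem~2.1 of \cite{PSS07} (freeing even edges of crossings without creating new odd pairs) is false on surfaces other than the sphere, as the paper explicitly notes, so you cannot simply ``apply known results'' to make a spanning tree crossing-free on $\RP^2$. More seriously, even granting all your normalizations, the gluing step is exactly where the content of the theorem sits, and you offer no argument for it. The planar strong Hanani--Tutte theorem yields \emph{some} embedding of your auxiliary graph $G'$ in the disk, with no control over the cyclic order along $\partial D$; showing that an embedding compatible with the antipodal identification exists is not bookkeeping but requires precisely the kind of structural control the paper extracts via its arrow analysis. You correctly flag this as ``where the real content of the theorem lies,'' but the proposal supplies no mechanism to produce it, so as it stands it is a restatement of the problem rather than a proof.
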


Our main result is a constructive proof of Theorem~\ref{thm:main}.
The need for a constructive proof is motivated by the strong
Hanani--Tutte conjecture, which states that an analogous result is valid on an
arbitrary (closed) surface. 
This
conjecture is known to be valid only on the sphere (plane) and on the
projective plane. The approach via forbidden minors is relatively simple on the
projective plane; however, this approach does not seem applicable to other surfaces, because there is no reasonable characterization of forbidden minors for them. (Already for the torus or the Klein bottle, the exact list is not
known.)

On the other hand, our approach reveals a number of difficulties that have to be
overcome in order to obtain a constructive proof. If the conjecture is true,
our approach may serve as a basis for its proof on a general surface. If the
conjecture is not true, then our approach may perhaps help to reveal
appropriate structure needed for a construction of a counterexample.

\ifconf
\else
Unfortunately, our approach needs to build an appropriate toolbox for
manipulating with Hanani--Tutte drawings on the projective plane (many tools
are actually applicable to a general surface). This significantly prolongs the
paper. Therefore, we present the main ideas of our approach in the first four
sections of the paper while postponing the technical details to the later sections.
\fi

\paragraph{The Hanani--Tutte theorem on the plane and related results.}
Let us now briefly describe the history of the problem; for complete history and relevant results we refer to
a nice survey by Schaefer~\cite{Schaefer2013}.
Following the work of Hanani~\cite{Hanani34}, Tutte~\cite{Tutte70} made a remarkable observation now known as the (strong) Hanani--Tutte 
theorem: a graph is planar if and only if it admits \ifconf an HT-drawing \else
a Hanani--Tutte drawing \fi in the plane.
The theorem has also a parallel history in algebraic topology, where it follows from the ideas of
van Kampen, Flores, Shapiro and Wu~\cite{Kampen33,Wu55,Shapiro57,Levow72}.
 
It is a natural question whether the strong Hanani--Tutte theorem can be extended to graphs on other surfaces;
as we already said before, it has been confirmed only for the projective plane~\cite{PSS09} so far.
On general surfaces, only the weak version~\cite{weak-HT, weak-HT-2} of the theorem is known to be true:
if a graph is drawn on a surface so that every pair of edges crosses an even
number of times\footnote{including 0 times}, 
then the graph can be embedded into the
surface while preserving the cyclic order of the edges at all \ifconf vertices.
\else
vertices.\footnote{In fact, the embedding preserves the embedding scheme of
  the graph, where the notion of embedding scheme is a generalization of the
  rotation systems to arbitrary (even non-orientable) surfaces. For more
details on this topic, we refer to~\cite[Chap. 3.2.3]{Gross87}, where
embedding schemes are called rotation systems and our rotation systems are
called pure.} \fi
Note that in the strong version we require that only independent edges cross even number of times,
while in the weak version this condition has to hold for all pairs of edges.

We remark that other variants of the Hanani--Tutte theorem generalizing the notion of embedding in the plane have also been considered. For instance, the strong Hanani--Tutte theorem was proved for partially embedded graphs~\cite{SchaeferTToP} and both weak and strong Hanani--Tutte theorem were proved also for $2$-clustered graphs~\cite{Fulek2015clustered}.

The strong Hanani--Tutte theorem is important from the algorithmic point of
view, since it implies the Tr\'{e}maux crossing theorem,
which is used to prove de Fraysseix-Rosenstiehl's planarity criterion~\cite{Fraysseix85}.
This criterion has been used to justify the linear time planarity algorithms including the Hopcroft-Tarjan~\cite{Hopcroft74} and the Left-Right~\cite{Fraysseix12} algorithms.
For more details we again refer to \cite{Schaefer2013}.
\ifconf \else \medskip \fi

One of the reasons why the strong Hanani--Tutte theorem is so important is that it turns planarity question into 
a system of linear equations. For general surfaces, the question whether there
exists a Hanani--Tutte drawing of $G$ leads to a system of quadratic
equations~\cite{Levow72} over $\Z_2$. If the strong Hanani--Tutte theorem is
true for the surface, any solution to the system then serves as a certificate
that $G$ is embeddable.  Moreover, if the proof of the Hanani--Tutte theorem is
constructive, it gives a recipe how to turn the solution into an actual
embedding.  Unfortunately, solving systems of quadratic equations is
NP-complete.

For completeness we mention that for each surface there
exists a polynomial time algorithm that decides whether a graph 
can be embedded into that surface~\cite{Mohar99, Kawarabayashi08};
however, the hidden constant depends exponentially on the genus.

\ifconf\else \medskip \fi
The original proofs of the strong Hanani--Tutte theorem in the plane used
Kuratowski's theorem~\cite{Kuratowski30}, and therefore are non-constructive. In 2007,
Pelsmajer, Schaefer and \v{S}tefankovi\v{c}~\cite{PSS07} published a
constructive proof. They showed a sequence of moves that change \ifconf an
HT-drawing \else a Hanani--Tutte drawing \fi into an embedding.

A key step in their proof is their Theorem~2.1.  
We say that an edge is \emph{even} if it crosses every other edge
an even number of times (including the adjacent edges). 

\begin{theorem}[Theorem~2.1 of~\cite{PSS07}]
\label{t:2.1}
If $D$ is a drawing of a graph $G$ in the plane, and $E_0$ is the set of even
edges in $D$, then
$G$ can be drawn in the plane so that no edge in $E_0$ is involved in an
intersection and there are
no new pairs of edges that intersect an odd number of times.
\end{theorem}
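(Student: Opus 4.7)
The plan is to proceed by induction on the total number of crossings involving edges in $E_0$. The base case is trivial. For the inductive step, I would exhibit a local redrawing move that strictly decreases this count while (a) preserving the crossing parity of every pair of edges of $G$, and (b) keeping every edge of $E_0$ even (so that $E_0$ persists as the set of even edges throughout the induction).

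Fix an even edge $e \in E_0$ involved in a crossing with some other edge $f$. Since $e$ is even, $f$ crosses $e$ an even positive number of times, so I can select two crossings $p$ and $q$ that are consecutive along $f$. The subarc $\alpha \subset e$ from $p$ to $q$ together with the subarc $\beta \subset f$ from $p$ to $q$ forms a simple closed curve $\partial B$, bounding a topological disk $B$ by the Jordan curve theorem. The main move is a \emph{finger move across $B$}: in a small neighborhood of $B$, replace $\beta$ by a curve $\beta'$ closely paralleling $\alpha$ and pushed to the exterior of $B$, thereby deleting the two crossings $p$ and $q$.

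I would then analyze the effect on parities. The move only alters crossings of $f$ with other edges. For any third edge $g$, the parity change of $\operatorname{cr}(f,g)$ equals $\operatorname{cr}(\partial B, g) \pmod 2$, which by the Jordan curve theorem vanishes unless exactly one endpoint of $g$ lies in the interior of $B$. To guarantee this never happens, I would refine the choice of $B$ to an \emph{innermost} bigon, ranging over all pairs of edges of $G$ (not just $e$ and $f$): if some innermost $B$ contained a vertex $w$, then any edge $h$ incident to $w$ would have to cross $\partial B$, and pairing the crossings of $h$ with $\partial B$ (using evenness of $e$ together with the Jordan curve theorem) would produce a strictly smaller bigon inside $B$, contradicting innermostness. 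Thus an innermost $B$ is free of vertices, all parities are preserved, and $\operatorname{cr}(e,f)$ drops by an even amount, so $e$ remains even.

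The main obstacle, beyond making the innermost-bigon argument rigorous, is ensuring the finger move does not introduce illegal self-crossings of $f$. Specifically, if $\alpha$ has crossings with $f$ in its interior (i.e., $p$ and $q$ are not also consecutive along $e$), then $\beta'$, paralleling $\alpha$, would intersect the rest of $f$ and create forbidden self-intersections. Handling this requires either routing $\beta'$ with small detours around such interior crossings and checking that these detours preserve parities, or first reducing to the case where $p,q$ are consecutive along both $e$ and $f$ and then iterating. Ranking bigons by the number of crossings strictly inside $\partial B$ gives a well-founded order that feeds into the overall induction. Once these technicalities are settled, the induction closes cleanly and the theorem follows.
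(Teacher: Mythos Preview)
Your approach differs genuinely from that of~\cite{PSS07} (which the present paper follows for its closely related Theorem~\ref{t:PSS_in_out}). That argument does not chip away at bigons; it \emph{almost-contracts} each even edge $uv$ by sliding $v$ along the image of $uv$ toward $u$. Because $uv$ is even, every other edge $g$ meets $uv$ an even number of times, so during the slide each edge incident to $v$ picks up an even number of extra crossings with $g$; all parities are preserved and $uv$ becomes crossing-free in one global move, with no local analysis of what might lie near the edge.

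Your bigon route has a real gap at the step you present as a technicality: it is not true that an innermost bigon (over all pairs of edges) must be free of vertices. Here is a concrete configuration. Take three pairwise independent edges $e,f,g$; draw $f$ so that it meets $e$ in exactly two points $p,q$, bounding a bigon $B$ with $\partial B=\alpha\cup\beta$, $\alpha\subset e$, $\beta\subset f$. Place one endpoint $w$ of $g$ in the interior of $B$ and draw $g$ so that it exits $B$ through $\beta$ exactly once (at a point $r$), never meets $e$ at all (it escapes around an endpoint of $e$), and meets $f$ exactly once more at a point $s$ on $f\setminus\beta$. Then $\crno(e,f)=2$, $\crno(e,g)=0$, $\crno(f,g)=2$, so $e$ is even. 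The only other bigon present is the $f$--$g$ bigon, one of whose corners $s$ lies outside $B$; hence no bigon is strictly contained in $B$, so $B$ is innermost --- yet $w\in B$. Performing your finger move on $B$ deletes the crossing at $r$ but keeps the one at $s$, so $\crno(f,g)$ drops from $2$ to~$1$: a new odd pair appears and your invariant~(b) fails. The step ``pair the crossings of $h$ with $\partial B$ using evenness of $e$'' breaks because the mate of a crossing on $\alpha$ guaranteed by evenness of $e$ may lie on $e\setminus\alpha$, hence outside $B$; and in the example above the edge $g$ incident to $w$ does not meet $e$ at all.

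So this is not a matter of tidying up details: some mechanism is needed to control or evacuate vertices from the bigon before the move, and that is precisely what the contraction in~\cite{PSS07} achieves globally and for free.
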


Unfortunately, an analogous result is simply not true on other surfaces, as is shown in~\cite{weak-HT-2}. In
particular, this is an obstacle for a constructive proof of
Theorem~\ref{thm:main}. \ifconf The key step of our approach is to provide a
suitable replacement of Theorem~\ref{t:2.1} on the projective plane. This is
provided by Theorem~\ref{t:black_box} in Sect.~\ref{s:blackbox}.
\else

\paragraph[\texorpdfstring{Our approach---replacement of Theorem~2.1
in~\cite{PSS07}.}{Replacement of Theorem~2.1 in PSS.}]{Our approach---replacement of Theorem~2.1 in~\cite{PSS07}.}

The key step of our approach is to provide a suitable replacement of
Theorem~2.1 in~\cite{PSS07} (Theorem~\ref{t:2.1}); see also Lemma~3 in~\cite{Fulek2012AdjCrossings}. For a description of this
replacement, let us focus on the following simplified setting.

Let us consider the case that we
have a graph $G$ with a Hanani--Tutte drawing $D$ on the sphere $S^2$. 
Let $Z$ be a cycle of $G$ which is \emph{simple}, that is, drawn without
self-intersections, and such that every edge of $Z$ is even. Theorem~\ref{t:2.1} then implies that $G$ can be
redrawn so that $Z$ is free of crossings without introducing new pairs of edges
crossing oddly.

\begin{figure}
\begin{center}
\includegraphics{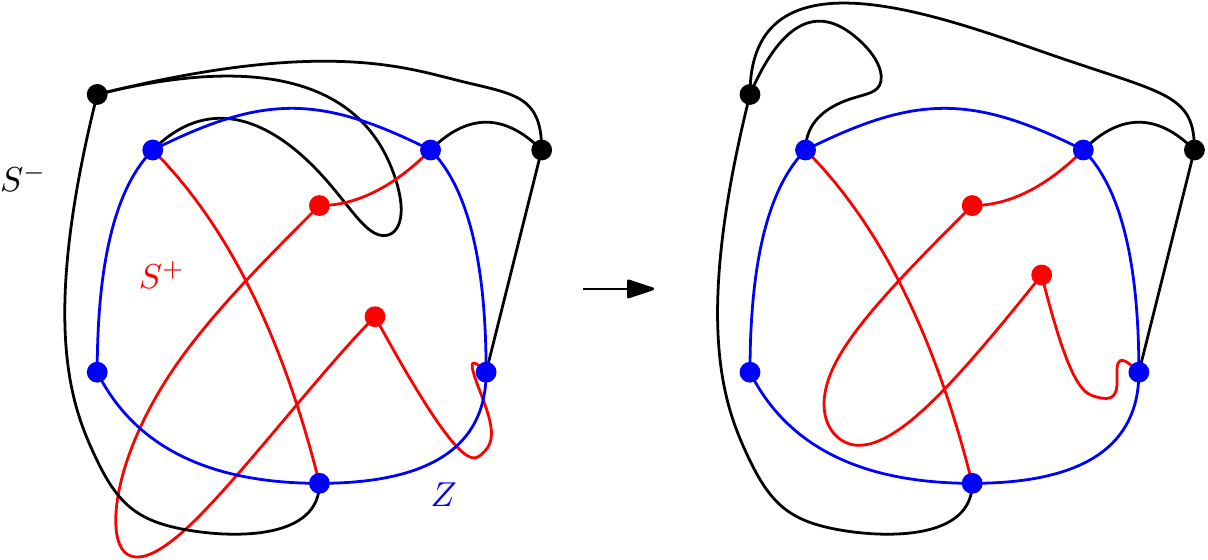}
\caption{Separating the outside (in black) and the inside (in red).}
\label{f:in_out}
\end{center}
\end{figure}

Actually, a detailed inspection of the proof in~\cite{PSS07} reveals something
slightly stronger in this setting. The drawing of $Z$ splits the plane into two
parts that we call the \emph{inside} and the \emph{outside}. This in turn
splits $G$ into two parts. The inside part consists of vertices that are inside
$Z$ and of the edges that
have either at least one endpoint inside $Z$, or they have both endpoints on $Z$
and they enter the inside of $Z$ next to both endpoints. The outside part is
defined analogously. Because we have started with a Hanani--Tutte drawing, it
is easy to check that every vertex and every edge is on $Z$ or inside or
outside. The proof of Theorem~\ref{t:2.1} in~\cite{PSS07} then implies that the
inside and the outside may be fully separated in the drawing; see
Fig.~\ref{f:in_out}. Actually, this can be done even by a continuous motion---if
the drawing is considered on the sphere (instead of the plane).

The trouble on $\RP^2$ is that it may not be possible to separate the
outside and the inside by a continuous motion (of each of the parts
separately). This is demonstrated by a projective-planar drawing of $K_5$ in
Fig.~\ref{f:K_5_example}, left. (The symbol `$\otimes$' stands for the
crosscap in the picture.)

\begin{figure}
\begin{center}
\includegraphics{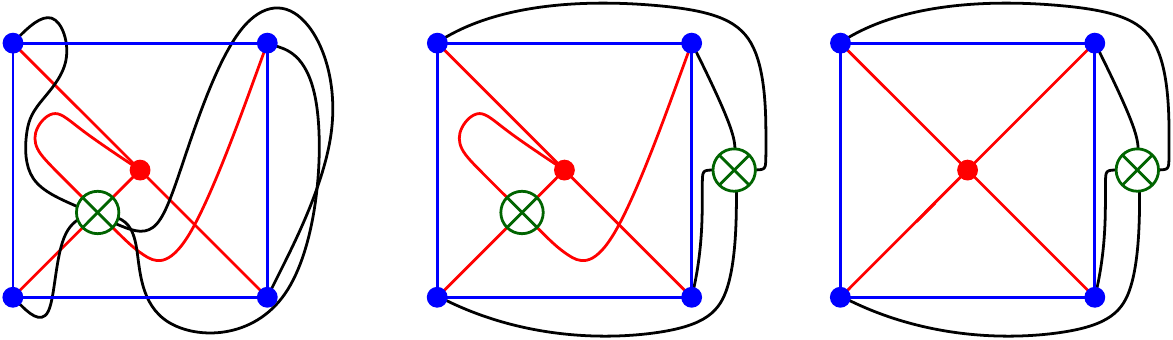}
\caption{Projective-planar drawing of $K_5$ where the outside and the inside cannot be 
separated by a continuous motion (right) and a solution by
duplicating the crosscap (middle) and removing one of them (right).}
\label{f:K_5_example}
\end{center}
\end{figure}

It would actually help significantly if we were allowed to duplicate the
crosscap as in Fig.~\ref{f:K_5_example}, middle. However, the problem is that
we cannot afford raising the genus. On the other hand, if we give up on a continuous
motion, we may observe that the inside vertices and edges in
Fig.~\ref{f:K_5_example}, middle, may be actually redrawn in a planar way if we
remove the `inside' crosscap. This step changes the homotopy/homology type of
many cycles in the drawing.

Our main technical contribution is to show that it is not a coincidence that
this simplification of the drawing in Fig.~\ref{f:K_5_example} was possible. We will show that it is always
possible to redraw one of the sides without using the `duplicated' crosscap.
The precise statement is given by Theorem~\ref{t:black_box}.

\paragraph{The remainder of the proof.} 
As we mentioned above, Theorem~\ref{t:2.1} is a key ingredient in the proof of
the strong Hanani--Tutte theorem in the plane. The rough idea is to find a
suitable order on some of the cycles of the graph so that Theorem~\ref{t:2.1}
can be used repeatedly on these cycles eventually obtaining a planar drawing. A
detailed proof of Pelsmajer, Schaefer and \v{S}tefankovi\v{c} uses an induction
based on this idea. 

Similarly, we use Theorem~\ref{t:black_box} in an inductive proof of
Theorem~\ref{thm:main}. The details in our setting are more
complicated, because we have to take care of two types of cycles in the
graph based on their homological triviality. We also need to put more
effort to set up the induction in a suitable way for using
Theorem~\ref{t:black_box}, because our setting for Theorem~\ref{t:black_box} is
slightly more restrictive than the setting of Theorem~\ref{t:2.1}.
\fi

\ifconf
We refer to the full version of this paper, which contains many
details missing in this extended abstract.
\else

\paragraph{Organization of the paper.}
In Sect.~\ref{s:ht_drawings} we describe Hanani--Tutte drawings on the
projective plane and their properties. There we also set up several tools for
modifications of the drawings. In particular, we describe how to transform
the Hanani--Tutte drawings on $\RP^2$ into drawings on the sphere satisfying a
certain additional condition. This helps significantly in several cases with
manipulating these drawings.
In Sect.~\ref{s:blackbox} we describe the precise statement of
Theorem~\ref{t:black_box}. We also provide a proof of this theorem in that
section, however, we postpone the proofs of many auxiliary results to later
sections. 
In Sect.~\ref{s:induction} we prove Theorem~\ref{thm:main} using
Theorem~\ref{t:black_box} and some of the auxiliary results from
Sect.~\ref{s:blackbox}.
The remaining sections are devoted to the missing proofs of auxiliary results.
\fi

\section{Hanani--Tutte Drawings}
\label{s:ht_drawings}

In this section, we consider Hanani--Tutte drawings of graphs on the
sphere and on the projective plane. We use the standard notation from graph
theory. Namely, if $G$ is a graph, then $V(G)$ and $E(G)$ denote the set of
vertices and the set of edges of $G$, respectively. Given a vertex $v$ or an
edge $e$, by $G - v$ or $G - e$ we denote the graph obtained from $G$ by
removing $v$ or $e$, respectively.

\ifconf\else
\fi

Regarding drawings of graphs, first, let us recall a few standard definitions considered on an arbitrary surface.
We put the standard general position assumptions on the drawings. That is, we
consider only drawings of graphs on a surface such that no edge contains a
vertex in its interior and every pair of edges meets only in a finite number of
points, where they \emph{cross} transversally. However, we allow three or more
edges meeting in a single \ifconf point.\footnote{We do not mind them because
we study pairwise interactions of edges only.} \else 
point (we do not mind them because we study the
pairwise interactions of the edges only). Let us also mention that, in all this
paper, we can assume that in every drawing, every edge is free of
self-crossings. Indeed, we can remove any self-crossing without changing the
image of the edge, except in a small neighborhood of the self-crossing.
\fi

\ifconf
\else
We recall from the introduction that two edges are independent if they do not
share a vertex.
Given a surface $S$ and a graph $G$, 
a \emph{(strong) Hanani--Tutte drawing} of
$G$ on $S$ is a drawing of $G$ on $S$ such that every pair of independent edges
crosses an even number of times. 
We will often abbreviate the term (strong) Hanani--Tutte drawing to \emph{HT-drawing}.

\fi

\ifconf\else
\paragraph{Crossing numbers.} \fi
Let $D$ be a drawing of a graph $G$ on a surface $S$. Given two distinct
edges $e$ and $f$ of $G$ by $\crno(e,f) = \crno_{D}(e,f)$ we denote the number of
crossings between $e$ and $f$ in $D$ modulo 2.
We say that an edge $e$ of $G$
is \emph{even} if $\crno(e,f) = 0$ for any $f \in E(G)$ distinct from $e$.
We emphasize that we consider the crossing number as an element of $\Z_2$ and
all computations throughout the paper involving it are done in $\Z_2$.

\paragraph[\texorpdfstring{HT-drawings on $\RP^2$.}{HT-drawings on the projective plane.}]{HT-drawings on $\RP^2$.}
It is convenient for us to set up some conventions for working with the HT-drawings on
the (real) projective plane, $\RP^2$. There are various ways to represent
$\RP^2$. Our convention will be the following: we consider the sphere $S^2$ and a disk
(2-ball) $B$ in it. We remove the interior of $B$ and identify the opposite points on the boundary $\partial B$. 
This way, we obtain a representation of $\RP^2$.
Let $\gamma$ be the curve coming from $\partial B$ after the identification. We
call this curve a \emph{crosscap}. It is a homologically (homotopically)
non-trivial simple cycle (loop) in $\RP^2$, and conversely, any homologically
(homotopically) nontrivial simple cycle (loop) may serve as a crosscap up to a self-homeomorphism of
$\RP^2$. In drawings, we use the symbol $\otimes$ for the crosscap coming from
the removal of the disk `inside' this symbol. \ifconf\else We also use this
symbol for ends of proofs.\fi

Given an HT-drawing of a graph on $\RP^2$, it can be slightly shifted so that it
meets the crosscap in a finite number of points and only transversally, still keeping
the property that we have an HT-drawing. Therefore, we may add to our conventions 
that this is the case for our HT-drawings on $\RP^2$. 

Now, we consider a map $\lambda\colon E(G) \to \Z_2$. For an edge $e$, we let
$\lambda(e)$ be the number of crossings of $e$ and the crosscap $\gamma$
modulo 2. 
We emphasize that $\lambda$ depends on the choice of the crosscap. \ifconf\else Afterwards, it 
will be useful to alter $\lambda$ via so-called vertex-crosscap 
switches, which we will explain a bit later. \fi

Given a (graph-theoretic) cycle $Z$ in $G$, we can distinguish whether $Z$ is
drawn as a homologically nontrivial cycle by checking the value $\lambda(Z) := \sum\lambda(e)
\in \Z_2$ where the sum is over all edges of $Z$. The cycle $Z$ is
homologically nontrivial if and only if $\lambda(Z) = 1$. In particular,
it follows that $\lambda(Z)$ does not depend on the choice of the crosscap.

\paragraph[\texorpdfstring{Projective HT-drawings on $S^2$.}{Projective HT-drawings on the sphere.}]{Projective HT-drawings on $S^2$.}

Let $D$ be an HT-drawing of a graph $G$ on $\RP^2$. 
It is not hard to deduce a
drawing $D'$ of the same graph on $S^2$ such that every pair $(e,f)$ of \emph{independent}
edges satisfies $\crno(e,f) = \lambda(e)\lambda(f)$. Indeed, it is sufficient
to `undo' the crosscap, glue back the disk $B$ and then let the edges
intersect on $B$. 
\ifconf
See the two leftmost pictures below.
\begin{center}
\includegraphics{redraw_crosscap_conference}
\end{center}
This motivates the following definition.
\else
See the two leftmost pictures in Fig.~\ref{f:redraw_crosscap}.
This motivates the following definition.
\fi

\begin{definition}
  \label{d:projective_HT}
  Let $D$ be a drawing of a graph $G$ on $S^2$ and $\lambda\colon E(G) \to
  \Z_2$ be a function. Then the pair $(D, \lambda)$ is a \emph{projective
  HT-drawing of $G$ on $S^2$} if $\crno(e,f) = \lambda(e)\lambda(f)$ for any pair
  of independent edges $e$ and $f$ of $G$. \ifconf \else (If $\lambda$ is sufficiently 
  clear from the context, we say that $D$ is a projective HT-drawing
  of $G$ on $S^2$.) \fi
\end{definition}

\ifconf\else
\begin{figure}
\begin{center}
\includegraphics{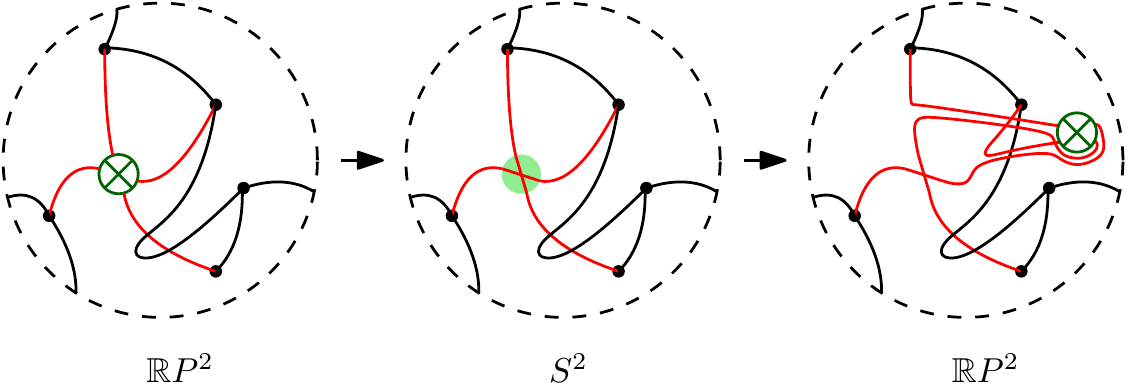}
\caption{Transformations between HT-drawings on $\RP^2$ and projective
HT-drawings on $S^2$.}
\label{f:redraw_crosscap}
\end{center}
\end{figure}
\fi

It turns out that a projective HT-drawing on $S^2$ can also be transformed to
an HT-drawing on $\RP^2$.

\ifconf
\begin{proposition}
\label{p:projective_drawings}
A graph $G$ admits a projective HT-drawing on $S^2$ 
(with respect to some
function $\lambda\colon E(G) \to \Z_2$)
if and only if it admits an HT-drawing
on $\RP^2$.
\end{proposition}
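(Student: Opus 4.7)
For the forward direction, which is essentially sketched in the preceding text, the plan is to take an HT-drawing on $\RP^2$, cut along the crosscap $\gamma$, glue back the disk $B$, and reconnect each severed edge by a chord inside $B$ between its two antipodal exit points on $\partial B$. For a pair of independent edges $e,f$ with $\lambda(e)=\lambda(f)=1$, the four chord endpoints $p_e,-p_e,p_f,-p_f$ are antipodally interleaved on $\partial B$, so the chords meet an odd number of times inside $B$; since $\crno_{\RP^2}(e,f)=0$, this gives $\crno_{S^2}(e,f)=1=\lambda(e)\lambda(f)$. Pairs involving an edge of $\lambda$-value $0$ contribute no chord and are unchanged, so all independent pairs satisfy the projective HT relation.

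For the reverse direction, the strategy is to invert the above. Given $(D,\lambda)$ on $S^2$, pick a small closed disk $B\subset S^2$ together with a slightly larger open disk $U\supset B$, both disjoint from every vertex and edge of $D$. For each edge $e$ with $\lambda(e)=1$, pick a point $x_e$ on $e$, cut $e$ there, and route the two cut ends to antipodal points $p_e,-p_e\in\partial B$ by a pair of arcs $\alpha_e,\beta_e$ that travel as a close parallel tube outside $U$ and diverge only inside $U$ to reach $p_e$ and $-p_e$; add an arbitrary chord $\tau_e\subset\intr(B)$ from $p_e$ to $-p_e$ so that the modified edge is continuous on $S^2$. Then perform the crosscap operation: remove $\intr(B)$ (which discards all the chords $\tau_e$) and identify antipodal points on $\partial B$. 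The result is $\RP^2$, and each modified edge is continuous (since $p_e$ and $-p_e$ collapse to a single point) and crosses the new crosscap exactly once, realizing the prescribed $\lambda$-value.

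To verify the HT condition on the resulting drawing, I would split into cases on $E_1:=\lambda^{-1}(1)$. If neither of an independent pair $e,f$ lies in $E_1$, nothing is changed and the crossing count remains even. If only $e\in E_1$, the parallel tube structure of $\alpha_e,\beta_e$ outside $U$ forces them to cross every unmodified edge the same number of times, contributing an even total. If both $e,f\in E_1$, closing up $\alpha_e\cup\beta_e$ across the cut by a tiny arc yields a single path in $S^2\setminus\intr(B)$ from $p_e$ to $-p_e$, and analogously for $f$; since $p_e,p_f,-p_e,-p_f$ are interleaved on $\partial B$, the two paths cross an odd number of times in the disk $S^2\setminus\intr(B)$, and this oddness survives on $\RP^2$. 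Combined with the original $\crno_{S^2}(e,f)=\lambda(e)\lambda(f)=1$, the total count on $\RP^2$ is even. The main delicate point will be arranging the parallel tube structure outside $U$; this is straightforward because $U$ is a topological disk free of the drawing, so the divergence of $\alpha_e,\beta_e$ inside $U$ and near $\partial B$ creates no unwanted crossings with other edges.
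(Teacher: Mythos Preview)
Your proposal is correct and follows essentially the same route as the paper: your parallel-tube construction is exactly the paper's finger-move (Lemma~\ref{l:S2_to_RP2} in the full version), and the forward direction is the paper's ``undo the crosscap'' sketch. The only cosmetic difference is in the verification for two edges with $\lambda$-value~$1$: the paper reads off the single extra crossing from the explicit local redrawing near the crosscap (Fig.~\ref{f:finger_crosscap}), whereas you obtain it from the interleaving of the antipodal pairs $p_e,-p_e,p_f,-p_f$ on $\partial B$ and the standard odd-crossing fact for arcs in a disk---both arguments establish the same parity change.
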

The full proof of the missing implication is not too difficult and it is given in the
full version of the paper (see Corollary~5). The core of the proof can be deduced from the two rightmost pictures
above.
\else
\begin{lemma}
\label{l:S2_to_RP2}
Let $(D,\lambda)$ be a projective HT-drawing of a graph $G$ on $S^2$.
Then there is an HT-drawing $D_{\otimes}$ of
  $G$ on $\RP^2$ such that $\crno_{D_{\otimes}}(e,f) = \crno_{D}(e,f) +
  \lambda(e)\lambda(f)$ for any pair of distinct edges of $G$, possibly
  adjacent. In addition, if $e$ and $f$ are arbitrary two edges such that
  $\lambda(e) = \lambda(f) = 0$ and $D(e)$ and $D(f)$ are disjoint; then
  $D_\otimes(e)$ and $D_\otimes(f)$ are disjoint as well.
\end{lemma}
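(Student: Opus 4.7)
My plan is to build $D_\otimes$ from $D$ by inserting a crosscap into a small region of $S^2$ that is disjoint from the drawing, and then adding a local ``crosscap detour'' to each edge $e$ with $\lambda(e)=1$, while leaving every edge with $\lambda(e)=0$ exactly as drawn in $D$. The disjointness statement at the end of the lemma will then be automatic, since edges with $\lambda=0$ are literally not touched by the construction.

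Concretely, I would first choose a closed disk $B\subset S^2$ whose interior is disjoint from the image of $D$ (which exists because $D$ is a finite $1$-complex and hence has empty interior in $S^2$), and identify antipodal points of $\partial B$ to obtain $\RP^2 \cong (S^2\setminus\intr(B))/\!\sim$ with crosscap $\gamma := \partial B/\!\sim$. For each edge $e$ with $\lambda(e)=1$ I then pick a point $q_e$ in the interior of $e$ and a generic simple arc $\alpha_e$ in $S^2\setminus V(G)$ from $q_e$ to some point $x_e\in\partial B$, thicken $\alpha_e$ to a narrow strip bounded by two nearly parallel arcs $\alpha_e^+,\alpha_e^-$, and replace a tiny subarc of $e$ near $q_e$ by the detour $\alpha_e^+\cup\{x_e\sim -x_e\}\cup\alpha_e^-$. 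Taking $B$ small enough makes $x_e$ and $-x_e$ arbitrarily close in $S^2$, so a short twist near $\partial B$ lets the two boundary arcs end at $x_e$ and $-x_e$ respectively without the strip having to widen away from $\alpha_e$.

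The bulk of the proof is then a crossing-count case analysis for an arbitrary pair $e\neq f$. If $\lambda(e)=\lambda(f)=0$, nothing is modified and the claim $\crno_{D_\otimes}(e,f)=\crno_D(e,f)$ is trivial. If exactly one, say $e$, has $\lambda(e)=1$, the only new contribution to $\crno_{D_\otimes}(e,f)$ comes from $f$ meeting the strip attached to $e$; since $\alpha_e^+$ and $\alpha_e^-$ are close parallel copies of $\alpha_e$ on opposite sides, each transverse crossing of $f$ with $\alpha_e$ produces a pair of new crossings, so the change is $0\pmod 2$, matching $\lambda(e)\lambda(f)=0$. Adjacency plays no role here, since the detour is inserted well inside $e$ and away from its endpoints.

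The hardest case---and the one I expect to be the main obstacle---is $\lambda(e)=\lambda(f)=1$, where both detours are present and I must argue that together they contribute exactly one extra crossing modulo $2$. I would close each detour off into a loop $L_e$, $L_f$ by appending the short subarc of the original edge that was removed; each loop meets $\gamma$ exactly once, so both represent the nonzero class of $H_1(\RP^2;\Z_2)$. Since the $\Z_2$-intersection form on $\RP^2$ satisfies $[\gamma]\cdot[\gamma]=1$, the loops $L_e$ and $L_f$ must cross an odd number of times in total. Choosing $q_e,q_f$ and $\alpha_e,\alpha_f$ in sufficiently general position ensures that the short closing subarcs avoid both each other and the opposite detour, so this forced odd crossing sits between the two detours themselves; the remaining strip-versus-edge and strip-versus-strip crossings contribute $0\pmod 2$ by the same parallel-copy argument as before, giving exactly the required $+\lambda(e)\lambda(f)=+1$ and finishing the verification.
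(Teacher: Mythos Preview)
Your construction is exactly the paper's: choose a small disk $B$ disjoint from $D(G)$, replace it by a crosscap, and for every edge $e$ with $\lambda(e)=1$ pull a thin finger from the interior of $e$ to the crosscap while leaving all edges with $\lambda(e)=0$ untouched. The parallel-copy argument for the cases $\lambda(e)\lambda(f)=0$ is also the same as in the paper, and the disjointness clause follows in both proofs for the same trivial reason.

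The one genuine difference is in the $\lambda(e)=\lambda(f)=1$ case. The paper handles it \emph{constructively}: it specifies (via a figure) how to route all the fingers across the crosscap so that any two of them meet in exactly one extra crossing, and then simply reads off the $+1$. You instead argue \emph{a posteriori}: whatever generic routing you choose near $\partial B$, the closed-up detours $L_e,L_f$ are homologically nontrivial loops in $\RP^2$, hence cross oddly by the intersection form, and by genericity this odd number of crossings is entirely between the two detours. This is correct and has the advantage of not relying on a picture; on the other hand it invokes the intersection form on $\RP^2$ (which the paper only states later as Lemma~\ref{lem:2cycles}), whereas the paper's version is self-contained and elementary. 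Your phrase ``the remaining strip-versus-edge and strip-versus-strip crossings contribute $0\pmod 2$'' is slightly garbled---once you know $L_e\cdot L_f$ is odd and the closing arcs are clean, the detour-versus-detour count \emph{is} that odd number, and the only ``remaining'' terms are detour-versus-original-edge, which are even by the parallel-copy argument---but the underlying accounting is right.
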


\begin{proof}
It is sufficient to consider a small disk $B$ which does not
intersect $D(G)$, replace it with a crosscap and redraw the edges
$e$ with $\lambda(e) = 1$ appropriately as described below. (Follow the two
pictures on the right in Fig.~\ref{f:redraw_crosscap}.) 
From each edge $e$ with $\lambda(e) = 1$, we pull a
thin `finger-move' towards the crosscap which intersects every other edge in
pairs of intersection points. Then we redraw the edge in a close neighbourhood
of the crosscap as indicated in Fig.~\ref{f:finger_crosscap}. After this
redrawing, each edge $e$ such that $\lambda(e) = 1$ passes over the crosscap
once and each edge $e$ with $\lambda(e) = 0$ does not pass over it. This agrees
with our original definition of $\lambda$ for HT-drawings on $\RP^2$. In
addition, we indeed obtain an HT-drawing on $\RP^2$ with
$\crno_{D_{\otimes}}(e,f) = \crno_{D}(e,f) +
  \lambda(e)\lambda(f)$, because in the last step 
we introduce one more crossing among pairs of edges $e$, $f$ such that
$\lambda(e) = \lambda(f) = 1$.
\ifconf
\qed
\fi
\end{proof}
\begin{figure}
\begin{center}
\includegraphics{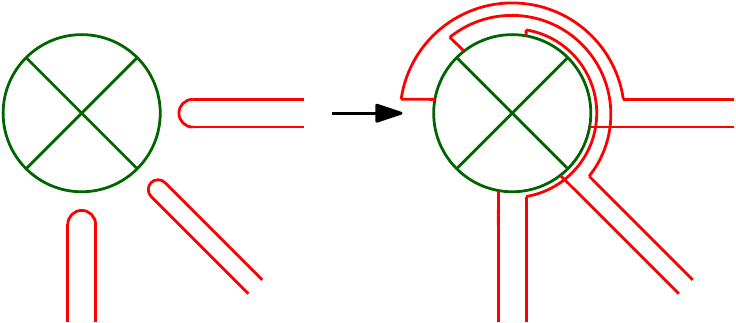}
\caption{Redrawing the finger-moves around the crosscap.}
\label{f:finger_crosscap}
\end{center}
\end{figure}

In summary, Lemma~\ref{l:S2_to_RP2} together with the previous discussion provide
us with two viewpoints on the Hanani--Tutte drawings.

\begin{corollary}
\label{c:projective_drawings}
A graph $G$ admits a projective HT-drawing on $S^2$ 
(with respect to some
function $\lambda\colon E(G) \to \Z_2$)
if and only if it admits an HT-drawing
on $\RP^2$.
\end{corollary}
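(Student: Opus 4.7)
The plan is to chain together the two transformations that the paper has already developed. One direction---turning a projective HT-drawing on $S^2$ into an HT-drawing on $\RP^2$---is exactly Lemma~\ref{l:S2_to_RP2}, which I would simply invoke. The other direction---turning an HT-drawing on $\RP^2$ into a projective HT-drawing on $S^2$---is the informal ``undo the crosscap'' construction sketched just before Definition~\ref{d:projective_HT}, and my task is to check that this construction really does produce a pair $(D', \lambda)$ satisfying Definition~\ref{d:projective_HT}.

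For the forward direction I would start with an HT-drawing $D$ on $\RP^2$, represented (by the convention of the paper) as $S^2$ with the interior of a disk $B$ removed and antipodal points on $\partial B$ identified, the quotient curve being the crosscap $\gamma$. After a small perturbation we may assume that every edge meets $\gamma$ transversally, and in fact in exactly $\lambda(e)$ points modulo $2$. Undoing the antipodal identification and gluing $B$ back into the sphere restores $S^2$, but each edge $e$ with $\lambda(e) = 1$ now has an odd number of dangling endpoints on $\partial B$, paired antipodally by the original identification. I would reconnect each such pair through the interior of $B$ by a chord. A planar observation then controls the crossing count: any two chords joining distinct antipodal pairs of $\partial B$ cross an odd number of times inside $B$, whereas edges with $\lambda(e) = 0$ contribute no chord at all. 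Combined with the fact that the rest of the drawing is unchanged, this yields, for any two independent edges $e, f$,
\[
\crno_{D'}(e,f) \;=\; \crno_D(e,f) + \lambda(e)\lambda(f) \;=\; \lambda(e)\lambda(f),
\]
since $D$ was Hanani--Tutte. This is exactly the projective HT condition.

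For the backward direction I would just apply Lemma~\ref{l:S2_to_RP2} to $(D, \lambda)$; the resulting drawing $D_\otimes$ on $\RP^2$ satisfies $\crno_{D_\otimes}(e,f) = \crno_D(e,f) + \lambda(e)\lambda(f) = 2\lambda(e)\lambda(f) = 0$ in $\Z_2$ for every pair of independent edges, confirming that $D_\otimes$ is genuinely Hanani--Tutte.

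The main (and mild) obstacle is the verification in the forward direction that the reconnecting chords really generate exactly the predicted parity of new crossings---one for each independent pair $e, f$ with $\lambda(e) = \lambda(f) = 1$ and none otherwise. This reduces to the elementary planar fact about chords joining antipodal pairs on a circle; everything else is bookkeeping mod $2$, so I do not expect any serious trouble.
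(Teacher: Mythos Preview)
Your proposal is correct and follows exactly the approach the paper takes: the corollary is simply the combination of Lemma~\ref{l:S2_to_RP2} (for the projective-HT-on-$S^2$ to HT-on-$\RP^2$ direction) with the informal ``undo the crosscap'' construction described just before Definition~\ref{d:projective_HT} (for the converse), and you have merely spelled out the latter in more detail. One small slip in wording: an edge with $\lambda(e)=1$ actually has an \emph{even} number $2k_e$ of dangling endpoints on $\partial B$, forming an odd number $k_e$ of antipodal pairs---but your chord-parity count (antipodal chords always interleave, so $k_e k_f\equiv\lambda(e)\lambda(f)$ new crossings arise) is correct for arbitrary $k_e$, so nothing is affected.
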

\fi

The main strength of \ifconf Proposition~\ref{p:projective_drawings} \else
Corollary~\ref{c:projective_drawings}  \fi relies in the fact
that in projective HT-drawings on $S^2$ we can ignore the actual geometric position of
the crosscap and work in $S^2$ instead, which is simpler. This is especially
helpful when we need to merge two drawings. \ifconf\else
On the other hand, it
turns out that for our approach it will be easier to perform certain parity counts
in the language of HT-drawings on $\RP^2$.
\fi

In order to distinguish the usual HT-drawings on $S^2$ from the projective HT-drawings, we will sometimes refer to the former as to the \emph{ordinary} HT-drawings on $S^2$.

\paragraph{Nontrivial walks.}

Let $(D, \lambda)$ be a projective HT-drawing of a graph $G$
and $\omega$ be a walk in $G$. We define 
$\lambda(\omega) := \sum_{e\in E(\omega)} \lambda(e)$ where $E(\omega)$ is the multiset of edges appearing in
$\omega$. Equivalently, it is sufficient to consider only the edges appearing
an odd number of times in $\omega$, because $2\lambda(e) = 0$ for any edge $e$.
We say that $\omega$ is \emph{trivial} if $\lambda(\omega) = 0$ and
\emph{nontrivial} otherwise. \ifconf \else

\fi
We often use this terminology in special cases when $\omega$ is an edge, a
path, or a cycle. \ifconf
\else
In particular, a cycle $Z$ is trivial if and only if it is drawn as a homologically 
trivial cycle in the corresponding drawing $D_\otimes$ of $G$ on $\RP^2$ from
Lemma~\ref{l:S2_to_RP2}.
\fi

\ifconf
\else
Given two homologically nontrivial cycles on $\RP^2$ it is well known that they
must cross an odd number of times (assuming they cross at every intersection).
This fact is substantiated by Lemma~\ref{lem:2cycles} later on.
However, we 
first present a weaker version of this statement in the setting of projective 
HT-drawings, which we need sooner.

\begin{lemma}
\label{l:disjoint_cycles_projective}
Let $(D,\lambda)$ be a projective HT-drawing of a graph $G$ on $S^2$. Then $G$
does not contain two vertex-disjoint nontrivial cycles.
\end{lemma}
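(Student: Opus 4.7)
The plan is to compute the total parity of crossings between the two cycles in two different ways and to derive a contradiction if both are nontrivial. Throughout, all arithmetic is in $\Z_2$.

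Suppose for contradiction that $G$ contains two vertex-disjoint nontrivial cycles $Z_1$ and $Z_2$. Because $Z_1$ and $Z_2$ share no vertex, every edge $e \in E(Z_1)$ and every edge $f \in E(Z_2)$ form an independent pair, so the projective HT-drawing condition gives $\crno_D(e,f) = \lambda(e)\lambda(f)$. Summing over all such pairs and expanding the product yields
\[
\sum_{e \in E(Z_1),\, f \in E(Z_2)} \crno_D(e,f)
\;=\; \Bigl(\sum_{e \in E(Z_1)} \lambda(e)\Bigr)\Bigl(\sum_{f \in E(Z_2)} \lambda(f)\Bigr)
\;=\; \lambda(Z_1)\,\lambda(Z_2).
\]

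Next I would argue that the same sum is $0 \in \Z_2$. After a small perturbation of $D$ (not changing any $\crno_D(e,f)$), I may assume that all intersection points of $D(Z_1)$ with $D(Z_2)$ are pairwise distinct and transverse; the displayed left-hand side then counts the total number of such intersection points modulo $2$, i.e.\ the $\Z_2$-intersection number of the two closed curves $D(Z_1)$ and $D(Z_2)$ in $S^2$. Since $H_1(S^2;\Z_2) = 0$, every closed curve in $S^2$ is null-homologous, and so this intersection number vanishes.

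Combining the two computations gives $\lambda(Z_1)\,\lambda(Z_2) = 0$, contradicting $\lambda(Z_1) = \lambda(Z_2) = 1$. Hence no two vertex-disjoint nontrivial cycles can coexist.

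The main point requiring care is the second step: edges in $D$ may be allowed to have self-crossings and triple meeting points, so before invoking the homology argument one must first pass to a perturbed drawing in which each $D(Z_1)$–$D(Z_2)$ contact is a transverse double point. Since such a perturbation changes none of the pairwise crossing parities, this is purely a routine general-position adjustment and does not affect the count. The truly conceptual input is simply the vanishing of $H_1(S^2;\Z_2)$, which is exactly what distinguishes the sphere from $\RP^2$ and makes the statement fail there.
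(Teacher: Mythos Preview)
Your proof is correct and follows essentially the same approach as the paper: both argue by contradiction, use the projective HT-condition on all independent pairs $(e,f)\in E(Z_1)\times E(Z_2)$ to conclude that the total number of crossings between $D(Z_1)$ and $D(Z_2)$ is $\lambda(Z_1)\lambda(Z_2)=1$, and then contradict the fact that two closed curves in $S^2$ must cross an even number of times. The only cosmetic difference is that you phrase the last step via $H_1(S^2;\Z_2)=0$, whereas the paper states it as the elementary fact about planar curves; your extra care about general position is not needed under the paper's standing assumption that edges are free of self-crossings and meet transversally.
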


\begin{proof}
For contradiction, let $Z_1$ and $Z_2$ be two vertex-disjoint nontrivial cycles in
$G$. That is, $Z_1$ as well as $Z_2$ contains an odd number of nontrivial edges.
Therefore, there is an odd number of pairs $(e_1, e_2)$ of nontrivial edges
where $e_1 \in Z_1$ and $e_2 \in Z_2$. According to
Definition~\ref{d:projective_HT}, $Z_1$ and $Z_2$ must have an odd number of
crossings. But this is impossible for two cycles in the plane which cross at
every intersection (in $D$).
\ifconf
\qed
\fi
\end{proof}
\fi

\ifconf
\else
\paragraph{Vertex-edge and vertex-crosscap switches.}

Let $D$ be a drawing of a graph $G$ on $S^2$. Let us consider a vertex $v$ and an edge
$e$ of $G$ such that $v$ is not incident to $e$. We modify the drawing $D$
into drawing $D'$ so that we pull a thin finger from the interior of $e$
towards $v$ and we let this finger pass over $v$. We say that $D'$ is obtained
from $D$ by the \emph{vertex-edge switch} $(v, e)$.\footnote{Another name for
the \emph{vertex-edge switch} is the \emph{finger-move} common mainly in
topological context in higher dimensions.} If we have an edge $f$
incident to $v$, then the crossing number $\crno(e,f)$ of this pair changes (from
$0$ to $1$ or vice versa), but it does not change for any other pair, because the
`finger' intersects the other edges in pairs.

Now, let $(D,\lambda)$ be a projective HT-drawing of $G$ on $S^2$.
It is very useful to alter $\lambda$ at the
cost of redrawing $G$. Given a vertex $v$, we perform the vertex-edge switches
$(v, e)$ for all edges $e$ not incident to $v$ such that $\lambda(e) = 1$
obtaining a drawing $D'$. We also introduce a new function
$\lambda'\colon E(G) \to \Z_2$ derived from $\lambda$ by switching the value of
$\lambda$ on all edges of $G$ incident to $v$. In this case, we say that $D'$
(and $\lambda'$) is obtained by the \emph{vertex-crosscap switch} over $v$.\footnote{In the case of drawings on $\RP^2$, a vertex-crosscap switch corresponds to passing
the crosscap over $v$, which motivates our name. On the other hand, it is beyond
our needs to describe this correspondence exactly.} 
It yields again an HT-drawing.

\begin{lemma}\label{l:vertex_crosscap_switch}
  Let $(D, \lambda)$ be a projective HT-drawing of $G$ on $S^2$.
  Let $D'$ and $\lambda'$ be obtained
  from $D$ and $\lambda$ by a vertex-crosscap switch. Then $(D', \lambda')$ is a projective
HT-drawing of $G$ on $S^2$.
\end{lemma}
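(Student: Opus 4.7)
The plan is to verify the defining equation $\crno_{D'}(e,f) = \lambda'(e)\lambda'(f)$ for every independent pair $(e,f)$ by straightforward $\Z_2$ bookkeeping. First, I would record the two elementary facts on which everything rests: a single vertex-edge switch $(v,e)$ changes $\crno(e,g)$ by $1$ exactly for those edges $g$ incident to $v$, and does not affect any crossing number of a pair not involving $e$; and the vertex-crosscap switch over $v$ is obtained by performing $(v,e)$ for every edge $e$ with $\lambda(e)=1$ that is not incident to $v$.

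From these two facts I would extract exactly which switches in the whole procedure can alter $\crno(e,f)$ for a fixed independent pair: only the switches $(v,e)$ and $(v,f)$ (when they are performed at all), and each such switch contributes iff the other edge of the pair is incident to $v$. Setting $a = [e \text{ incident to } v]$ and $b = [f \text{ incident to } v]$, and using that independence of $e$ and $f$ forces $ab=0$, I would write the total change (mod $2$) as
\[
\crno_{D'}(e,f) - \crno_D(e,f) \;=\; (1-a)\lambda(e)\,b \;+\; (1-b)\lambda(f)\,a \;=\; b\lambda(e) + a\lambda(f).
\]
On the other side, $\lambda'(e) = \lambda(e) + a$ and $\lambda'(f) = \lambda(f) + b$, so
\[
\lambda'(e)\lambda'(f) - \lambda(e)\lambda(f) \;=\; a\lambda(f) + b\lambda(e) + ab \;=\; a\lambda(f) + b\lambda(e),
\]
again using $ab=0$. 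Combining the two equalities with the hypothesis $\crno_D(e,f)=\lambda(e)\lambda(f)$ yields $\crno_{D'}(e,f) = \lambda'(e)\lambda'(f)$, which is what we needed.

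The only subtle point, and the thing I would double-check carefully, is the bookkeeping of which vertex-edge switches actually touch the crossing number of a given pair; in particular, one must use independence of $e$ and $f$ to rule out $a=b=1$ (otherwise the $ab$ term would not cancel and the two sides would differ). Aside from this, the proof is a short case analysis (neither edge at $v$, or exactly one edge at $v$), and the two cases can be handled uniformly by the $(a,b)$ computation above, so no real obstacle arises.
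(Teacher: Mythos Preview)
Your proof is correct and follows essentially the same approach as the paper: both verify the defining equation $\crno_{D'}(e,f)=\lambda'(e)\lambda'(f)$ for each independent pair by tracking how the vertex-edge switches and the flip of $\lambda$ at $v$ affect the two sides. The only difference is presentational---you package the case analysis into a single $\Z_2$ computation with the indicator variables $a,b$ (using $ab=0$ from independence), whereas the paper writes out the cases ``neither edge at $v$'' and ``exactly one edge at $v$, with $\lambda(f)=0$ or $1$'' separately.
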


\begin{proof}
It is routine to check that $\crno_{D'}(e,f) = \lambda'(e)\lambda'(f)$ for any
pair of independent edges $e$ and $f$.

Indeed, let $v$ be the vertex inducing the switch. If neither $e$ nor $f$ is
incident to $v$, then 
\[
\crno_{D'}(e,f) = \crno_{D}(e,f) = \lambda(e)\lambda(f) =
\lambda'(e)\lambda'(f).
\]
It remains to consider the case that one of the edges, say $e$, is incident to
$v$. Note that $\lambda(e) = 1- \lambda'(e)$ and $\lambda(f) = \lambda'(f)$ in
this case.

If $\lambda(f) = 0$, then
\[
\crno_{D'}(e,f) = \crno_{D}(e,f) = \lambda(e)\lambda(f) = 0 = 
\lambda'(e)\lambda'(f).
\]

Finally, if $\lambda(f) = 1$, then
\[
\crno_{D'}(e,f) = 1 - \crno_{D}(e,f) = 1 - \lambda(e)\lambda(f) = \lambda(f) -
\lambda(e)\lambda(f) = 
\lambda'(e)\lambda'(f).
\]
\ifconf
\qed
\fi
\end{proof}

We also remark that a vertex-crosscap switch keeps the triviality or nontriviality of cycles.
Indeed, let $Z$ be a cycle. If $Z$ avoids $v$, then $\lambda(Z) = \lambda'(Z)$
since $\lambda(e) = \lambda(e')$ for any edge $e$ of $Z$. If $Z$ contains $v$,
then $\lambda(Z) = \lambda'(Z)$ as well since $\lambda(e) \neq \lambda'(e)$ for
exactly two edges of $Z$.
\fi

\ifconf
Now let us consider a subgraph $P$ of $G$ such
that every cycle in $P$ is trivial. Then $P$ essentially behaves as a planar
subgraph of $G$, which we make more precise by the following lemma. For its proof,
see Lemma~8 in the full version of the paper.

\begin{lemma}
  \label{l:planarize}
  Let $(D, \lambda)$ be a projective HT-drawing of $G$ on $S^2$
  and let $P$ be a subgraph of $G$ such 
  that every cycle in $P$ is trivial. Then there is a projective HT-drawing
  $(D', \lambda')$ of $G$ on $S^2$ such that $\lambda'(e) = 0$ for any edge $e$
  of $E(P)$.
\end{lemma}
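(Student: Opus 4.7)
The plan is to realize $\lambda'$ via a sequence of vertex-crosscap switches. By Lemma~\ref{l:vertex_crosscap_switch}, each such switch preserves the projective HT-drawing property, so iterated application still yields a projective HT-drawing. If we perform switches at every vertex of some set $S \subseteq V(G)$, a direct check shows that the resulting function satisfies
\[
\lambda'(uv) = \lambda(uv) + \mathbf{1}[u \in S] + \mathbf{1}[v \in S]
\]
for every edge $uv \in E(G)$, because an edge incident to two switched vertices is flipped twice. Consequently, it is enough to find $g \colon V(P) \to \Z_2$ satisfying $\lambda(uv) = g(u) + g(v)$ for every $uv \in E(P)$, and then to take $S := \{v \in V(P) : g(v) = 1\}$.

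I would construct $g$ one connected component of $P$ at a time. In a component $C$, I fix a spanning tree $T_C$ and a root $r_C$, set $g(r_C) := 0$, and for any other vertex $v \in V(C)$ define $g(v)$ as the sum of $\lambda$ along the unique $T_C$-path from $r_C$ to $v$. This automatically gives $\lambda(uv) = g(u) + g(v)$ on every tree edge. For a non-tree edge $uv$ of $C$, the two tree paths $r_C \to u$ and $r_C \to v$ together with $uv$ form a cycle of $P$, which is trivial by hypothesis; summing $\lambda$ around this cycle yields $\lambda(uv) + g(u) + g(v) = 0$, so the required identity also holds on non-tree edges. The values of $g$ on vertices outside $V(P)$ can be set arbitrarily, say to $0$, since they play no role in the constraints.

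I do not anticipate a substantial obstacle. The statement is essentially the classical fact that a $\Z_2$-valued function on the edges of a graph sums to zero on every cycle if and only if it is the coboundary of some vertex-function, combined with Lemma~\ref{l:vertex_crosscap_switch}, which is precisely the geometric realization of the elementary coboundary corresponding to a single vertex. The only mild care needed will be to note that the cumulative effect on $\lambda$ of switches at a set $S$ is the sum of the individual effects and hence order-independent, and that Lemma~\ref{l:vertex_crosscap_switch} can be applied iteratively, maintaining the projective HT-drawing property throughout.
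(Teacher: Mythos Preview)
Your proposal is correct and essentially the same as the paper's argument: both pick a spanning forest of $P$, perform vertex-crosscap switches at the vertices whose root-to-vertex path has odd $\lambda$-sum (the paper phrases this as a BFS that switches whenever it meets a tree edge with $\lambda=1$, which produces exactly your set $S$), and then use the triviality of each fundamental cycle to conclude $\lambda'=0$ on the non-tree edges. Your coboundary viewpoint is a slightly more conceptual packaging of the same computation; the only point to make explicit, as you note, is that the effect on $\lambda$ of switching at all vertices of $S$ is order-independent even though the intermediate drawings differ.
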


\else
\paragraph{Planarization.}

As usual, let $(D, \lambda)$ be a projective HT-drawing of $G$ on $S^2$.
Now let us consider a subgraph $P$ of $G$ such
that every cycle in $P$ is trivial. Then $P$ essentially behaves as a planar
subgraph of $G$, which we make more precise by the following lemma.

\begin{lemma}
  \label{l:planarize}
  Let $(D, \lambda)$ be a projective HT-drawing of $G$ on $S^2$
  and let $P$ be a subgraph of $G$ such 
that every cycle in $P$ is trivial. Then there is a set $U \subseteq V(P)$
with the following property. Let $(D_U,\lambda_U)$ be obtained from $(D,\lambda)$ by the vertex-crosscap switches over all vertices of $U$ (in any
order). Then $(D_U,\lambda_U)$ is a projective HT-drawing of $G$ on $S^2$ and $\lambda_U(e) = 0$ for any edge $e$ of $E(P)$. 
\end{lemma}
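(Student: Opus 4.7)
The plan is to reformulate the conclusion as a system of linear equations over $\Z_2$ and to solve it using the hypothesis on $P$. For a candidate set $U \subseteq V(P)$, encode it by indicator variables $x_v := [v \in U]$ for $v \in V(P)$, and set $x_w := 0$ for $w \notin V(P)$. Vertex-crosscap switches performed at distinct vertices commute, since each simply toggles $\lambda$ on a fixed edge set, so the order of the switches in the statement does not matter, and the resulting labelling is given by
\[
\lambda_U(e) = \lambda(e) + x_u + x_v \qquad \text{for every edge } e = uv \in E(G).
\]
The requirement $\lambda_U(e) = 0$ for all $e \in E(P)$ is therefore the linear system $x_u + x_v = \lambda(e)$ over $\Z_2$, one equation per edge of $P$, and the task reduces to finding a solution.

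To solve the system I would fix a spanning forest $F$ of $P$; in every component of $F$ choose a root $r$, set $x_r := 0$, and for any other vertex $v$ of that component set $x_v := \lambda(\pi_{rv})$, where $\pi_{rv}$ is the unique $r$--$v$ path in $F$. All tree edges then satisfy the required equation by construction. For a non-tree edge $e = uv \in E(P) \setminus F$, the concatenation $\pi_{ru} \cdot e \cdot \pi_{vr}$ is a closed walk in $P$; its edge multiset, taken modulo~$2$, lies in the cycle space of $P$ and is hence a symmetric difference of cycles of $P$. Since by hypothesis every such cycle is trivial, the sum of $\lambda$ along this closed walk vanishes, which gives exactly $x_u + \lambda(e) + x_v = 0$, as needed. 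Setting $U := \{v \in V(P) : x_v = 1\}$ then gives the desired set, and iterated application of Lemma~\ref{l:vertex_crosscap_switch} ensures that $(D_U, \lambda_U)$ is still a projective HT-drawing of $G$ on $S^2$.

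The only non-routine ingredient is the well-definedness of the labels $x_v$, equivalently the solvability of the linear system above: this is exactly where the assumption that every cycle of $P$ is trivial enters, via the identification of closed walks in $P$ with elements of the cycle space. I expect no further obstacle; everything else reduces to the commutativity of the switches and to Lemma~\ref{l:vertex_crosscap_switch}.
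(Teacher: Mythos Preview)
Your proposal is correct and is essentially the same argument as the paper's: both pick a spanning forest of $P$, determine the switch set $U$ from the $\lambda$-values along root-to-vertex paths in the forest, and then handle non-tree edges via the triviality of the fundamental cycle they close. The only difference is presentational---you phrase it as solving a linear system over $\Z_2$, while the paper describes an equivalent breadth-first procedure that switches at the far endpoint of each tree edge currently carrying $\lambda=1$; the resulting set $U$ is identical.
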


\begin{proof}
The drawing $(D_U,\lambda_U)$ is a projective HT-drawing by Lemma~\ref{l:vertex_crosscap_switch}.
Let $F$ be a spanning forest of~$P$, the union of spanning trees of each
connected component of~$P$, rooted arbitrarily.  We first make $\lambda(e)=0$
for each edge of~$F$, as follows: do a breadth-first search on each tree
in~$F$; when an edge~$e\in F$ with $\lambda(e)=1$ is encountered, perform a
vertex-crosscap switch on the vertex of~$e$ farther from the root of the tree.
Let $\lambda_U$ be the resulting map, which is zero on the edges of~$F$.  Each
edge~$e$ in~$E(P)\setminus E(F)$ belongs to a cycle~$Z$ such that $Z - e\subseteq F$.
Since $\lambda_U(Z)=\lambda(Z)=0$, we have $\lambda_U(e)=0$ as well.
\ifconf
\qed
\fi
\end{proof}
\fi

\section{Separation Theorem}
\label{s:blackbox}

In this section, we state the \ifconf replacement of Theorem~\ref{t:2.1} \else separation theorem \fi announced in the introduction.
\ifconf
First we introduce some terminology; as we see from the definition below, 
a simple cycle $Z$ such that every edge of $Z$ is even splits $G$ into two parts. 
This fact is analogous to the crucial step in the proof of Theorem~\ref{t:2.1}.
\else

As it was explained in the introduction, a simple cycle $Z$ such that every
edge of $Z$ is even (in a drawing) splits the graph into the outside and the
inside. We first introduce a notation for this splitting.
\fi

\begin{definition}
\label{d:in_out}
Let $G$ be a graph and $D$ be a drawing of $G$ on $S^2$. Let us assume that $Z$
is a cycle of $G$ such that every edge of $Z$ is even and it is drawn as a
simple cycle in $D$. Let $S^+$ and $S^-$ be the two components of $S^2
\setminus D(Z)$. We call a vertex $v \in V(G) \setminus V(Z)$ an \emph{inside}
vertex if it
belongs to $S^+$ and an \emph{outside} vertex otherwise. Given an edge $e = uv
\in E(G) \setminus E(Z)$, we say that $e$ is an \emph{inside} edge if either
$u$ is an
inside vertex or if $u \in V(Z)$ and $D(e)$ points locally to $S^+$ next to
$D(u)$. Analogously we define an \emph{outside} edge.\footnote{It turns out that every
edge $e \in E(G) \setminus E(Z)$ is either an outside edge or an inside edge, because
every edge of $Z$ is even.}
We let $V^+$ and $E^+$ be the sets of the inside vertices and the inside
edges, respectively. Analogously, we define $V^-$ and
$E^-$.
We also define the graphs $G^{+0} := (V^+ \cup V(Z), E^+
\cup
E(Z))$ and $G^{-0} := (V^- \cup V(Z), E^- \cup E(Z))$.
\end{definition}

Now, we may formulate our main technical tool---the separation theorem for
projective HT-drawings.

\begin{theorem}

\label{t:black_box}
Let $(D,\lambda)$ be a projective HT-drawing of a $2$-connected graph $G$ on $S^2$ and $Z$ a cycle
of $G$ that is simple in $D$ and such that every edge of $Z$ is even. Moreover, we assume that every edge $e$ of $Z$ is
trivial, that is, $\lambda(e) = 0$. Then
there is a projective HT-drawing $(D', \lambda')$ of $G$ on $S^2$ 
satisfying the following
properties.

\begin{itemize}
        \item The drawings $D$ and~$D'$ coincide on $Z$;
	\item the cycle $Z$ is \ifconf \else completely \fi free of crossings and all of its
          edges
	  are trivial in $D'$;
	\item $D'(\gin)$ is contained in $\inn{S} \cup D'(Z)$;
	\item $D'(\gout)$ is contained in $\out{S} \cup D'(Z)$; and
	\item either all edges of $\gin$ or all edges of $\gout$ are trivial
	  (according to $\lambda'$);
	  that is, at least one of the drawings $D'(\gin)$ or $D'(\gout)$ is an ordinary HT-drawing on $S^2$.
\end{itemize}
\end{theorem}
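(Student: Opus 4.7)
The plan is to proceed in three stages: first separate $\gin$ and $\gout$ into their respective closed disks $\bar{S}^+$ and $\bar{S}^-$, then establish a dichotomy that at least one of the two sides has every cycle trivial, and finally apply Lemma~\ref{l:planarize} to that side.

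For the separation step, I would develop a projective-HT analogue of Theorem~2.1 of~\cite{PSS07}. As already indicated in the Introduction, the PSS proof on the sphere in fact produces the stronger separation property (not merely crossing-freeness of $Z$); we replay their sequence of finger-moves, but each move is paired with a corresponding vertex-crosscap switch (via Lemma~\ref{l:vertex_crosscap_switch}) so that the projective-HT relation $\crno(e,f)=\lambda(e)\lambda(f)$ is preserved throughout. The hypothesis that every edge of $Z$ is even and trivial is precisely what allows $Z$ itself to remain fixed under these moves. After that, a single application of Lemma~\ref{l:planarize} with $P = Z$ enforces $\lambda_1 \equiv 0$ on $E(Z)$ in the resulting drawing $(D_1, \lambda_1)$.

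For the dichotomy---that at least one of $\gin$ and $\gout$ has every cycle trivial---the argument is by contradiction. Suppose both contain nontrivial cycles $Z^+$ and $Z^-$. Since $\gin\cap\gout = Z$, the cycles $Z^+$ and $Z^-$ can meet only at vertices of $V(Z)$, and Lemma~\ref{l:disjoint_cycles_projective} immediately rules out the case that either cycle is entirely disjoint from $V(Z)$. Decompose $Z^+$ into maximal inside arcs (subpaths of inside edges with endpoints in $V(Z)$) and $Z$-arcs; since every $Z$-arc is trivial and $\lambda(Z^+)=1$, some inside arc $\alpha^+$ with endpoints $u,v\in V(Z)$ is nontrivial. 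Analogously pick a nontrivial outside arc $\alpha^-$ of $Z^-$ with endpoints $u',v'$. If $\{u,v\}$ and $\{u',v'\}$ do not interleave on $Z$, then we close each arc by a $Z$-arc that avoids the other pair, producing vertex-disjoint nontrivial cycles and contradicting Lemma~\ref{l:disjoint_cycles_projective}. In the interleaving case---the real obstacle---we use the $2$-connectedness of $G$ to find, via Menger-type rerouting inside $\gin$ or $\gout$, either a nontrivial inside cycle wholly avoiding $V(Z)$, or a replacement nontrivial arc whose endpoints no longer interleave with $\{u',v'\}$, once more reducing to a contradiction.

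With the dichotomy in hand, say every cycle in $\gin$ is trivial (the other case is symmetric); apply Lemma~\ref{l:planarize} to $P = \gin$ to obtain the final $(D',\lambda')$ with $\lambda'(e)=0$ for every $e\in E(\gin)$. The vertex-crosscap switches involved are localized at vertices of $V(\gin)$ and direct their finger-moves of outside edges into $\bar{S}^-$ and of inside edges into $\bar{S}^+$, so the separation is not disrupted, $Z$ remains crossing-free, and $D'$ still agrees with $D$ on $Z$. All five conclusions then hold, since $\lambda'$ in particular vanishes on $E(Z)\subseteq E(\gin)$. The principal obstacle is clearly the interleaving case of the dichotomy: the naive rerouting via a single $Z$-arc fails when $u,u',v,v'$ alternate cyclically on $Z$, and $2$-connectedness must be exploited substantively while monitoring $\lambda$-parities to ensure the rerouted cycles remain nontrivial.
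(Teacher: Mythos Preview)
Your dichotomy in stage~2 is false, and this is a genuine gap rather than a technicality. It is simply not the case that at least one of $\gin$, $\gout$ must have every cycle trivial. The $K_5$ example in Fig.~\ref{f:K_5_example} (see also Fig.~\ref{f:K_5_arrows}) already exhibits nontrivial proper walks on \emph{both} sides; concatenating such a walk with a trivial $Z$-arc produces a nontrivial closed walk, hence a nontrivial cycle, in each of $\gin$ and $\gout$. More generally, whenever both the inside and the outside arrow graphs are nonempty---which is the typical situation---both sides contain nontrivial cycles. The ``interleaving case'' you flag as the principal obstacle is therefore not merely hard: it is where your argument must fail, because the conclusion you are aiming for does not hold. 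Lemma~\ref{lem:arrows_touch} confirms that every inside arrow meets every outside arrow, so any inside nontrivial arc and any outside nontrivial arc share an endpoint (or interleave), and no amount of Menger-type rerouting will produce vertex-disjoint nontrivial cycles.

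What the paper does instead is substantially more delicate. After the separation step (Theorem~\ref{t:PSS_in_out}), it does \emph{not} claim one side is trivial; rather, it classifies the possible patterns of inside/outside arrows (Lemmas~\ref{lem:arrows_touch}, \ref{lem:disjoint_arrows_interleave}, \ref{l:no_triangle_arrows}) and shows that at least one side must form a fan, square, or split triangle (Proposition~\ref{p:redrawable_exist}). Each of these configurations is then redrawn by hand (Proposition~\ref{p:redrawings}) via carefully chosen vertex-crosscap and vertex-edge switches so that all remaining nontrivial edges become pairwise adjacent, after which $\lambda'$ can be zeroed on that side without violating the HT condition. In particular, Lemma~\ref{l:planarize} alone is not enough for stage~3: the square case, for instance, requires locating a specific cut vertex (Lemma~\ref{l:cut_vertex}). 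Your stages~1 and~3 would need to be replaced by this arrow-graph analysis and the case-by-case redrawing.
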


\ifconf
\else
The assumption that $G$ is $2$-connected is not essential for the proof of
Theorem~\ref{t:black_box}, but it will slightly simplify some of the steps. (For
our application, it will be sufficient to prove the $2$-connected case.)
\fi

In the remainder of this section, we describe the main ingredients of the
proof of Theorem~\ref{t:black_box} and we also derive this theorem from the
ingredients. We will often encounter the setting when $G$,  $(D,\lambda)$ and $Z$
satisfy the assumptions of Theorem~\ref{t:black_box}. Therefore, we say that $G$,
$(D,\lambda)$ and $Z$ satisfy the \emph{separation assumptions} if (1) $G$ is a
$2$-connected graph; (2) $(D,\lambda)$ is a projective HT-drawing of $G$;
(3) $Z$ is a cycle in $G$ drawn as a simple cycle in $D$; (4) every edge of $Z$ is
even in $D$ and trivial.
\ifconf\else
\fi

\paragraph{Arrow graph.}
From now on, let us fix $G$, $(D,\lambda)$ and $Z$ satisfying the separation
assumptions. This also fixes the distinction between the outside and the
inside.

\begin{definition}\label{d:bridge}
  A \emph{bridge}~$B$ of~$G$ (with respect to~$Z$) is a subgraph of~$G$
  that is either an edge not in~$Z$ but with both endpoints in~$Z$ (and its
  endpoints also belong to~$B$), or a connected component of $G-V(Z)$
  together with all edges (and their endpoints in~$Z$) with one endpoint in
  that component and the other endpoint in~$Z$.\ifconf\footnote{This is a
  standard definition; see, e.g., Mohar and Thomassen~\cite[p.~7]{mt-gs-01}.}
  \else \ (This is a standard
  definition; see, e.g., Mohar and Thomassen~\cite[p.~7]{mt-gs-01}.) \fi

We say that $B$ is an \emph{inside bridge} if it is a subgraph of
$G^{+0}$, and an \emph{outside bridge} if it is a subgraph of $G^{-0}$
(every bridge is thus either an inside bridge or an outside bridge).

A walk~$\omega$ in~$G$ is a \emph{proper walk} if no vertex in $\omega$ belongs to
$V(Z)$, except possibly its endpoints, and no edge of $\omega$ belongs to
$E(Z)$. In particular, each proper walk belongs to a single bridge.

\end{definition}

Since we assume that $G$ is $2$-connected, every inside bridge contains at
least two vertices of $Z$. 
The bridges induce partitions of $E(G)\setminus E(Z)$ and of
$V(G)\setminus V(Z)$. \ifconf\else
See Fig.~\ref{f:inside_bridges}.

\begin{figure}
\begin{center}
\includegraphics{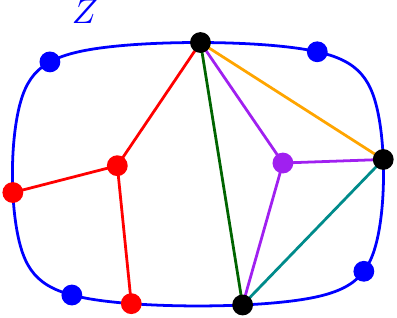}
\caption{An example of a graph with five inside bridges---marked by
different colours. The vertices that belong to several inside bridges are in
black.}
\label{f:inside_bridges}
\end{center}
\end{figure}

\fi

We want to record which pairs of vertices on $V(Z)$ are connected with a
nontrivial and proper walk inside or outside.\ifconf \  \else\footnote{We recall that 
  nontrivial walks are defined in Sect.~\ref{s:ht_drawings}, a bit below
Corollary~\ref{c:projective_drawings}.}
\fi
 For this purpose, we create two new graphs $A^+$ and
$A^-$, possibly with loops but without multiple edges. 
In order to distinguish
these graphs from $G$, we draw their edges with double arrows and
we call these graphs an \emph{inside arrow graph} and an \emph{outside arrow
graph}, respectively. The edges of these graphs are called 
the \emph{inside/outside arrows}. We set $V(A^+) = V(A^-) = V(Z)$.

Now we describe the \emph{arrows}, that is, $E(A^+)$ and $E(A^-)$.
Let $u$ and $v$ be two vertices of $V(Z)$, not necessarily distinct. 
By $W^+_{uv}$ we denote the set of all proper nontrivial walks in $\gin$ with endpoints
$u$ and $v$.
We have an \emph{inside arrow} connecting $u$ and $v$ in $E(A^+)$ if and only if $W_{uv}^+$ is nonempty.
In order to distinguish the edges of $G$ from the arrows, we denote an arrow by
$\ar uv = \ar vu$. An arrow which is a loop at a vertex $v$ is denoted by $\ar vv$. (This
convention will allow us to work with arrows $\ar uv$ without a
distinction whether $u = v$ or $u \neq v$.)
Analogously, we define the set $W^-_{uv}$ and the \emph{outside arrows}.
\ifconf
Below, we provide an example of an unusual HT-drawing of $K_5$ on $\RP^2$, the
corresponding projective HT-drawing on $S^2$ and the corresponding arrow
graphs.

\begin{center}
\includegraphics{K_5_example_arrows_conference}
\end{center}

Given an inside arrow $\ar uv$ and an inside bridge $B$, we say that
$B$ \emph{induces} $\ar uv$ if there is a walk in $B$ which belongs to $W^+_{uv}$.
An inside bridge $B$ is \emph{nontrivial} if it induces at least one arrow. 
Given two inside arrows $\ar uv$ and $\ar xy$,
we say that $\ar uv$ and $\ar xy$ \emph{are induced by different bridges}
if there are two different inside bridges $B$ and $B'$ such that $B$ induces
$\ar uv$ and $B'$ induces $\ar xy$. As usual, we define analogous notions for
the outside as well.

\else

See Fig.~\ref{f:K_5_arrows} for the arrow graph(s) of the drawing of $K_5$ depicted in
Fig.~\ref{f:K_5_example}, left.

It follows from the definition of the inside bridges that any walk $\omega \in
W^+_{uv}$ stays in one inside bridge. Given an inside bridge $B$, we let
$W^+_{uv,B}$ be the set of all walks $w \in W^+_{uv}$ which belong to $B$.
In particular, $W^+_{uv}$ decomposes into the disjoint union of the sets
$W^+_{uv,B_1}, \dots, W^+_{uv,B_k}$ where $B_1, \dots, B_k$ are all inside
bridges.  
Given an inside arrow $\ar uv$ and an inside bridge $B$, we say that
$B$ \emph{induces} $\ar uv$ if
$W^+_{uv,B}$ is nonempty. 
An inside bridge $B$ is \emph{nontrivial} if it induces at least one arrow. 
Given two inside arrows $\ar uv$ and $\ar xy$
we say that $\ar uv$ and $\ar xy$ \emph{are induced by different bridges}
if there are two different inside bridges $B$ and $B'$ such that $B$ induces
$\ar uv$ and $B'$ induces $\ar xy$. As usual, we define analogous notions for
the outside as well. Note that it may happen that there is an
inside bridge inducing both $\ar uv$ and $\ar xy$ even if $\ar uv$ and $\ar
xy$ are induced by different bridges.

\begin{figure}
\begin{center}
\includegraphics{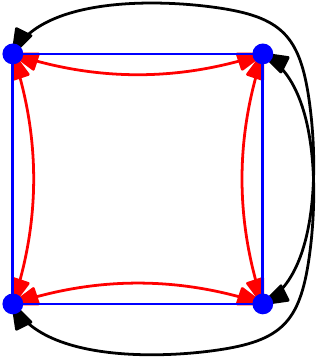}
\caption{The inside and the outside arrows corresponding to the drawing of $K_5$ from
Fig.~\ref{f:K_5_example}, left.}
\label{f:K_5_arrows}
\end{center}
\end{figure}

\fi

\paragraph{Possible configurations of arrows.}

\ifconf
Now, we utilize the arrow graph to
show that certain configurations of arrows are not possible.

\begin{lemma}\label{l:impossible}
  \begin{enumerate}[(a)]
\item
 Every inside arrow shares a vertex with every outside arrow.
\item
Let $\ar{a}{b}$ and $\ar{x}{y}$ be two arrows induced by different inside
bridges of $\gin$. If the two arrows do not share an endpoint, their
endpoints have to interleave along $Z$. 
\item
There are no three vertices $a$, $b$, $c$ on $Z$, an inside bridge $B^+$,
and an outside bridge $B^-$ such that $B^+$ induces
the arrows $\ar ab$ and $\ar ac$ (and no other arrows) and $B^-$ induces 
the arrows $\ar ab$ and $\ar bc$ (and no other arrows).
\end{enumerate}
\end{lemma}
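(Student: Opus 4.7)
I would prove all three parts by contradiction, producing two vertex-disjoint nontrivial cycles in $G$ and invoking Lemma~\ref{l:disjoint_cycles_projective}. The common elementary tool throughout is the $\Z_2$-chain decomposition of a walk: any nontrivial proper walk from $u$ to $v$ in a bridge has symmetric-sum support equal to a simple $u$-$v$ path together with edge-disjoint simple cycles, so at least one of these subobjects is nontrivial and is confined entirely to the bridge of the walk.

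For part (a), assuming the inside arrow $\ar uv$ and outside arrow $\ar xy$ are vertex-disjoint, I extract nontrivial simple paths or cycles on each side. Since the interior vertices lie in the disjoint sets $V^+$ and $V^-$, the only possible sharing is on $V(Z)$. When $\{u,v\}$ and $\{x,y\}$ do not interleave along $Z$ I close the extracted paths by disjoint arcs of $Z$. In the interleaving case I first exploit any extracted nontrivial simple cycle that avoids one of its bridge's two $Z$-endpoints; when only nontrivial paths are available on both sides I observe that both bridges must then have all their cycles trivial, and apply Lemma~\ref{l:planarize} to normalize $\lambda$ on them so that enough rerouting freedom appears to recover disjoint cycles. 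For part (b) the argument is identical but easier: the interior disjointness is automatic from $B_1\neq B_2$, and a non-interleaving position of the endpoint sets already yields disjoint $Z$-arcs, so the arrows must interleave on pain of violating Lemma~\ref{l:disjoint_cycles_projective}.

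For part (c), the key preliminary observation is that every cycle in $B^+$ must be trivial, for otherwise splicing a nontrivial cycle onto any walk $b$-$c$ in $B^+$ produces a nontrivial $b$-$c$ walk and hence the forbidden inside arrow $\ar bc$; symmetrically every cycle in $B^-$ is trivial. I then apply Lemma~\ref{l:planarize} to simultaneously make all edges of $B^+$ and of $B^-$ trivial (cycle triviality is preserved under the vertex-crosscap switches involved, so this normalization is consistent on both bridges). Using the hypothesized nontrivial walks $\omega^+_{ac}$ in $B^+$ and $\omega^-_{bc}$ in $B^-$, together with a trivial $a$-$c$ walk $\tau^-$ in $B^-$ and a trivial $b$-$c$ walk $\tau^+$ in $B^+$ (which exist by the just-derived potentials $\phi^\pm$), I form the closed walks $\omega^+_{ac}\cup\tau^-$ and $\omega^-_{bc}\cup\tau^+$, both of which are nontrivial as subgraphs (a property invariant of the drawing). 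Extracting simple nontrivial cycles and rerouting their forced common visit at $c$ through a short arc of $Z$, while using the $2$-connectedness of $G$ to keep each cycle simple, yields two vertex-disjoint nontrivial cycles, contradicting Lemma~\ref{l:disjoint_cycles_projective}.

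The main obstacle throughout the proof is the forced vertex sharing on $V(Z)$: the naive constructions always leave the two candidate cycles meeting somewhere on the cycle $Z$. Disentangling them requires either disjoint $Z$-arcs (available only in non-interleaving configurations), splitting a cycle that passes through two contested vertices into its two subpaths and choosing the nontrivial one, or using Lemma~\ref{l:planarize} to replace the bridges by homologically trivial subgraphs so that we may reroute through them. Part (c) is the most delicate instance, since all three vertices $a,b,c$ unavoidably appear on the obvious candidate cycles, and the rerouting must be performed so as to preserve both simplicity and nontriviality simultaneously.
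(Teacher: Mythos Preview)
Your approach differs fundamentally from the paper's: you try to reduce everything to Lemma~\ref{l:disjoint_cycles_projective} (no two vertex-disjoint nontrivial cycles), whereas the paper passes to the HT-drawing $D_\otimes$ on $\RP^2$ and uses the intersection form (Lemma~\ref{lem:2cycles}): two homologically nontrivial closed walks must cross an odd number of times. For part~(b) your idea works cleanly---the walks lie in different bridges, so once you close them up with disjoint arcs of $Z$ (available precisely in the non-interleaving case) you get vertex-disjoint nontrivial cycles.

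For parts~(a) and~(c), however, Lemma~\ref{l:disjoint_cycles_projective} is genuinely too weak, and the gaps in your write-up are not cosmetic. Consider the triangle $Z=abc$ with an inside bridge $B^+$ consisting of a single vertex $p$ joined to $a,b,c$ with $\lambda(pa)=1$, $\lambda(pb)=\lambda(pc)=0$, and an outside bridge $B^-$ with a vertex $q$ joined to $a,b,c$ with $\lambda(qb)=1$, $\lambda(qa)=\lambda(qc)=0$. Then $B^+$ induces exactly $\ar ab,\ar ac$ and $B^-$ induces exactly $\ar ab,\ar bc$---the forbidden configuration of~(c)---yet one checks directly that \emph{no} two vertex-disjoint nontrivial cycles exist in this $(G,\lambda)$. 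So any argument that only invokes Lemma~\ref{l:disjoint_cycles_projective} and the combinatorics of $(G,\lambda,Z)$ cannot reach a contradiction here. An analogous four-cycle example kills the interleaving case of~(a). Your proposed fix via Lemma~\ref{l:planarize} does not rescue this: the union $B^+\cup B^-$ contains nontrivial cycles (e.g.\ a nontrivial $a$--$c$ walk in $B^+$ followed by a trivial $c$--$a$ walk in $B^-$), so the lemma does not apply to it; and if you planarize the bridges separately using switches only at off-$Z$ vertices, you cannot make all bridge edges trivial without destroying the very arrows you assumed. Finally, your two candidate closed walks in~(c) share interior vertices of \emph{both} bridges, not just $c$, so ``rerouting through a short arc of $Z$'' does not disentangle them.

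What the paper's argument buys is exactly the missing ingredient: it uses the drawing $D_\otimes$ and counts actual crossings of edges, not just the abstract $\lambda$-values. In~(a) one pushes the $Z$-portions of the two closed walks to opposite sides of $D_\otimes(Z)$ and observes that every pair of edges (independent or adjacent via $Z$) crosses evenly, contradicting $\Omega_{\RP^2}=1$. In~(c) one first shows that particular adjacent edge pairs $(e_a^+,e_a^-)$ and $(e_b^+,e_b^-)$ must cross oddly in $D_\otimes$, and then uses this to force an even intersection number between two nontrivial closed walks---again contradicting the intersection form. This crossing-parity information is strictly finer than what Lemma~\ref{l:disjoint_cycles_projective} records.
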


For proof, see Lemmas 12, 13, and~14 in the full version of the paper.

Lemma~\ref{l:impossible} is, of course, also valid if we swap the inside and
the outside.
Schematically, the forbidden configurations from Lemma~\ref{l:impossible}
are drawn in the picture below. The cyclic order in $(a)$ may be
 arbitrary whereas it is important in $(b)$ that the arrows there do not
 interleave. Different dashing of lines in $(b)$
correspond to arrows induced by different inside bridges. The arrows of the same colour in $(c)$ are induced by the same bridge.
\begin{center}
\includegraphics{forbidden_arrows_conference}
\end{center}

Now we describe important configurations that may occur.

\else
We plan to utilize the arrow graph in the following way. On one hand, we will
show that certain configurations of arrows are not possible; see
Fig.~\ref{f:forbidden_arrows}. On the other hand, we will show that, since the
arrow graph does not contain any of the forbidden configurations, it must
contain one of the configurations in Fig.~\ref{f:redrawable} inside or
outside. (These configurations are precisely defined in
Definition~\ref{d:redrawable}.)
We will also show that the configurations in Fig.~\ref{f:redrawable} are
\emph{redrawable}, that is, they may be appropriately redrawn without the
crosscap. The precise statement for redrawings is given by
Proposition~\ref{p:redrawings} below.

\begin{figure}
\begin{center}
\includegraphics{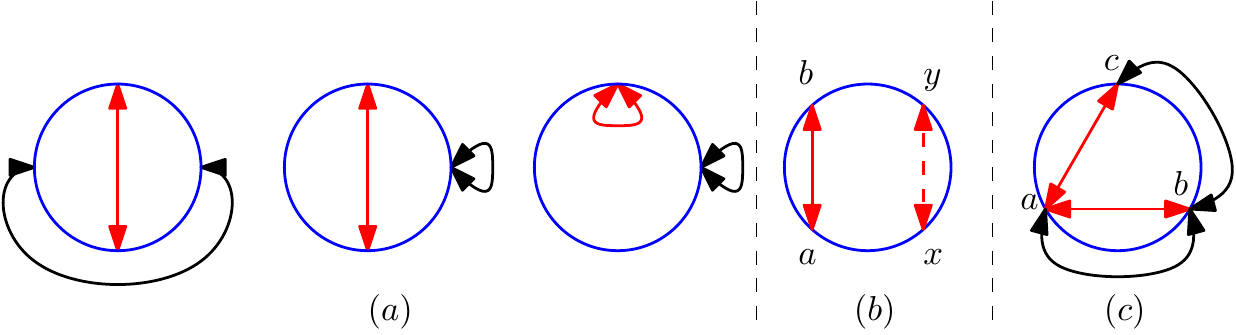}
\caption{Forbidden configurations of arrows. The cyclic order in $(a)$ may be
  arbitrary whereas it is important in $(b)$ that the arrows there do not
  interleave. Different dashing of lines in $(b)$
correspond to arrows induced by different inside bridges. The arrows of the same colour in $(c)$ are induced by the same bridge.}
\label{f:forbidden_arrows}
\end{center}
\end{figure}

More concretely, we prove the following three lemmas forbidding the
configurations of arrows from Fig.~\ref{f:forbidden_arrows}. We emphasize
that in all three lemmas we assume that the notions used there correspond to a fixed $G$,
$(D,\lambda)$ and $Z$ satisfying the separation assumptions.

\begin{lemma}\label{lem:arrows_touch}
Every inside arrow shares a vertex with every outside arrow.
\end{lemma}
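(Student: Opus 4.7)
The plan is to derive a contradiction by computing a $\Z_2$-valued crossing count in two incompatible ways. Suppose for contradiction that $\{u,v\} \cap \{x,y\} = \emptyset$, and fix proper nontrivial walks $\omega^+ \in W^+_{uv}$ and $\omega^- \in W^-_{xy}$. Let $c^+$ (resp.~$c^-$) be the set of edges appearing an odd number of times in $\omega^+$ (resp.~$\omega^-$); as $\Z_2$ $1$-chains these have boundaries $u+v$ and $x+y$ modulo $2$, and they satisfy $\lambda(c^\pm) = \lambda(\omega^\pm) = 1$. Since both walks are proper, $V(c^+) \subseteq V^+ \cup \{u,v\}$ and $V(c^-) \subseteq V^- \cup \{x,y\}$, which are disjoint under our assumption; in particular every pair $(e,f)$ with $e \in c^+$ and $f \in c^-$ consists of independent edges.

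Next, I would close each chain using arcs of $Z$. Pick a $u$-$v$ sub-arc $\zeta^+ \subseteq Z$ (taken empty if $u=v$) and an $x$-$y$ sub-arc $\eta^- \subseteq Z$ (empty if $x=y$), and define $\tilde c^\pm := c^\pm \cup (\text{the corresponding arc})$; this is a disjoint union since $c^\pm \subseteq E(G)\setminus E(Z)$. Each $\tilde c^\pm$ is Eulerian, i.e.\ a closed $1$-chain in the cellular $\Z_2$-chain complex of the drawing $D$ on $S^2$. The quantity I want to pin down is
\[
B(\tilde c^+, \tilde c^-) := \sum_{e \in \tilde c^+,\, f \in \tilde c^-} \crno_D(e,f) \pmod{2}.
\]

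Algebraically, bilinearity splits $B(\tilde c^+, \tilde c^-)$ into four summands, three of which involve an edge of $Z$. Because every edge of $Z$ is even in $D$, we have $\crno_D(g,h) = 0$ whenever $g \in E(Z)$ and $h \neq g$, so those three summands vanish. The remaining term $B(c^+,c^-)$ is a sum of $\crno_D(e,f)$ over independent pairs, to which the projective HT identity $\crno_D(e,f) = \lambda(e)\lambda(f)$ applies, yielding $B(c^+,c^-) = \lambda(c^+)\lambda(c^-) = 1$. Thus $B(\tilde c^+, \tilde c^-) = 1$.

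Topologically, each Eulerian subgraph $\tilde c^\pm$ decomposes into edge-disjoint simple cycles, and any two simple cycles on the sphere cross an even number of times by the Jordan curve theorem; summing over pairs of components from the two decompositions gives $B(\tilde c^+, \tilde c^-) \equiv 0 \pmod{2}$, contradicting the algebraic value $1$ and proving the lemma. The main technical point I anticipate is handling edges possibly shared between $\tilde c^+$ and $\tilde c^-$ on $Z$: such shared edges contribute $\crno_D(e,e) = 0$ to the algebraic computation (edges are drawn without self-intersections), and they can be separated by a small generic perturbation for the topological argument, so both sides of the identity remain valid.
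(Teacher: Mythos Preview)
Your proof is correct and follows the same strategy as the paper: close each nontrivial walk into a $1$-cycle using an arc of~$Z$, then obtain a contradiction by computing the total crossing parity in two ways (once via the Hanani--Tutte condition, once via the intersection form on the ambient surface). The only difference is that you stay on $S^2$ and use the projective identity $\crno_D(e,f)=\lambda(e)\lambda(f)$ to get $B=1$, whereas the paper first passes to the HT-drawing $D_\otimes$ on $\RP^2$ and uses that nontrivial $1$-cycles there intersect oddly; the paper is also more explicit about the perturbation near $Z$ (pushing the two cycles to opposite sides of $D(Z)$), which is exactly the move that justifies your final paragraph.
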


\begin{lemma}\label{lem:disjoint_arrows_interleave}
Let $\ar{a}{b}$ and $\ar{x}{y}$ be two arrows induced by different inside
bridges of $\gin$. If the two arrows do not share an endpoint, their
endpoints have to interleave along $Z$. 
\end{lemma}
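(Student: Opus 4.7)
I would proceed by contradiction: assume the endpoints of $\ar{a}{b}$ and $\ar{x}{y}$ do not interleave along $Z$. Since the two arrows share no endpoint, $a, b, x, y$ are four distinct vertices of $V(Z)$. The non-interleaving hypothesis lets me choose two vertex-disjoint arcs $\alpha, \beta \subseteq Z$ with $\alpha$ joining $a$ to $b$ and $\beta$ joining $x$ to $y$; these arcs sit on ``opposite sides'' of $Z$ in the cyclic order of the four points.

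Call $B_1$ and $B_2$ the two inside bridges inducing $\ar{a}{b}$ and $\ar{x}{y}$, respectively, and pick a proper nontrivial walk $\omega_1$ from $a$ to $b$ in $B_1$ and a proper nontrivial walk $\omega_2$ from $x$ to $y$ in $B_2$. Concatenating $\omega_i$ with a traversal of the corresponding arc yields a closed walk $C_i$; since every edge of $Z$ is trivial,
\[
 \lambda(C_1) \;=\; \lambda(\omega_1) + \lambda(\alpha) \;=\; 1 + 0 \;=\; 1,
\]
and similarly $\lambda(C_2) = 1$.

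I would then extract from each $C_i$ a nontrivial graph-theoretic cycle $Z^{(i)}$ whose vertex set I can control. Viewing $C_i$ as a $\Z_2$-chain (each edge counted with its multiplicity in the walk, modulo $2$) gives a $1$-cycle whose total $\lambda$-value equals $1$; the standard decomposition into edge-disjoint graph-theoretic cycles must then contain at least one summand $Z^{(i)}$ with $\lambda(Z^{(i)}) = 1$. Because every edge of $Z^{(i)}$ is an edge of $C_i$, the non-$V(Z)$ vertices of $Z^{(1)}$ come from interior vertices of $\omega_1$ and so lie in $V(B_1) \setminus V(Z)$, while any vertex of $Z^{(1)}$ on $V(Z)$ lies in $V(\alpha)$; the symmetric statement holds for $Z^{(2)}$ with $V(\beta)$ and $V(B_2)$.

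Consequently $V(Z^{(1)}) \cap V(Z^{(2)}) = \emptyset$: the non-$V(Z)$ parts lie in different bridges and the $V(Z)$-parts lie in the disjoint vertex sets $V(\alpha)$ and $V(\beta)$. This yields two vertex-disjoint nontrivial cycles, contradicting Lemma~\ref{l:disjoint_cycles_projective}. The only delicate step is the chain-decomposition above, which I expect to handle by the standard argument that a positive-length closed walk splits at a repeated vertex into two shorter closed walks whose $\lambda$-values sum to that of the original, so iteration on whichever half remains nontrivial eventually produces the desired cycle.
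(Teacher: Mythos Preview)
Your argument is correct. The contradiction you reach via Lemma~\ref{l:disjoint_cycles_projective} is sound: the closed walk $C_i$ has $\lambda(C_i)=1$, its $\Z_2$-support decomposes into edge-disjoint cycles whose $\lambda$-values sum to~$1$, so some cycle $Z^{(i)}$ in the decomposition is nontrivial; and the vertex sets of $Z^{(1)}$ and $Z^{(2)}$ are contained in $V(\alpha)\cup\bigl(V(B_1)\setminus V(Z)\bigr)$ and $V(\beta)\cup\bigl(V(B_2)\setminus V(Z)\bigr)$ respectively, which are disjoint because the arcs $\alpha,\beta$ are vertex-disjoint and distinct bridges partition $V(G)\setminus V(Z)$.

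The paper takes a different route. It passes to the HT-drawing $D_\otimes$ on $\RP^2$ (Lemma~\ref{l:S2_to_RP2}), forms the same closed walks $\kappa^+_{ab}$ and $\kappa^+_{xy}$, and observes that they represent nontrivial homology classes in $\RP^2$; the intersection form (Lemma~\ref{lem:2cycles}) then forces an odd number of crossings, which is impossible for vertex-disjoint walks in an HT-drawing. In fact the paper packages this as the slightly stronger Lemma~\ref{l:noninterleaving_walks_intersect}, which asserts that the two \emph{walks} themselves must share a vertex even when they lie in the same bridge; this stronger version is reused later (e.g., in the proof of Lemma~\ref{l:cut_vertex}). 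Your approach has the advantage of staying entirely on $S^2$ and avoiding the $\RP^2$ machinery, reducing instead to the already-proved Lemma~\ref{l:disjoint_cycles_projective}; its cost is the extra cycle-extraction step and the need for the ``different bridges'' hypothesis, so it does not immediately yield the stronger walk-level statement.
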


\begin{lemma}
\label{l:no_triangle_arrows}
There are no three vertices $a$, $b$, $c$ on $Z$, an inside bridge $B^+$,
and an outside bridge $B^-$ such that $B^+$ induces
the arrows $\ar ab$ and $\ar ac$ (and no other arrows) and $B^-$ induces 
the arrows $\ar ab$ and $\ar bc$ (and no other arrows).
\end{lemma}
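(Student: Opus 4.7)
The plan is to argue by contradiction using Lemma~\ref{l:disjoint_cycles_projective}: assuming the forbidden configuration exists, we will construct two vertex-disjoint nontrivial cycles in~$G$. To set up, fix proper nontrivial walks $\omega^+_{ab},\omega^+_{ac}$ in~$B^+$ and $\omega^-_{ab},\omega^-_{bc}$ in~$B^-$, and assume without loss of generality that $a,b,c$ appear in this cyclic order on~$Z$, partitioning it into arcs $Z_{ab}$, $Z_{bc}$, $Z_{ca}$.

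The natural nontrivial cycles come from closed walks of the form $\omega^{\pm}_{uv}\cdot Z_{vu}^{*}$, where $Z_{vu}^{*}$ is one of the two $Z$-arcs from~$v$ to~$u$: each such closed walk has $\lambda$-value~$1$, its odd-multiplicity edges form an Eulerian subgraph, and that subgraph decomposes into cycles whose $\lambda$-values sum to~$1$, so at least one extracted cycle is nontrivial. Let $C^+\subseteq B^+\cup Z$ and $C^-\subseteq B^-\cup Z$ be cycles extracted this way. Their interiors are automatically disjoint, since $B^+$ and $B^-$ share no vertex outside $V(Z)$, and the choice of $Z$-arc in each construction gives us enough freedom to avoid any single prescribed $Z$-vertex. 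Unfortunately, every single-hop $C^+$ visits a pair from $\{\{a,b\},\{a,c\}\}$ and every single-hop $C^-$ visits a pair from $\{\{a,b\},\{b,c\}\}$, so every pairing shares at least one element of $\{a,b,c\}$.

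Eliminating this forced overlap is the main obstacle. My preferred route is via a vertex-crosscap switch (Lemma~\ref{l:vertex_crosscap_switch}) at the offending common vertex, say at~$a$: it flips the $\lambda$-value of every proper walk having~$a$ as one endpoint and leaves all others unchanged. After this switch, the arrows $\ar ab$, $\ar ac$ inside and $\ar ab$ outside survive if and only if the corresponding bridge originally contained a \emph{trivial} proper walk between the same pair, whereas the arrow $\ar bc$ outside is untouched. A case analysis on which of these arrows persist should then either (i) directly produce a configuration forbidden by Lemma~\ref{lem:arrows_touch} or Lemma~\ref{lem:disjoint_arrows_interleave}, or (ii) leave one of the bridges with all proper walks trivial, at which point Lemma~\ref{l:planarize} flattens~$\lambda$ on that bridge and two vertex-disjoint nontrivial cycles — one entirely inside and one entirely outside — can be constructed to reach the contradiction.

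Carrying out this case analysis rigorously while keeping track of the possible presence of additional inside and outside bridges of~$G$, which may induce their own arrows and may themselves be affected by the switch, is where I expect most of the technical work to lie.
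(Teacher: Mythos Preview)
Your approach has a genuine gap: the contradiction you are aiming for via Lemma~\ref{l:disjoint_cycles_projective} cannot be reached from the data of $B^+$, $B^-$ and~$Z$ alone, no matter how many vertex-crosscap switches you perform.

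Here is why. Since $B^+$ induces exactly $\ar ab$ and $\ar ac$ and no loop, Proposition~\ref{prop:arrows_labels} forces $|L(a)|=|L(b)|=|L(c)|=1$ with $L(b)=L(c)\neq L(a)$; hence by Lemma~\ref{l:single_labels} \emph{every} proper $bc$-walk in $B^+$ is trivial and \emph{every} proper $ab$- or $ac$-walk is nontrivial. It follows that any nontrivial cycle contained in $B^+\cup Z$ must pass through~$a$. Symmetrically, any nontrivial cycle in $B^-\cup Z$ must pass through~$b$. If you now insist that a candidate $C^+\subseteq B^+\cup Z$ avoids~$b$ and a candidate $C^-\subseteq B^-\cup Z$ avoids~$a$, the same labelling argument forces $C^+$ to visit~$c$ (its only surviving $B^+$-segments are $ac$-paths) and $C^-$ to visit~$c$ (its only surviving $B^-$-segments are $bc$-paths), so they still meet. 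A vertex-crosscap switch at $a$, $b$, or $c$ simply permutes which vertex plays this obstructing role: for instance, after the switch at~$a$ all proper walks in $B^+$ become trivial (so $B^+\cup Z$ produces no nontrivial cycle at all), while $B^-$ now induces $\ar ac$ and $\ar bc$ and every nontrivial cycle in $B^-\cup Z$ passes through~$c$. In each case you are left with the same $3$-vertex obstruction, and your case~(ii) never materializes. Your case~(i) fails for the same structural reason: the post-switch arrow pattern on $\{a,b,c\}$ never violates Lemma~\ref{lem:arrows_touch} or Lemma~\ref{lem:disjoint_arrows_interleave}.

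The paper's proof avoids Lemma~\ref{l:disjoint_cycles_projective} entirely and instead argues via the intersection form on $\RP^2$ through the drawing $D_\otimes$ of Lemma~\ref{l:S2_to_RP2}. Fixing edges $e_x^{\pm}$ of $B^{\pm}$ incident to each $x\in\{a,b,c\}$, one forms closed walks $\kappa^{\pm}_{xy}$ by concatenating a nontrivial proper $xy$-walk in $B^{\pm}$ with the $Z$-arc avoiding the third vertex. Applying Lemma~\ref{lem:2cycles} to the pair $(\kappa^-_{ab},\kappa^+_{ac})$ --- which touch only at~$a$ --- and using that every other pair of participating edges is either independent or lies on~$Z$, one deduces $\crno_{D_\otimes}(e_a^+,e_a^-)=1$; the pair $(\kappa^+_{ab},\kappa^-_{bc})$ gives $\crno_{D_\otimes}(e_b^+,e_b^-)=1$ analogously. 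A third application to $\kappa^+_{ab}$ and the closed walk $\mu^-_{ab}$ (built from $\omega^-_{ab}$ and the \emph{other} $Z$-arc, through~$c$) then yields the parity contradiction, since exactly the two pairs $(e_a^+,e_a^-)$ and $(e_b^+,e_b^-)$ contribute oddly. The point is that the argument is about parities of crossings between adjacent edges in $D_\otimes$, not about producing disjoint cycles.
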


We prove these three lemmas in
Sect.~\ref{s:forbidden_arrows}.
By symmetry, 
Lemmas~\ref{lem:disjoint_arrows_interleave} and~\ref{l:no_triangle_arrows} are
also valid if we swap the inside and the outside (Lemma~\ref{lem:arrows_touch} as well,
but here already the statement of the lemma is symmetric).
\bigskip

Now we describe the redrawable configurations.
\fi
\begin{definition}
  \label{d:redrawable}
  We say that $G$ forms
  \begin{enumerate}[$(a)$]
\item
  an \emph{inside fan} if there is a vertex
  common to all inside arrows. (The arrows may come from various inside
  bridges.)
\item
  an \emph{inside square} if it contains four vertices $a$, $b$, $c$ and $d$ ordered in this cyclic
  order along $Z$ and the inside arrows are precisely $\ar ab$, $\ar bc$, $\ar cd$ and $\ar
  ad$. In addition, we require that the inside graph $\gin$ has only one nontrivial inside bridge.
\item 
 an \emph{inside split triangle} if there exist three vertices $a$, $b$ and
  $c$ such that the arrows of~$G$ are $\ar ab$, $\ar ac$ and $\ar
  bc$. In addition, we require
  that every nontrivial inside bridge induces either the two arrows $\ar ab$ and $\ar
  ac$, or just a single arrow.
\end{enumerate} \ifconf \else
See Fig.~\ref{f:redrawable}. \fi We have analogous definitions for an
\emph{outside fan}, \emph{outside square} and
\emph{outside split triangle}.
\end{definition}

More precisely the notions in Definition~\ref{d:redrawable} depend on $G$,
$(D,\lambda)$ and $Z$ satisfying the separation assumptions.

\ifconf
The picture below shows schematic drawings of the configurations of arrows from Definition~\ref{d:redrawable}.
  Different dashing of lines correspond to different inside bridges. The
  loop in the right drawing in $(a)$ is an inside loop (drawn outside due to lack of space).
  The drawing in $(c)$ is only one instance of an inside split triangle.
 
\begin{center}
\includegraphics{redrawable_conference}
\end{center}
\else

\begin{figure}
\begin{center}
\includegraphics{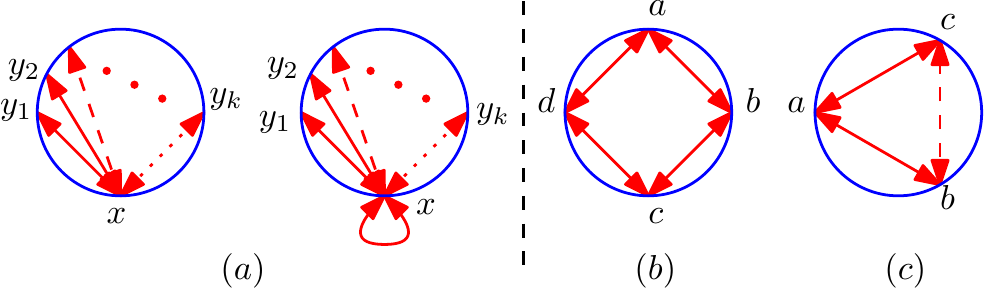}
\caption{Schematic drawings of the redrawable configurations of arrows from Definition~\ref{d:redrawable}.
  Different dashing of lines correspond to different inside bridges. The
  loop in the right drawing $(a)$ is an inside loop (drawn outside due to lack of space).
  The drawing $(c)$ is only one instance of an inside split triangle.
} 
\label{f:redrawable}
\end{center}
\end{figure}

\fi

A relatively direct case analysis, using \ifconf Lemma~\ref{l:impossible}, \else
Lemmas~\ref{lem:arrows_touch},~\ref{lem:disjoint_arrows_interleave}
and~\ref{l:no_triangle_arrows}, \fi reveals the following fact.

\begin{proposition}
\label{p:redrawable_exist}
Let $(D,\lambda)$ be a projective HT-drawing on $S^2$ of a graph $G$ and let $Z$ be 
a cycle in $G$ satisfying the separation assumptions.
Then $G$ forms an (inside or outside) fan, square, or split triangle.

\end{proposition}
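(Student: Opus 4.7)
The plan is to proceed by case analysis on the inside arrow graph $A^+$ and the outside arrow graph $A^-$, repeatedly invoking the three forbidden configurations in Lemma~\ref{l:impossible}. If all arrows on one side, say inside, pass through a single common vertex (which covers vacuously the case when $A^+$ is empty), then $G$ forms an inside fan and we are done; the analogous conclusion holds on the outside. So I may assume that neither side is a fan, which means both $A^+$ and $A^-$ contain a pair of vertex-disjoint arrows.

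Fix a disjoint inside pair $\ar{a}{b}, \ar{c}{d}$. By Lemma~\ref{l:impossible}(a), every outside arrow must meet both $\{a,b\}$ and $\{c,d\}$, so every outside arrow belongs to $\{\ar{ac}, \ar{ad}, \ar{bc}, \ar{bd}\}$. Applying the symmetric argument to any disjoint outside pair forces both disjoint pairs to be supported on the same four vertices $\{a,b,c,d\}$ of $Z$, but to realize different pairings of this set. Of the three pairings of four cyclically ordered vertices, exactly one is interleaving, so I split into sub-cases: either one of the two disjoint pairs interleaves along $Z$ and the other does not (the non-interleaving one must then come from a single bridge by Lemma~\ref{l:impossible}(b)), or neither pair interleaves (each is induced by a single bridge on its respective side).

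Within each sub-case I would read off which further arrows can occur on each side: an additional arrow must share a vertex with every arrow on the opposite side (Lemma~\ref{l:impossible}(a)), it must interleave with every disjoint same-side arrow that comes from a different bridge (Lemma~\ref{l:impossible}(b)), and it cannot complete any forbidden star/path bridge pattern (Lemma~\ref{l:impossible}(c)). Eliminating configurations that violate (b) reduces each sub-case to one of two possibilities on at least one side: either all four "side" arrows of the cyclic order $a,b,c,d$ are present, yielding a square, or one of the four vertices becomes effectively unused and the surviving arrows form a triangle on three vertices, yielding a split triangle. The remaining case, where a side has three arrows forming a triangle $\ar{ab},\ar{bc},\ar{ac}$ without any vertex-disjoint pair, is handled directly: Lemma~\ref{l:impossible}(c) constrains which inside bridges can induce which subsets of these three arrows, giving exactly the inducing pattern required by the split-triangle definition.

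The main obstacle I expect is verifying the extra, non-arrow-graph conditions built into the definitions of square and split triangle, namely that only a single nontrivial inside bridge exists in the square case and that every nontrivial inside bridge induces at most a prescribed pair of arrows in the split-triangle case. These cannot be read off $A^+$ and $A^-$ alone and demand the systematic use of Lemma~\ref{l:impossible}(c) applied to every candidate bridge; once this bookkeeping is carried out, the remaining case analysis is routine but lengthy.
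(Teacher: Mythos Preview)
Your overall strategy matches the paper's: a case analysis driven by the three forbidden configurations. However, there is a genuine gap. You rely exclusively on Lemma~\ref{l:impossible}(a)--(c), but the paper's proof needs one more structural fact that you never invoke: Lemma~\ref{l:bipartite}, which says that the arrows induced by a single bridge (with no loops) form a complete bipartite graph on $V(B)\cap V(Z)$.

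This lemma is what actually does the work in the two places where your sketch hand-waves. First, in the square case: when the non-interleaving inside pair $\ar{ab},\ar{cd}$ (cyclic order $a,b,c,d$) comes from a single bridge~$B$, you need to know that $B$ also induces $\ar{ad}$ and $\ar{bc}$, so that the inside arrows are \emph{precisely} the four sides required by Definition~\ref{d:redrawable}(b). Nothing in (a)--(c) forces those two extra arrows to exist; the bipartite structure of $B$'s arrows does. Without it, the configuration ``inside arrows exactly $\{\ar{ab},\ar{cd}\}$, outside arrows exactly $\{\ar{ac},\ar{bd}\}$'' survives your analysis, and it is none of fan, square, or split triangle.

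Second, in the triangle case: the split-triangle definition requires that every nontrivial inside bridge induce either the single pair $\{\ar{ab},\ar{ac}\}$ or just one arrow. For this you must first know that no single bridge induces all three of $\ar{ab},\ar{ac},\ar{bc}$. That is exactly what the bipartite lemma gives (a triangle is not bipartite). Lemma~\ref{l:impossible}(c) is only a two-bridge obstruction on a specific star/path pattern; it says nothing about what one bridge alone can induce, so ``systematic use of (c) applied to every candidate bridge'' cannot supply the missing constraint. In the paper, (c) enters only at the very end, to rule out the residual case where at least two of the three pairs are induced by inside bridges and at least two by outside bridges.
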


\ifconf For proof, see Proposition~16 in the full version of the paper. \fi
On the other hand, any configuration from Definition~\ref{d:redrawable}
can be redrawn without using the crosscap:
\begin{proposition}
\label{p:redrawings}
Let $(D, \lambda)$ be a projective HT-drawing of $\gin$ on $S^2$ and $Z$ be a
cycle satisfying the separation assumptions. 
Moreover, let us assume that $D(\gin) \cap S^-
= \emptyset$ (that is, $\gin$ is fully drawn on $S^+ \cup D(Z)$). Let us also assume that $\gin$ forms an inside fan, 
an inside square or an inside split triangle. Then there is an
ordinary HT-drawing $D'$ of $\gin$ on $S^2$ such that $D$ coincides with $D'$
on $Z$ and $D'(\gin) \cap S^-
= \emptyset$.
\end{proposition}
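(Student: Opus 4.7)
}
My plan is to treat the three redrawable configurations separately, in each case producing an ordinary HT-drawing of $\gin$ that matches $D$ on $Z$ and lies inside $S^+ \cup D(Z)$. Since the classical strong Hanani--Tutte theorem on the plane gives the equivalence between HT-drawings on $S^2$ and planar embeddings, it suffices to produce a planar embedding of $\gin$ in which $Z$ bounds a face drawn as in $D$ and all non-$Z$ material lies in the disk $S^+$. Throughout, I exploit that the inside bridges are edge-disjoint and meet only along $V(Z)$, so the bridges can be redrawn one at a time.

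\textbf{Inside fan.} Let $v\in V(Z)$ be the common vertex of all inside arrows. Any proper nontrivial walk in $\gin$ has $v$ as one endpoint, so every proper walk between vertices of $V(Z)\setminus\{v\}$ in $\gin$ is trivial. Working bridge by bridge, let $B$ be any inside bridge and consider $B'=B-v$. The cycles of $B'$ (together with $Z$-paths avoiding $v$) are all trivial, so a variant of Lemma~\ref{l:planarize} applied with spanning trees rooted at $Z$-vertices allows me to confine the necessary vertex-crosscap switches to interior (non-$Z$) vertices of $B$. After these switches, every nontrivial edge of $B$ is incident to $v$. Because $v\in V(Z)$, these remaining nontrivial edges leave $v$ into a local angular region next to $v$ in $S^+$, and I can finish by a single finger-move that sweeps this region across $v$, so that the remaining ``crosscap effect'' at $v$ is cancelled without altering $D|_Z$.

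\textbf{Inside square and inside split triangle.} In the square case there is only one nontrivial bridge $B$, with $B$ realizing all four arrows $\ar{a}{b},\ar{b}{c},\ar{c}{d},\ar{a}{d}$; the remaining bridges contain no nontrivial walk between $Z$-vertices and are made trivial directly by Lemma~\ref{l:planarize} (restricted to switches at interior vertices, as above). Inside $B$, the parity constraints on proper walks between $\{a,b,c,d\}$ force a highly restricted structure, which I exploit to find a single path in $B$ carrying all the nontriviality; pulling a finger along this path eliminates it. In the split triangle case, every nontrivial bridge falls into one of two types: bridges inducing exactly one arrow (handled as in the fan case, with the common vertex being one of its endpoints), and bridges inducing both $\ar{a}{b}$ and $\ar{a}{c}$ (handled as a fan with common vertex $a$). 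Because the bridges interact only on $V(Z)$, performing the above redrawing independently in each bridge gives the desired global redrawing.

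\textbf{Main obstacle.} The delicate point is to keep $D|_Z$ and the triviality of $E(Z)$ intact while removing nontriviality inside. A naive vertex-crosscap switch at a vertex of $V(Z)$ flips the $\lambda$-value of the two incident $Z$-edges, which is forbidden. The technical heart of the argument is therefore to show, in each configuration, that the planarization (or equivalent finger-move sequence) can be performed using only switches at interior vertices of the bridges, with any residual nontriviality absorbed by a single local finger-move near the fan-vertex (or a distinguished vertex of the square/triangle) that affects only drawings in a small disk attached to $Z$ and does not touch $D|_Z$.
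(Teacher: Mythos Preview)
Your high-level plan of working bridge by bridge and using interior vertex-crosscap switches to push the nontriviality onto edges incident to~$Z$ is exactly what the paper does (via its Lemma~\ref{l:labels}). However, two of your three cases have genuine gaps.

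\textbf{Inside split triangle.} You propose to handle each nontrivial bridge independently as a fan. This does not give an ordinary HT-drawing. After your reduction, a bridge inducing $\ar ab$ and $\ar ac$ has its nontrivial edges at~$a$, while a bridge inducing only~$\ar bc$ has its nontrivial edges at, say,~$b$. An edge $e$ at~$a$ in the first bridge and an edge $f$ at~$b$ in the second bridge are \emph{independent} (they lie in different bridges and $a\neq b$), yet $\lambda(e)\lambda(f)=1$, so $\crno(e,f)=1$. Setting $\lambda'\equiv 0$ therefore does \emph{not} yield an HT-drawing. The paper resolves this by an additional round of \emph{vertex-edge} switches (not vertex-crosscap switches): it passes each edge of $E^b_{bc}$ over all interior vertices of every $a$-bridge, and then each edge of $E^c_a$ over all interior vertices of every $bc$-bridge, which kills exactly the unwanted odd crossings between the two families. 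Your proposal contains no mechanism for this cross-bridge interaction.

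\textbf{Inside square.} Saying that ``parity constraints force a highly restricted structure'' and that you can find ``a single path carrying all the nontriviality'' is not a proof. The structural fact the paper actually needs and proves (Lemma~\ref{l:cut_vertex}) is that the unique nontrivial bridge~$B$ contains a cut vertex~$v$ separating $a,b,c,d$ into four components; this is obtained via Menger's theorem together with Lemmas~\ref{l:noninterleaving_walks_intersect} and~\ref{l:interleaving_walks_intersect}, which force any $ab$-path and any $cd$-path (and likewise any $ac$-path and $bd$-path) to meet. Only after finding this cut vertex can one do further vertex-crosscap switches on the $a$- and $c$-components to concentrate all nontrivial edges at~$v$, where they are pairwise adjacent. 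Your proposal neither identifies this cut vertex nor explains what ``pulling a finger along a path'' would mean for a bridge that is not a priori path-like.

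\textbf{Inside fan.} Here your idea is correct, but the final ``finger-move across~$v$'' is unnecessary and somewhat confused. Once every edge~$e$ with $\lambda(e)=1$ is incident to the single vertex~$v$, any two such edges are adjacent; hence for every \emph{independent} pair $(e,f)$ one already has $\crno_D(e,f)=\lambda(e)\lambda(f)=0$. The drawing itself is therefore already an ordinary HT-drawing---one simply declares $\lambda'\equiv 0$ and is done. There is no residual ``crosscap effect'' to absorb.
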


\ifconf
For proof, see Proposition~17 in the full version of the paper.
\else
Proposition~\ref{p:redrawable_exist} is proved in
Sect.~\ref{s:labellings}
(assuming there the validity of Lemmas~\ref{lem:arrows_touch},~\ref{lem:disjoint_arrows_interleave}
and~\ref{l:no_triangle_arrows}).
Proposition~\ref{p:redrawings} is proved in
Sect.~\ref{s:redrawings}.
\fi

Now we are missing only one tool to finish the proof of
Theorem~\ref{t:black_box}. This tool is the ``redrawing procedure'' of
Pelsmajer, Schaefer and \v{S}tefankovi\v{c}~\cite{PSS07}. More concretely, we
need the following variant of Theorem~\ref{t:2.1}.
(Note that the theorem below is not in the setting of projective HT-drawings.
However, the notions used in the statement are still well defined according to 
Definition~\ref{d:in_out}.)

\begin{theorem}
\label{t:PSS_in_out}
Let $D$ be a drawing of a graph $G$ on the sphere $S^2$. Let $Z$ be a cycle in
$G$ such that every edge of $Z$ is even and $Z$ is drawn as a simple cycle.
Then there is a drawing $D''$ of $G$ such that
\begin{itemize}
  \item $D''$ coincides with $D$ on $Z$;
  
  \item $D''(\gin)$ belongs to $S^+ \cup D(Z)$ and $D''(\gout)$ belongs to $S^-
    \cup D(Z)$; 
  \item
    whenever $(e,f)$ is a pair of edges such that both $e$ and $f$ are inside
    edges or both $e$ and $f$ are outside edges, then $\crno_{D''}(e,f) =
    \crno_{D}(e,f)$.
\end{itemize}
\end{theorem}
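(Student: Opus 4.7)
The plan is to adapt the proof of Theorem~\ref{t:2.1} from~\cite{PSS07}; the present statement strengthens it by additionally (a) keeping $D$ fixed on $Z$ and (b) respecting the inside/outside partition. I would process the edges of $Z$ one at a time. Fix $e \in E(Z)$. Since $e$ is even, every other edge $f$ crosses $e$ an even number of times; enumerate these crossings $p_1,\dots,p_{2k}$ in the order in which they appear along $f$ and pair them as $(p_{2i-1},p_{2i})$. The sub-arc $\alpha_i$ of $f$ between each such pair lies entirely on one side of $e$. I would eliminate both crossings by a \emph{finger move}: replace $\alpha_i$ by a new arc $\alpha_i'$ running through a thin tubular strip alongside the sub-arc of $e$ between $p_{2i-1}$ and $p_{2i}$. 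Performed on every $f$ crossing $e$, this removes all crossings on $e$ while leaving $D(e)$ itself unchanged; iterating over all edges of $Z$ yields a drawing $D''$ with $D''=D$ on $Z$ and with $Z$ uncrossed. Along the way every remaining edge of $Z$ stays even, by the same parity bookkeeping as in the PSS argument, so the iteration is well-founded.

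For the containment of $D''(\gin)$ in $S^+ \cup D(Z)$ (and symmetrically for $\gout$), the crucial choice is on which side of $e$ the new sub-arc $\alpha_i'$ should lie. The finger move has two options, moving $\alpha_i$ through a tube on one side of $e$ or the other; I would always choose the side matching $f$'s classification, pushing an inside edge that dips into $S^-$ back into $S^+$, and vice versa. Since $f$ leaves each of its endpoints towards its classification side (Definition~\ref{d:in_out}) and, once the procedure is complete, $Z$ is free of crossings, $f$ lies entirely in the closure of $S^+$ if $f$ is inside and in the closure of $S^-$ if $f$ is outside. This establishes the first two bullets of the theorem.

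The main obstacle is the third bullet: preservation of $\crno_{D''}(e_1, e_2) = \crno_{D}(e_1, e_2)$ for pairs of edges lying on the same side. Standard bookkeeping for the finger move shows that, at the level of edges disjoint from the edge currently being processed, parities of $\crno(\cdot,\cdot)$ are preserved, because each new sub-arc $\alpha_i'$ sweeps across any other edge an even number of times (the other edge either enters and leaves the tube, or avoids it entirely). The only delicate case is a pair $g_1, g_2$ both incident to an endpoint $u \in V(Z)$ of some processed edge. If $g_1$ and $g_2$ are both inside (so both leave $u$ towards $S^+$), both finger moves for their respective sub-arcs stay on the $S^+$ side of $u$, and the new arcs of $g_1$ sweep across $g_2$ an even number of times near $u$; hence $\crno(g_1,g_2)$ is unchanged modulo $2$. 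The symmetric argument covers two outside edges. For a mixed inside/outside pair the fingers lie in disjoint regions near $u$, so the crossing parity may genuinely shift, but such pairs are not required to be preserved by the bullet. Combining these observations gives the theorem.
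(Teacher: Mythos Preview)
Your approach differs from the paper's, and the argument for the third bullet has a genuine gap.

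The claim ``each new sub-arc $\alpha_i'$ sweeps across any other edge an even number of times (the other edge either enters and leaves the tube, or avoids it entirely)'' is not justified. Concretely, replacing $\alpha_i$ by $\alpha_i'$ changes $\crno_D(f,g)$ by the parity of $(\text{crossings of }g\text{ with }\alpha_i') - (\text{crossings of }g\text{ with }\alpha_i)$. Viewing this as a continuous deformation of $f$, the parity of $\crno(f,g)$ flips precisely once each time the deformation sweeps over an \emph{endpoint} of $g$. The region swept is bounded by $\alpha_i$ (an arbitrary sub-arc of $f$, possibly wandering far from $e$) and $\alpha_i'$; nothing prevents inner vertices of $G$ from lying in this region. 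If $g$ is another inside edge with an endpoint there, $\crno(f,g)$ changes. Your parenthetical ``enters and leaves the tube'' tacitly assumes both endpoints of $g$ lie outside the swept region, which you have not arranged. (A secondary issue: the side on which $\alpha_i'$ must lie to actually remove the two crossings is forced by where $f$ sits locally just before $p_{2i-1}$; it is not a free choice you can align with the inside/outside classification of $f$.)

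The paper sidesteps this bookkeeping entirely by never doing local reroutes. It first \emph{almost-contracts} the edges of $Z$ one by one---sliding a vertex $v$ along $Z$ towards its neighbour $u$, which is an ambient isotopy of $S^2$---until every crossing with $Z$ sits on a single edge $e_0$, and then composes with a self-homeomorphism of $S^2$ fixing $D(p)$ (where $p=Z-e_0$) that squeezes a suitable open disk $B\supset S^+$ onto $S^+$. One checks that every inside edge already lies in $B$, so the homeomorphism carries $D(\gin - e_0)$ into $S^+$; the symmetric move handles $\gout$. Because both steps are homeomorphisms of the ambient sphere carrying the entire drawing with them, all pairwise crossing numbers among edges on the same side are preserved automatically, and the third bullet comes for free rather than requiring edge-by-edge parity accounting.
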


It is easy to check that the proof of Theorem~\ref{t:2.1}
in~\cite{PSS07} proves
Theorem~\ref{t:PSS_in_out} as well. Additionally, we note that an alternative proof of Theorem~\ref{t:2.1} in~\cite[Lemma~3]{Fulek2012AdjCrossings} can also be extended to yield Theorem~\ref{t:PSS_in_out}. \ifconf For \else Nevertheless, for \fi
completeness, we provide its proof
in
\ifconf
Sect.~8 of the full version of the paper.
\else
Sect.~\ref{s:PSS_in_out}. 
\fi

Finally, we prove
Theorem~\ref{t:black_box}, assuming the validity of the aforementioned auxiliary results.

\ifconf

\begin{proofsketch}[of Theorem~\ref{t:black_box}]
First, we use Theorem~\ref{t:PSS_in_out} to $G$ and $D$ to obtain a drawing
$D''$ keeping in mind that all edges of $Z$ are even. By
Proposition~\ref{p:redrawable_exist}, $G$ forms one of the redrawable
configurations from Definition~\ref{d:redrawable} on one of the sides.
Without loss of generality, it appears inside. It means that
$D''$ restricted to $\gin$ satisfies the assumptions of
Proposition~\ref{p:redrawings}. Therefore, there is an ordinary HT-drawing
$D^+$ of $\gin$ satisfying the conclusions of Proposition~\ref{p:redrawings}.
Finally, we let $D'$ be the drawing of $G$ on $S^2$ which coincides with $D^+$
on $\gin$ and with $D''$ on $\gout$. Both $D''$ and $D^+$ coincide with $D$ on
$Z$; therefore, $D'$ is well defined. We set $\lambda'$ so that $\lambda'(e) :=
\lambda(e)$ for an edge $e \in E^-$ and $\lambda'(e) := 0$ for any other edge.
Now, it is easy to verify that $(D',\lambda')$ is the required projective
HT-drawing. The picture below provides an example of the drawings in the proof.
\qed
\begin{center}
\includegraphics{K_5_projective_conference}
\end{center}

\end{proofsketch}
  
\else
\begin{proof}[Proof of Theorem~\ref{t:black_box}]
  Let $G$ be the graph, $(D, \lambda)$ be the drawing and $Z$ be the cycle from
  the statement.

We use Theorem~\ref{t:PSS_in_out} to $G$ and $D$ to obtain a drawing $D''$
keeping in mind that all edges of $Z$ are even. See
Fig.~\ref{f:K_5_projective}; follow this picture also in the next steps of the
proof. We get that $Z$ is drawn on $D''$ 
as a simple cycle free of crossings. We also get that $D''(\gin)$ is contained in 
$\inn{S} \cup D''(Z)$ and $D''(\gout)$ is contained in $\out{S} \cup D''(Z)$.
 However, there may be no $\lambda''$
such that $(D'', \lambda'')$ is a projective HT-drawing; we still may need to modify it
to obtain such a drawing.

By Proposition~\ref{p:redrawable_exist}, $G$ forms one of the
redrawable configurations on one of the sides; that is, an inside/outside fan,
square or split triangle. Without loss of generality, it appears inside. It
means that $D''$ restricted to $\gin$ satisfies the assumptions of
Proposition~\ref{p:redrawings}. Therefore, there is an ordinary HT-drawing
$D^+$ of $\gin$ satisfying the conclusions of Proposition~\ref{p:redrawings}.
Finally, we let $D'$ be the drawing of $G$ on $S^2$ which coincides with $D^+$
on $\gin$ and with $D''$ on $\gout$. Both $D''$ and $D^+$ coincide with $D$ on
$Z$; therefore, $D'$ is well defined. We set $\lambda'$ so that $\lambda'(e) :=
\lambda(e)$ for an edge $e \in E^-$ and $\lambda'(e) := 0$ for any other edge.
Now, we can easily verify that $(D',\lambda')$ is the required projective
HT-drawing.

\begin{figure}
\begin{center}
\includegraphics{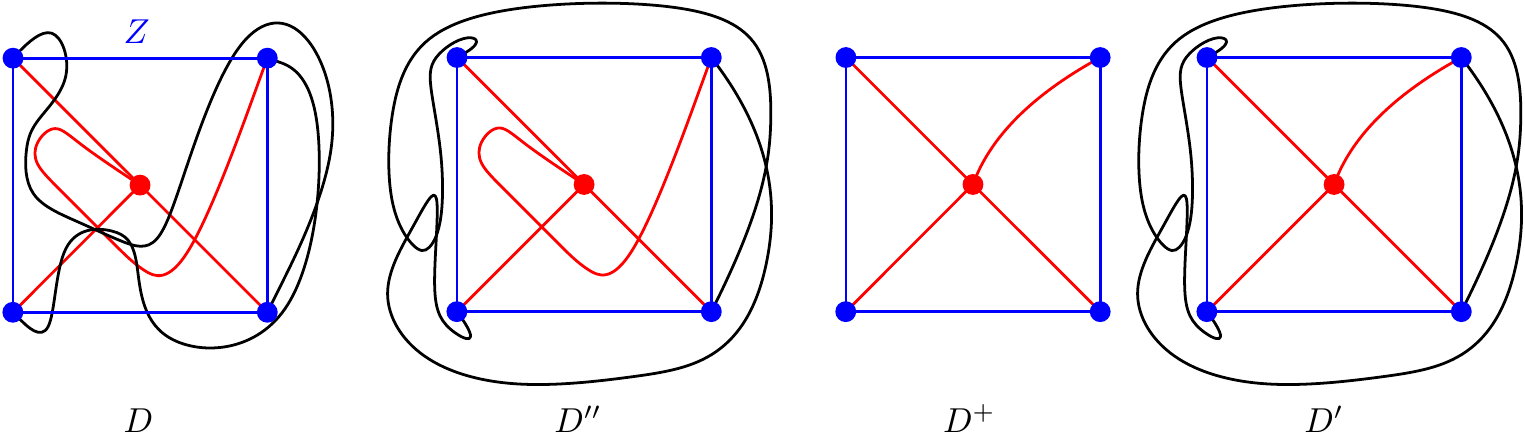}
\caption{Redrawing a projective HT-drawing of $K_5$ analogously to the drawing in
Fig.~\ref{f:K_5_example}.}
\label{f:K_5_projective}
\end{center}
\end{figure}

Indeed, let $e$ and $f$ be independent edges. If both $e$ and $f$ are inside
edges, then $\crno_{D'}(e,f) = \crno_{D^+}(e,f) = 0 = \lambda'(e)\lambda'(f)$,
since $D^+$ is an ordinary HT-drawing. If both $e$ and $f$ are outside edges,
then $\crno_{D'}(e,f) = \crno_{D''}(e,f) = \crno_{D}(e,f) =
\lambda(e)\lambda(f) = \lambda'(e)\lambda'(f)$. Finally, if one of this edges
is an inside edge and the other is an outside edge, then $\crno_{D'}(e,f) = 0 =
\lambda'(e)\lambda'(f)$, because $D'(e)$ and $D'(f)$ are separated by $D'(Z)$.
\end{proof}
\fi

\section[\texorpdfstring{Proof of the Strong Hanani--Tutte Theorem on $\R P^2$}{Proof of the Strong Hanani--Tutte Theorem on the Projective Plane}]{Proof of the Strong Hanani--Tutte Theorem on $\R P^2$}
\label{s:induction}

\ifconf
In this section we sketch a proof of Theorem~\ref{thm:main} from Theorem~\ref{t:black_box}
and the auxiliary results from the previous section.
\else
In this section, we prove Theorem~\ref{thm:main} assuming validity of
Theorem~\ref{t:black_box} as well as few other auxiliary results from the
previous section,
which will be proved only in the later sections. 
\fi

Given a graph $G$
that admits an HT-drawing on the projective plane, we need to show that $G$ is
actually projective-planar. By \ifconf
Proposition~\ref{p:projective_drawings}, \else
Corollary~\ref{c:projective_drawings}, \fi
we may assume that $G$ admits a projective $HT$-drawing $(D,\lambda)$ on
$S^2$. We head for using Theorem~\ref{t:black_box}. For this, we need that
$G$ is $2$-connected and contains a suitable trivial cycle $Z$ that may be 
redrawn so that it satisfies the assumptions of Theorem~\ref{t:black_box}.
Therefore, we start with auxiliary claims that will bring us to this setting.
Many of them are similar to auxiliary steps
in~\cite{PSS07} (sometimes they are almost identical, adapted to a new setting).
\ifconf
The proofs are at the beginning of Sect.~4
of the full version of the paper.
\fi

Before we state the next lemma, we recall the well known fact that any graph
admits a (unique) decomposition into \emph{blocks of 2-connectivity}~\cite[Ch.~3]{diestel10}. Here, we also allow
the case that $G$ is disconnected.
\ifconf \else
Each block in this decomposition is
either a vertex (this happens only if it is an isolated vertex of $G$), an edge
or a 2-connected graph with at least three vertices. The intersection of two
blocks is either empty or it contains a single vertex (which is a cut in the
graph). The blocks of the decomposition
cover all vertices and edges (a vertex may occur in several blocks whereas any
edge belongs to a unique block).
\fi

\begin{lemma}\label{l:2-connected}
If $G$ admits a projective HT-drawing on $S^2$,
then at most one block of 2-connectivity in $G$ is non-planar.
Moreover, if all blocks are planar, $G$ is planar as well.
\end{lemma}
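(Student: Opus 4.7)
My plan is to derive both parts by reducing to Lemma~\ref{l:disjoint_cycles_projective}, which forbids two vertex-disjoint nontrivial cycles in any projective HT-drawing on $S^2$. For the ``moreover'' statement I would use the standard block-gluing argument: induct on the number of blocks, and at each cut vertex $v$ separating $G$ into $G_1,G_2$, glue planar embeddings of $G_1,G_2$ (obtained by induction) by placing the embedding of $G_2$ inside a face of the embedding of $G_1$ incident to $v$. No HT machinery is needed for this part.

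For the main claim I argue by contradiction, supposing two blocks $B_1,B_2$ are non-planar. Each inherits a projective HT-drawing from $(D,\lambda)$ by restriction. I first observe that each $B_i$ must contain a nontrivial cycle: otherwise Lemma~\ref{l:planarize} produces a projective HT-drawing of $G$ in which every edge of $B_i$ is trivial, so the restriction to $B_i$ is an ordinary HT-drawing on $S^2$, and the classical strong Hanani--Tutte theorem in the plane forces $B_i$ to be planar, a contradiction. Since distinct blocks share at most one vertex, $V(B_1)\cap V(B_2)$ is either empty or a single cut vertex $v$. If empty, any two nontrivial cycles $Z_1\subseteq B_1$ and $Z_2\subseteq B_2$ are vertex-disjoint and directly contradict Lemma~\ref{l:disjoint_cycles_projective}.

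The remaining case $V(B_1)\cap V(B_2)=\{v\}$ is the main obstacle. My plan is to find nontrivial cycles $Z_i\subseteq B_i$ avoiding $v$: if $B_i-v$ contains a nontrivial cycle for each $i$, take them, and we are done. Otherwise, for some~$i$ the graph $B_i-v$ has only trivial cycles, so Lemma~\ref{l:planarize} trivializes all its edges; then, after possibly a vertex-crosscap switch at $v$, I would locally decouple the drawing by pulling apart the $B_1$-edges and $B_2$-edges incident to $v$ and splitting $v$ into two copies $v_1\in B_1$ and $v_2\in B_2$, yielding a projective HT-drawing of the split graph that contains two genuinely vertex-disjoint nontrivial cycles, which again contradicts Lemma~\ref{l:disjoint_cycles_projective}. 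The hard part is precisely this decoupling step: the $B_1$-edges and $B_2$-edges at $v$ may interleave in the cyclic order, and one must check that appropriate HT-moves can rearrange them so that the split produces a legitimate projective HT-drawing without creating new odd crossings of independent edges.
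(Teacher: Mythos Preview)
Your overall strategy matches the paper's: use Lemma~\ref{l:disjoint_cycles_projective} as the engine, handle the ``moreover'' part by the standard block-gluing argument, and for the main claim assume for contradiction that two blocks $B_1,B_2$ are non-planar. Your treatment of the disjoint case, and the observation that each $B_i$ must contain a nontrivial cycle (else Lemma~\ref{l:planarize} plus planar Hanani--Tutte force planarity), are essentially identical to the paper's.

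In the shared-cut-vertex case your route differs slightly in presentation---the paper trivializes a carefully chosen spanning tree of $B_1\cup B_2$ with only two edges at~$v$, while you apply Lemma~\ref{l:planarize} directly to $B_i-v$---but both routes land at the same point: in one block, say $B_2$, every nontrivial edge is incident to~$v$. From here you overcomplicate matters. The paper's finish is one line, and it is precisely the observation you are missing: if every nontrivial edge of $B_2$ is incident to~$v$, then any two \emph{independent} edges $e,f$ of $B_2$ satisfy $\lambda(e)\lambda(f)=0$, hence $\crno_D(e,f)=\lambda(e)\lambda(f)=0$. So the restriction of $D$ to $B_2$ is already an \emph{ordinary} HT-drawing on~$S^2$, and the planar strong Hanani--Tutte theorem makes $B_2$ planar---the desired contradiction.

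Your proposed vertex-splitting is therefore unnecessary, and as you yourself flag, it is also incomplete: after splitting $v$ into $v_1,v_2$, pairs of edges that were adjacent at~$v$ become independent, and you have not shown how to arrange (via local moves or a vertex-crosscap switch at~$v$) that $\crno(e,f)=\lambda(e)\lambda(f)$ for all such new pairs. Drop the split and use the one-line finish above.
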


\ifconf
\else
We note that in~\cite{SchaeferZ2_embed} it was proved that a minimal counterexample to the strong Hanani--Tutte theorem on any surface is vertex 2-connected. However, for the projective plane the same result can be obtained by much simpler means; therefore, we include its proof here.
\begin{proof}
First, for contradiction, let us assume that $G$ contains two distinct
non-planar blocks $B_1$ and $B_2$. If $B_1$ and $B_2$ are disjoint, then
Lemma~\ref{l:disjoint_cycles_projective} implies that at least one of these
blocks, say $B_2$, does not contain any non-trivial cycle. However, it means
that $B_2$ admits an ordinary HT-drawing on $S^2$ by
Lemma~\ref{l:planarize}. Therefore, $B_2$ is planar by the strong Hanani--Tutte
theorem in the plane~\cite{Hanani34,Tutte70,PSS07}. This contradicts our original assumption.
It remains to consider the case when $B_1$ and $B_2$ share a vertex $v$ (it
must be a cut vertex).
 Let us set $H:=B_1\cup B_2$. Let $P$ be a spanning tree of $H$ with just two edges $e_1, e_2$ incident to $v$ 
 and such that $e_1 \in B_1$ and $e_2 \in B_2$.
 Note that such a tree always exists, because $B_1$ and $B_2$ are connected after removing $v$.
 By Lemma~\ref{l:planarize} we may assume that all the edges of $P$ are
 trivial (after a possible alteration of $\lambda$).
 
 Any nontrivial edge $e$ from $E(H) \setminus E(P)$ creates a nontrivial cycle
 in the corresponding block. If $e$ is not incident to $v$, then the cycle
 avoids $v$ by the choice of $P$.
 Using Lemma~\ref{l:disjoint_cycles_projective} again, we see that at least one
 of the blocks, say $B_2$, satisfies that all its nontrivial edges are incident
 with $v$.   
   This already implies that $B_2$ is a planar graph, because $D$ is an
 $HT$-drawing of $B_2$ on $S^2$ (there are no pairs of nontrivial independent
 edges in $G$). This is again a contradiction.

The last item in the statement of this lemma is a well known property of planar
graphs. It is sufficient to observe that a disjoint union of two planar graphs
is a planar graph, and moreover, that if a graph $G$ contains a cut vertex
$v$ and all the components after cutting (and reattaching $v$) are planar, then
$G$ is planar as well.
\ifconf
\qed
\fi
\end{proof}
\fi

\begin{observation}\label{obs:no_trivial_cycle}
Let $(D,\lambda)$ be a drawing of a $2$-connected graph.
If  $D$ does not contain any trivial cycle, then $G$ is planar.
\end{observation}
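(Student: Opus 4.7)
The plan is to argue purely combinatorially via the ear decomposition of 2-connected graphs. First, if $G$ is a single cycle, then $G$ is planar and nothing remains to prove. Otherwise, by Whitney's ear-decomposition theorem, we may write $G = C_0 \cup P_1 \cup \cdots \cup P_k$ with $C_0$ a cycle and $k \geq 1$, where the first ear $P_1$ attaches to $C_0$ at two distinct vertices $u$ and $v$. These vertices split $C_0$ into two internally disjoint paths $Q_1$ and $Q_2$.

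From this configuration I would produce three simple cycles $Z_0 := Q_1 \cup Q_2 = C_0$, $Z_1 := P_1 \cup Q_1$, and $Z_2 := P_1 \cup Q_2$. Their edge sets satisfy $E(Z_1) \mathbin{\triangle} E(Z_2) = E(Q_1) \mathbin{\triangle} E(Q_2) = E(Z_0)$, since the edges of $P_1$ cancel. Because $\lambda(Z) = \sum_{e \in E(Z)} \lambda(e) \in \Z_2$ is linear in the symmetric difference of edge sets, this yields the identity $\lambda(Z_0) = \lambda(Z_1) + \lambda(Z_2)$.

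The hypothesis that $D$ contains no trivial cycle now forces $\lambda(Z_0) = \lambda(Z_1) = \lambda(Z_2) = 1$, giving the contradiction $1 = 1 + 1 = 0$ in $\Z_2$. Hence $G$ cannot carry any ear beyond the initial cycle, so $G$ itself is a single cycle and is therefore planar. No genuine obstacle arises in this argument; the only structural ingredient is the existence of a Whitney ear decomposition for 2-connected graphs, and everything else reduces to the $\Z_2$-linearity of $\lambda$ on the cycle space.
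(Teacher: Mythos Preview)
Your proof is correct and follows essentially the same route as the paper's. The paper phrases it slightly more tersely: a $2$-connected graph that is not a cycle contains three internally disjoint paths between some pair of vertices, and by pigeonhole two of these paths have the same $\lambda$-value, so their union is a trivial cycle. Your ear $P_1$ together with the two arcs $Q_1,Q_2$ of $C_0$ are exactly those three paths, and your identity $\lambda(Z_0)=\lambda(Z_1)+\lambda(Z_2)$ is the same pigeonhole observation rewritten additively; the ear-decomposition machinery is not really needed beyond producing this theta subgraph.
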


\ifconf\else
\begin{proof}
As $G$ is $2$-connected, it is either a cycle or it contains three disjoint
paths sharing their endpoints. A cycle is a planar graph as we need. In the
latter case, two of the paths are both trivial or both nontrivial. Together,
they induce a trivial cycle, therefore this case cannot occur.
\ifconf
\qed
\fi
\end{proof}
\fi

\begin{lemma}\label{lem:even}
  Let $(D, \lambda)$ be a projective HT-drawing on $S^2$ of a graph $G$ and let $Z$ be a
cycle in $G$. Then $G$ can be redrawn only by local changes
next to the vertices of $Z$ to a projective HT-drawing $D'$ on $S^2$
so that $\lambda$ remains unchanged and $cr_{D'}(e,f)=\lambda(e)\lambda(f)$, 
for any pair $(e,f)\in E(Z)\times E(G)$ of distinct (not necessarily
independent) edges.
In particular, if $\lambda(e)=0$ for every edge $e$ of $Z$, then every edge of $Z$ becomes even in $D'$.
\end{lemma}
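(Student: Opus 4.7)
The plan is to reduce the lemma to a purely local correction inside a small disk around each vertex of $V(Z)$. Because $(D, \lambda)$ is already a projective HT-drawing, the equality $\crno_D(e,f) = \lambda(e)\lambda(f)$ holds for every pair of independent edges; so the only pairs $(e,f) \in E(Z) \times E(G)$ for which it can fail are adjacent pairs, whose shared vertex must lie in $V(Z)$. I would therefore pick pairwise disjoint small disks $U_v$ around the vertices $v \in V(Z)$, each chosen to avoid every other vertex and every edge not incident to $v$, and correct the parities one disk at a time.

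The key local tool is a flexibility statement: for any target pairwise crossing parities on the arcs of the star of $v$ inside $U_v$, there is a redrawing of those arcs, supported in $U_v$ and fixing their endpoints (at $v$ and on $\partial U_v$), that realises the target. This is established by first shrinking to a smaller concentric disk $U_v' \subset U_v$; in the annulus $U_v \setminus U_v'$ one reshapes the arcs so that they enter $U_v'$ in a prescribed cyclic configuration of points on $\partial U_v'$, and inside $U_v'$ the arcs are just short curves from $v$ to these boundary points, whose pairwise crossings can be adjusted independently by inserting small finger-like detours that each cross one target arc once and every other arc an even number of times. Since the modification lives entirely in $U_v$, edges not incident to $v$ are untouched, and the abstract labelling $\lambda$ (a fixed combinatorial datum on the edge set) is unaffected.

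Given this tool, the procedure is immediate. At each $v \in V(Z)$, apply the local redrawing inside $U_v$ so that the crossing parity of $e$ and $f$ inside $U_v$ is shifted by exactly $\delta(e,f) := \crno_D(e,f) + \lambda(e)\lambda(f) \in \Z_2$ for every pair of distinct edges $e, f$ incident to $v$ with $e \in E(Z)$. Since the disks $U_v$ are pairwise disjoint, the corrections at different vertices do not interfere. In the resulting drawing $D'$, pairs $(e,f) \in E(Z) \times E(G)$ sharing a vertex of $V(Z)$ satisfy $\crno_{D'}(e,f) = \lambda(e)\lambda(f)$ by construction, while independent pairs inherit the equality from $(D, \lambda)$ because our local moves do not affect any crossing that involves no edge incident to the corresponding $v$. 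Hence $(D', \lambda)$ is again a projective HT-drawing on $S^2$. The ``in particular'' clause follows at once: when $\lambda(e) = 0$ for every $e \in E(Z)$, we have $\lambda(e)\lambda(f) = 0$ for every $e \in E(Z)$ and every other edge $f$, so every edge of $Z$ becomes even in $D'$.

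The only delicate point is the local flexibility statement, namely the claim that pairwise crossing parities of the arcs of the star of $v$ inside $U_v$ can be prescribed independently. This is a standard piece of disk-calculus used throughout the Hanani--Tutte literature, and its verification reduces to a routine check on the straight-line configuration inside $U_v'$: for a target pair $(e,f)$, route a detour on one arc through a sector that does not contain the other arcs (or enters and leaves each of them in pairs), so that only the parity of $\crno(e,f)$ is flipped.
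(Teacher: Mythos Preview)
Your overall strategy coincides with the paper's: since $(D,\lambda)$ is already a projective HT-drawing, only pairs $(e,f)\in E(Z)\times E(G)$ that share a vertex of $Z$ can violate $\crno_D(e,f)=\lambda(e)\lambda(f)$, and those can be repaired by local moves in small disjoint disks around the vertices of $Z$. The paper carries this out with three concrete local moves (called a), b), c) in the paper, applied in two passes: first fix the parity of consecutive $Z$-edges, then fix the parity of each $Z$-edge against each non-$Z$ neighbour).

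There is, however, a genuine gap in your ``local flexibility statement.'' As you phrase it---that \emph{all} pairwise crossing parities of the arcs of the star of $v$ inside $U_v$ can be prescribed independently---the statement is false. Take four arcs $a_1,a_2,a_3,a_4$ from $v$ to boundary points $p_1,p_2,p_3,p_4$ in this cyclic order on $\partial U_v$. One cannot realise $\crno(a_1,a_3)=1$ with the other five pairs even. Indeed, for each splitting of $\{1,2,3,4\}$ into two pairs, form the two closed curves obtained by concatenating the corresponding arcs with boundary arcs; comparing their (necessarily even) intersection number on $S^2$ with the crossings contributed at $v$ gives
\[
\sum_{i\in P,\ j\in Q}\crno(a_i,a_j)\equiv\bigl[\text{$P$ and $Q$ interleave in the rotation at $v$}\bigr]\pmod 2
\]
for each of the three pair-partitions $\{P,Q\}$. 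For the target vector this forces all three interleaving bits to equal $1$, which is impossible for four directions on a circle. So your ``detour that flips exactly one pair'' does not exist in general.

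What \emph{is} true---and what the lemma needs---is the weaker flexibility the paper actually proves: at each $v\in V(Z)$ one can prescribe $\crno(e,f)$ for every pair with $e\in E(Z)$, while letting the parities among the non-$Z$ edges at $v$ change arbitrarily (those pairs are adjacent, so the projective HT condition ignores them). The paper's moves a), b), c) do exactly this. Your argument becomes correct once you replace the full-flexibility claim by this weaker one; the rest of your write-up (disjoint disks, independence of the corrections, the ``in particular'' clause) is fine.
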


\ifconf\else
\begin{proof}
Since we have a projective HT-drawing, $cr_D(e,f)=\lambda(e)\lambda(f)$
for every pair of independent edges.
To prove the claim it remains to show that local changes allow
to change the parity of $cr_D(e,f)$ whenever $e$ is an edge of $Z$ and $e$ and $f$ share a vertex.

This can be done in two steps. First we use local move c) from Fig.~\ref{f:local_changes} to obtain
the desired parity of $cr_D(e,f)$, for all pairs of consecutive edges $(e,f)$ on $Z$. 
This move may change the parity of crossings between edges on $Z$ and dependent edges not on $Z$.

Next we use local moves a) and b) from Fig.~\ref{f:local_changes} to obtain the desired parity 
of crossings between edges on $Z$ and dependent edges not on $Z$. 
If $v$ is the vertex common to $h$, $e$ and $f$, where $e$ and $f$ are edges on $Z$,
move a) is used when we need to change
the parity of $cr_D(e,h)$ and its symmetric version to change the parity of $cr_D(f,h)$.
Move b) is used when we need to change the parity for both $cr_D(e,h)$ and $cr_D(f,h)$.
Since these moves do not change the parity of $cr_D(e,h')$ or $cr_D(f,h')$ for any other edge $h'$,
the claim follows.
\begin{figure}
\begin{center}
\includegraphics{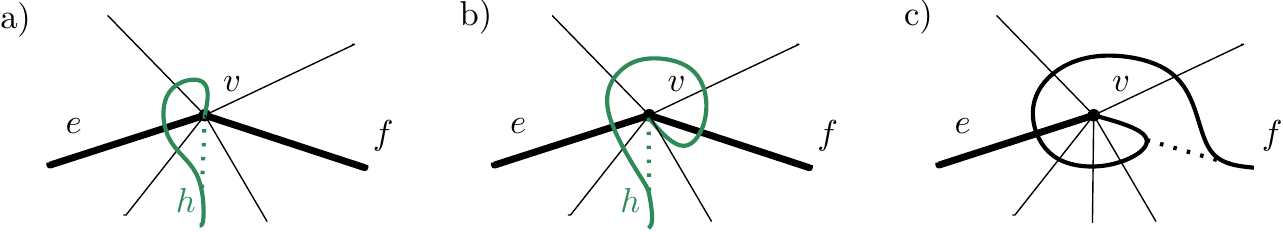}
\caption{Local changes to make all edges of $Z$ even. The original drawing of the edge near $v$ is dotted.}
\label{f:local_changes}
\end{center}
\end{figure}
\ifconf
\qed
\fi
\end{proof}
\fi

Once we know that the edges of a cycle can be made even we also need to know that such a cycle can be made simple.

\begin{lemma}\label{lem:simple}
  Let $(D, \lambda)$ be a projective HT-drawing on $S^2$ of a graph $G$ and
  let $Z$ be a cycle in $G$ such that each of its edges is even. Then $G$ can be redrawn so that $Z$ becomes a simple cycle, its edges remain even and the resulting drawing is still a projective HT-drawing (with $\lambda$ unchanged). 
\end{lemma}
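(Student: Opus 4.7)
The plan is to apply Theorem~\ref{t:2.1} of Pelsmajer--Schaefer--\v{S}tefankovi\v{c} directly to~$D$; this is legitimate because $S^2$ may be viewed as the plane after puncturing a point of $S^2\setminus D(G)$. I would take $E_0$ to be the set of all even edges of $D$, which contains $E(Z)$ by hypothesis. The theorem then produces a drawing~$D'$ of~$G$ on $S^2$ in which no edge of $E_0$ participates in any crossing, and in particular every edge of~$Z$ is crossing-free in~$D'$, so $Z$ is drawn as a simple cycle and its edges remain (trivially) even.

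The bulk of what remains is to check that $(D',\lambda)$ is still a projective Hanani--Tutte drawing with the same~$\lambda$. For an independent pair $(e,f)$ with $e\in E_0$ (in particular for every $e\in E(Z)$), one has $\crno_{D'}(e,f)=0$ because $e$ is free of crossings in~$D'$, and $\lambda(e)\lambda(f)=\crno_D(e,f)=0$ because $e$ is even in~$D$; so the two sides agree. For an independent pair $(e,f)$ with $e,f\notin E_0$, I would invoke the fact that the PSS procedure consists entirely of vertex--edge switches applied to edges of~$E_0$: such a switch on a pair $(v,e'')$ alters only the crossings of $e''$ with edges incident to~$v$ and leaves every other crossing count intact, as recorded in the paper just before Lemma~\ref{l:vertex_crosscap_switch}. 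Hence $\crno_{D'}(e,f)=\crno_D(e,f)=\lambda(e)\lambda(f)$, and the projective identity is preserved.

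The main (indeed essentially the only) point that requires care is to verify that the PSS procedure preserves crossings between edges outside~$E_0$ \emph{exactly}, rather than merely the weaker ``no new odd-crossing pairs'' phrased in the statement of Theorem~\ref{t:2.1}. This stronger preservation is inherent in the finger-move nature of the moves used in~\cite{PSS07}, and it is in the same spirit as the preservation on each side of a separating simple cycle expressed by Theorem~\ref{t:PSS_in_out}. Once this bookkeeping is settled, the lemma follows in one step.
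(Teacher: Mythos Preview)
Your approach has a genuine gap. The black-box statement of Theorem~\ref{t:2.1} only guarantees that no \emph{new} odd-crossing pairs appear; it does not prevent a pair $(e,f)$ with $\crno_D(e,f)=1$ from becoming $\crno_{D'}(e,f)=0$. For a projective HT-drawing this is exactly what you need to rule out: if $e,f$ are independent with $\lambda(e)=\lambda(f)=1$, then $\crno_D(e,f)=1$ and you must have $\crno_{D'}(e,f)=1$ as well, which Theorem~\ref{t:2.1} does not promise.

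Your attempted justification---that the PSS procedure ``consists entirely of vertex--edge switches applied to edges of~$E_0$''---is not correct. A vertex--edge switch $(v,e'')$ only toggles parities of crossings between $e''$ and edges incident to~$v$; no sequence of such parity-flips can make an even edge literally crossing-free. The actual PSS argument uses contractions and continuous deformations, and these do change crossings among non-$E_0$ edges. Concretely, once all edges of~$Z$ are rendered crossing-free, $Z$ separates~$S^2$, and any two edges landing on opposite sides of~$Z$ then have zero crossings regardless of their original parity. The very Theorem~\ref{t:PSS_in_out} you invoke as support confirms this: it preserves $\crno$ only for pairs on the \emph{same} side of~$Z$; for pairs on opposite sides the crossing number is forced to~$0$, which breaks the identity $\crno_{D'}(e,f)=\lambda(e)\lambda(f)$ when both edges are nontrivial.

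The paper's proof avoids this by a far more local manoeuvre: it almost-contracts the edges of~$Z$ one by one, sliding each vertex along the image of~$Z$ so that all crossings are pushed onto a single edge of~$Z$, whose self-crossings are then removed. Because each almost-contraction only extends the edges incident to the moving vertex along an \emph{even} edge of~$Z$, one checks directly that the crossing parity of every independent pair is preserved (any two edges that are both dragged share the moved vertex and are hence adjacent). Thus the projective HT-condition survives with~$\lambda$ unchanged, and no appeal to the internal mechanics of~\cite{PSS07} is needed.
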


\ifconf\else
\begin{proof}
First, we want to get a drawing such that there is only one edge of $Z$
which may be intersected by other edges.  Let us consider three consecutive
vertices $u$, $v$ and~$w$ on~$Z$, with $v\not\in\{u,w\}$.  We
\emph{almost-contract} $uv$ so that we move the vertex $v$ towards $u$ until we
remove all intersections between $uv$ and other edges.  Note that the image of
the cycle~$Z$ is not changed; we only slide $v$ towards~$u$ along~$Z$.
This way, $uv$ is now free of
crossings and these crossings appear on $vw$. 
See the two leftmost pictures in Fig.~\ref{f:almost_contract}. (The right
picture will be used in the proof of Theorem~\ref{t:PSS_in_out}.)

Since $uv$ as well as $vw$ were even edges in the initial drawing, $vw$ remains even after the redrawing.
If $uv$ and $vw$ intersected, then this step introduces self-intersections of $vw$.
\begin{figure}
\begin{center}
\includegraphics{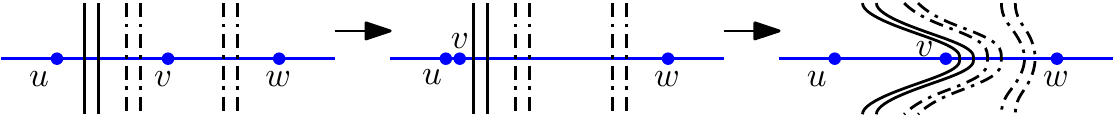}
\caption{Almost contracting an edge.}
\label{f:almost_contract}
\end{center}
\end{figure}

After performing such redrawing repeatedly, we get there is only one edge of
$Z$ which may be intersected by other edges, as required. We remove
self-crossings of this edge, as described in Sect.~\ref{s:ht_drawings}, and we are done. 
\end{proof}
\fi

\ifconf
\else
Apart from lemmas tailored to set up the separation assumptions, we also
need one more lemma that will be useful in the inductive proof of
Theorem~\ref{thm:main}.

\begin{lemma}
\label{l:single_arrow}
  Let $(D, \lambda)$ be a Hanani--Tutte drawing of $G$ and let $Z$ be a cycle
  satisfying the separation assumptions. Let $B$ be an inside bridge such that
  any path with both endpoints on $V(B) \cap V(Z)$ is nontrivial. Then $|V(B)
  \cap V(Z)| = 2$ and $B$ induces a single arrow and no loop.
\end{lemma}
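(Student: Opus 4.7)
My plan is to split the proof into establishing $|V(B) \cap V(Z)| = 2$ first, and then showing that $B$ induces only the single arrow between the two attachment vertices. Since $G$ is 2-connected and $B$ is an inside bridge, $|V(B) \cap V(Z)| \geq 2$ is automatic. For the upper bound I would argue by contradiction: suppose there are three distinct attachments $a, b, c \in V(B) \cap V(Z)$. Because $B$ is connected, I pick a subtree $T \subseteq B$ whose vertex set contains $\{a, b, c\}$, and let $P_{ab}, P_{bc}, P_{ac}$ be the unique paths in $T$ between each pair. Writing each path through the Steiner vertex of $\{a,b,c\}$ in $T$, the standard tree identity gives $P_{ab} \oplus P_{bc} \oplus P_{ac} = \emptyset$ as edge sets, hence $\lambda(P_{ab}) + \lambda(P_{bc}) + \lambda(P_{ac}) = 0$ in $\Z_2$. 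Under the hypothesis each summand equals $1$, giving $1 = 0$, a contradiction.

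Writing $V(B) \cap V(Z) = \{u, v\}$, the arrow $\ar{u}{v}$ is automatically induced by $B$, because any $u$-$v$ path in $B$ is proper (since $V(B) \cap V(Z) = \{u,v\}$) and nontrivial by hypothesis. To exclude the loops $\ar{u}{u}$ and $\ar{v}{v}$ it is enough to show that every cycle in $B$ is trivial: any proper closed walk at $u$ (resp.~$v$) lies in $B - v$ (resp.~$B - u$), decomposes modulo $2$ into cycles of $B$, and is therefore trivial, so no loop can be induced.

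To prove triviality of every cycle in $B$, I plan to adjoin an auxiliary edge $e_{uv}$ between $u$ and $v$, declare $\lambda(e_{uv}) := 1$, and work in the augmented graph $B^* := B + e_{uv}$. First I would check that $B^*$ is 2-connected: because the only access from $V(B) \setminus \{u,v\}$ to the rest of $G$ is through $u$ or $v$, the 2-connectivity of $G$ forces $B - u$ and $B - v$ to be connected and every vertex of $B - c$ to remain connected to $\{u,v\}$ for every interior $c$; together with the edge $e_{uv}$ this makes $B^* - x$ connected for every vertex $x$. Second, I would invoke the standard fact---via Whitney's ear decomposition starting with a cycle through $e_{uv}$---that in a 2-connected graph the cycles through a fixed edge span the $\Z_2$-cycle space. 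Every cycle of $B^*$ passing through $e_{uv}$ is of the form $e_{uv} \cup P$ for some $u$-$v$ path $P$ in $B$, so has $\lambda$-value $1 + 1 = 0$. It follows that $\lambda$ vanishes on the whole cycle space of $B^*$, and in particular on that of $B$.

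The main obstacle is this last step, the triviality of every cycle of $B$. A direct ``arc-swap'' argument (splicing an arc of a cycle $C$ into a $u$-$v$ path and comparing the two resulting paths) takes care of cycles passing through $u$ or $v$ but becomes awkward for cycles $C \subseteq B - \{u,v\}$ that no $u$-$v$ path meets at two distinct vertices. The edge-adjunction trick bypasses this difficulty by converting the statement into a clean cycle-space fact in the 2-connected graph $B^*$, and uses the 2-connectivity of $G$ (not of $B$) in exactly the right way.
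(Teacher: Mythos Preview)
Your argument is correct, but it is organized quite differently from the paper's proof, and it is worth contrasting the two.

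The paper proves the three assertions in the opposite order. It first shows that $B$ contains no nontrivial cycle by a direct Menger-type argument: given a hypothetical nontrivial cycle $N\subseteq B$, two vertex-disjoint shortest paths from $Z$ to $N$ (guaranteed by $2$-connectivity of $G$) together with one of the two arcs of $N$ produce a \emph{trivial} proper path between two vertices of $V(B)\cap V(Z)$, contradicting the hypothesis. From this it deduces the absence of loops, and only then shows $|V(B)\cap V(Z)|=2$ by picking an interior reference vertex $v$ and proper walks from $v$ to three attachments $a,b,c$; pigeonhole gives two walks of equal $\lambda$-value, whose concatenation is a trivial closed walk that can be shortened to a trivial proper path because all cycles of $B$ are already known to be trivial.

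Your route inverts the dependencies. Your Steiner-tree identity $P_{ab}\oplus P_{bc}\oplus P_{ac}=\emptyset$ gives $|V(B)\cap V(Z)|\le 2$ directly at the level of \emph{paths}, so you never need to reduce a walk to a path and hence do not need the cycle-triviality beforehand. Conversely, your proof that every cycle of $B$ is trivial does use the fact that there are only two attachments: you form $B^*=B+e_{uv}$, check $2$-connectivity of $B^*$ from the $2$-connectivity of $G$, and invoke the standard fact that in a $2$-connected graph the cycles through a fixed edge span the $\Z_2$-cycle space. This is correct (for any cycle $C$ not containing $e_{uv}$, two disjoint $\{u,v\}$--$V(C)$ paths let one write $C$ as the symmetric difference of two cycles through $e_{uv}$), though the appeal to ``Whitney's ear decomposition'' is a slightly loose pointer; the Menger-based argument is the clean way to justify it.

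Either ordering works. Your first step is arguably slicker than the paper's (no walk-to-path reduction needed), while the paper's treatment of cycle triviality is more elementary than yours (no auxiliary edge and no cycle-space spanning lemma). One small point: when $B$ is a single edge the $B^*$ construction degenerates, but then there are no cycles and no loops anyway, so you may want to dispose of that case in one line before passing to $B^*$.
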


\begin{proof}
First, we show that there is no nontrivial cycle in $B$. For contradiction, 
there is a nontrivial cycle $N$ in $B$.
By the $2$-connectivity of $G$ there exist two vertex disjoint paths $p_1$ and
$p_2$ (possibly of length zero) that connect $Z$ to $N$. We consider the shortest
such paths; thus, each of the paths shares only one vertex with
$Z$ and one vertex with $N$. Let $y_1$ and $y_2$ be the endpoints of $p_1$ and
$p_2$ on $N$, respectively. 
Let $p_3$, $p_4$ be the two arcs of $N$ between $y_1$ and $y_2$. We consider
two paths $q_1$ and $q_2$ where $q_1$ is obtained from the concatenation of
$p_1$, $p_3$ and $p_2$, while $q_2$ is obtained from the concatenation of $p_1$,
$p_4$ and $p_2$. Since $N$ is non-trivial, one of these paths is trivial, which
provides the required contradiction.

Next, we observe that $B$ does not induce any loop in the inside arrow graph.
For contradiction, it induces a loop at a vertex $x$ of $Z$. This means that
there is a proper nontrivial walk $\kappa$ in $B$ with both endpoints $x$. We set up
$\kappa$ so that it is the shortest such walk. We already know that $\kappa$
cannot be a cycle, thus it contains a closed nonempty subwalk $\kappa'$ and we
set up $\kappa'$ so that it is the shortest such subwalk. Therefore, it must be
a cycle; by the previous part of this proof, it is trivial. However, it
means that $\kappa$ can be shortened by leaving out $\kappa'$, which is the
required contradiction.
 
Now, we show that $|V(B) \cap V(Z)| = 2$. By the $2$-connectedness of $G$, we have
that  $|V(B) \cap V(Z)| \geq 2$. Thus, for contradiction,
let $a,b,c$ be three distinct vertices of $V(B) \cap V(Z)$. Let $v$ be one of
the inner vertices of $B$ (there must be such a vertex since $B$ cannot be a
single edge in this case). By the definition of inside/outside bridges, there exist proper walks $p_a$, $p_b$ and $p_c$
 connecting $v$ to $a,b$ and $c$, respectively.
 By the pigeonhole principle, two of the walks have the same value of $\lambda$;
 without loss of generality, let them be $p_a$ and $p_b$.
 It follows that the proper walk obtained from the concatenation of $p_a$ and
 $p_b$ is trivial. Since $B$ does not contain any non-trivial cycle, this walk
 can be shortened to a trivial proper path between $a$ and $b$ by an analogous
 argument as in the previous paragraph. A contradiction.

 Finally, we know that there are two vertices in $V(B) \cap V(Z)$. Let $x$ and
 $y$ be these two vertices. Since any path connecting $x$ and $y$ is
 nontrivial, $B$ induces the arrow $\ar xy$ in $A^+$. No other arrow in $A^+$ is
 possible since there are no loops.
\ifconf
\qed
\fi
\end{proof}
\fi

Proposition~\ref{p:inductive_hypothesis} below is our main tool for deriving Theorem~\ref{thm:main} from
Theorem~\ref{t:black_box}. It is set up in such a way that it can be
inductively proved from Theorem~\ref{t:black_box}. Then it
implies Theorem~\ref{thm:main}, using the auxiliary lemmas from the beginning
of this section, relatively easily.

\begin{proposition}\label{p:inductive_hypothesis}
Let $(D,\lambda)$ be a projective HT-drawing of a $2$-connected graph $G$ on $S^2$ and 
$Z$ a cycle in~$G$ that is completely free of crossings in~$D$ and such that each of its edges is trivial in $D$. 
Assume that  $\zin$ or $\zout$ is empty.  Then $G$ can be embedded into $\RP^2$
so that $Z$ bounds a \ifconf disk \else \fi face of the resulting \ifconf
embedding. \else embedding homeomorphic to a disk. \fi If, in addition, $D$ is an ordinary HT-drawing on $S^2$, then $G$ can be embedded into $S^2$ so 
that $Z$ bounds a face of the resulting \ifconf embedding\else embedding (this face is again homeomorphic to a
disk---there is in fact no other option on $S^2$)\fi.\footnote{We need to consider
the case of ordinary HT-drawings in this proposition for a well working
induction.}
\end{proposition}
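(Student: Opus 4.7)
The plan is proof by induction on $|E(G)|$. The base case is forced by the hypotheses: $E(G)=E(Z)$ together with $2$-connectivity of $G$ gives $G=Z$, and embedding $Z$ as a simple cycle on $\RP^2$ (resp.\ $S^2$) immediately produces $Z$ as the boundary of a disk face. For the easier second part of the proposition---when $D$ is already an ordinary HT-drawing on $S^2$---I would handle things directly, without induction: Theorem~\ref{t:PSS_in_out} preserves $Z$ free of crossings, and $\zin=\emptyset$ then places all of $G$ inside the closed disk $\overline{S^-}$ bounded by $Z$; applying the planar Hanani--Tutte theorem inside this disk yields the $S^2$-embedding with $Z$ as the face.

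For the main inductive step in the $\RP^2$ case (assuming $G\neq Z$ and $\zin=\emptyset$), my plan is to find a cycle $Z^*\neq Z$ along which to apply Theorem~\ref{t:black_box}. I take a proper path $P$ in some outside bridge $B$ with endpoints $u,v\in V(Z)$, and set $Z^*=P\cup A$ for an arc $A$ of $Z$ from $u$ to $v$. When $\lambda(P)=0$, Lemma~\ref{l:planarize} (applied to $Z^*$, whose only cycle is trivial) makes every edge of $Z^*$ trivial, and Lemmas~\ref{lem:even} and~\ref{lem:simple} turn $Z^*$ into a simple cycle with all edges even. Theorem~\ref{t:black_box} applied to $Z^*$ then yields a drawing where $Z^*$ is crossing-free, $G=G_+\cup G_-$ with $G_+\cap G_-=Z^*$, and one side (say $G_-$) is an ordinary HT-drawing on $S^2$; I choose $A$ so that $A'=Z\setminus A$ lies in $G_-$, placing $Z$ entirely inside $G_-$. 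After re-establishing the preconditions of the proposition for $(G_-,Z)$ via the auxiliary lemmas, the inductive hypothesis yields embeddings of $G_+$ in $\RP^2$ (with $Z^*$ as a face, via $(G_+,Z^*)$) and of $G_-$ in $S^2$ (with $Z$ as a face, via $(G_-,Z)$); in the latter embedding, $Z^*$ also bounds a face, because the inside of $Z^*$ in $G_-$ is empty by Theorem~\ref{t:black_box}. I then glue the two embeddings along $Z^*$: removing the $Z^*$-face from the $\RP^2$-embedding of $G_+$ leaves a Möbius band, into which I paste the closed disk from $G_-$'s $S^2$-embedding to recover $\RP^2$. The $Z$-face of $G_-$'s embedding survives intact as the $Z$-face of the final embedding of $G$ in $\RP^2$.

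The main obstacle is the degenerate sub-case where no proper path $P$ in any outside bridge can have $\lambda(P)=0$: every such path is nontrivial. Lemma~\ref{l:single_arrow} then forces each nontrivial bridge to have exactly two attachment vertices on $V(Z)$ and contribute a single nontrivial arrow, with no loops. I would handle this case by a direct explicit construction on $\RP^2$: Lemma~\ref{l:disjoint_cycles_projective} (no two vertex-disjoint nontrivial cycles) severely restricts how the attachment pairs of such arrow-bridges interact on $Z$, which lets me place $Z$ as the boundary of a disk face and route each such single-arrow bridge through the crosscap in a consistent manner. A secondary technical difficulty is ensuring that both $G_+$ and $G_-$ are strictly smaller than $G$ (so induction terminates) and retain $2$-connectivity; this requires choosing $P$ carefully so that the bridge $B$ is not entirely absorbed into $Z^*$, and degenerate configurations (e.g., $G_+=Z^*$, or a $2$-cut induced by the split) are handled as additional base cases.
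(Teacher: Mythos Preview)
Your Case~2 (all proper paths nontrivial) matches the paper's approach and is fine in outline. The real problem is Case~1. You want to split $G$ along $Z^*=P\cup A$, apply induction to $(G_+,Z^*)$ and to $(G_-,Z)$, and glue along $Z^*$. For the gluing to make sense you need $Z^*$ to bound a face in the embedding of $G_-$ that induction hands you. Your justification---``the inside of $Z^*$ in $G_-$ is empty by Theorem~\ref{t:black_box}''---confuses the \emph{drawing} produced by Theorem~\ref{t:black_box} with the \emph{embedding} produced by the inductive hypothesis. The inductive hypothesis applied to $(G_-,Z)$ only promises that $Z$ bounds a face; it says nothing about $Z^*$. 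Conversely, if you apply induction to $(G_-,Z^*)$ you get $Z^*$ as a face but lose control of $Z$. You cannot get both simultaneously from a single invocation, and there is no reason the bridge containing $P$ should be embedded so that $P\cup A$ bounds a face. A second, related issue: you do not control which side of $Z^*$ becomes the ordinary one after Theorem~\ref{t:black_box}, so ``choose $A$ so that $A'$ lies in $G_-$'' is not a free choice---it was fixed when you defined $Z^*$.

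The paper resolves exactly this tension with a trick you are missing: it \emph{flattens} $Z$, identifying its two arcs into a single path~$p$, producing a new graph $\bar G$ with one fewer edge and a cycle $\bar Z=\gamma\cup p$. Now there is only one cycle to track: induction on each side gives $\bar Z$ as a face, the pieces glue along $\bar Z$, and then un-flattening $p$ back into two arcs automatically recovers $Z$ as a face of the glued embedding. This simultaneously guarantees termination (strictly fewer edges) and $2$-connectivity of both pieces. Finally, your shortcut for the ordinary-HT case also has a gap: the planar Hanani--Tutte theorem yields planarity of $G$, not an embedding in which the prescribed cycle $Z$ is a face; the paper handles this case by the same induction, using Theorem~\ref{t:PSS_in_out} in place of Theorem~\ref{t:black_box} so that both sides stay ordinary.
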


\ifconf
First we prove Theorem~\ref{thm:main} assuming the validity of
Proposition~\ref{p:inductive_hypothesis}. Then, we sketch a proof of the
proposition. See Proposition~24 in the full version of the paper for the complete proof.

\begin{proof}[of Theorem~\ref{thm:main}]
 We prove the result by induction on the number of vertices of $G$. We can
 trivially assume that $G$ has at least three vertices.

 If $G$ has at least two blocks of 2-connectivity, $G$ can be written as $G_1\cup G_2$,
 where $G_1\cap G_2$ is a minimal cut of $G$, and therefore, has at most one
 vertex. 
 By Lemma~\ref{l:2-connected}, we may assume that $G_1$ is planar and $G_2$ non-planar.
 By induction, there exists an embedding $D_2$ of $G_2$ into $\R P^2$.
 So $G_1$ is planar, $G_2$ is embeddable in~$\RP^2$, and $G_1\cap G_2$ has
 at most one vertex.  From these two embeddings, we easily derive an
 embedding of  $G=G_1\cup G_2$ in~$\RP^2$.

We are left with the case when $G$ is $2$-connected. By Observation~\ref{obs:no_trivial_cycle},
we may assume that there is at least one trivial cycle $Z$ in $(D,\lambda)$.
We can also make each of its edges trivial by Lemma~\ref{l:planarize} 
and even by Lemma~\ref{lem:even}. In addition, we make $Z$ simple using Lemma~\ref{lem:simple}.
Hence $G$, $Z$ and the current projective HT-drawing satisfy the
separation assumptions.

Then we use $Z$ to
redraw $G$ as follows. At first, we apply Theorem~\ref{t:black_box} to get a projective HT-drawing $(D', \lambda')$ 
that separates $\gin$ and $\gout$. We define $\inn{D}:=D'(\gin)$ and $\out{D}:=D'(\gout)$---without loss of generality, 
$\out{D}$ is an ordinary HT-drawing on $S^2$, while $\inn{D}$ is a projective HT-drawing on $S^2$. 

Next, we apply 
Proposition~\ref{p:inductive_hypothesis} above 
to $\inn{D}$ and $\out{D}$ separately. Thus, we get embeddings of $\gin$ and $\gout$---one of them in $S^2$, the
other one in $\RP^2$. In addition, $Z$ bounds a face in both of them; hence, 
we can easily glue them to get an embedding of the whole graph $G$ into $\RP^2$.
\qed
\end{proof}
\fi

\ifconf
\begin{proofsketch}[of Proposition~\ref{p:inductive_hypothesis}]
\else
\begin{proof}
\fi
The proof proceeds by induction on the number of edges of $G$.  
The base case is when $G$ is a cycle.

Without loss of generality, we assume that $\zout$ is empty. 
That is, $G=\gin$. If $\zin$ is also empty,
$G$ consists only of $Z$ and such a graph can easily be embedded into the plane
or projective plane as required. Therefore, we assume that $\zin$ is nonempty.

We find a path $\gamma$ in $(V(\gin), E(\gin)\setminus E(Z))$ connecting two points $x$ and $y$ lying on $Z$. 
We may choose $x,y$ so that $x\neq y$ since $G$ is $2$-connected.

\paragraph[\texorpdfstring{Case 1: There exists a trivial $\gamma$.}{Case 1: There exists a trivial gamma.}]{Case 1: There exists a trivial $\gamma$.}
\ifconf
We provide only a very brief sketch for this case. 
First, we achieve, without redrawing $Z$, 
that $\gamma$ is drawn as a simple path and every edge of
$\gamma$ is even and trivial. This can be done by steps similar to those in the proof of
Theorem~\ref{thm:main}.
Since $\gamma$ is inside $Z$, now it splits
the interior of $Z$ into two disks. Once we carefully identify the two arcs of $Z$
determined by the endpoints of $\gamma$, we get a cycle (in a different graph) separating the two
disks. This way, we achieve essentially the same situation as in the proof of
Theorem~\ref{thm:main} and we can resolve it using
Theorem~\ref{t:black_box}.

\else
First we solve the case that at least one such path $\gamma$ is trivial.
We show that all edges of $\gamma$ can be made even and simple in the drawing while 
preserving simplicity of $Z$, the fact that $Z$ is free of crossings and the projective Hanani--Tutte 
condition on the whole drawing of $\gin$. 

As the first step, we use Lemma~\ref{l:planarize} in order to achieve that
$\lambda(e) = 0$ for any edge $e$ of $Z$ and $\gamma$ simultaneously. By
inspecting the proof of Lemma~\ref{l:planarize} we see that we can achieve this
by vertex-crosscap switches only over the inner vertices of $\gamma$ (for this,
we set up the root in the proof to be one of the endpoints of $\gamma$). In
particular we can perform these vertex-crosscap switches inside $Z$ without
affecting $Z$.

Now, we want to make the edges of $\gamma$ even, again without affecting $Z$.
First, for any pair $(e,f)$ of adjacent edges of $\gamma$ which intersect
oddly, we locally perform the move c) from Fig.~\ref{f:local_changes} similarly as in 
Lemma~\ref{lem:even}. Next, we consider any edge $e\notin E(\gamma)$ adjacent to a vertex $u\in V(\gamma)\setminus V(Z)$. 
For such an edge we eventually perform one of the moves a) or b) from
Fig.~\ref{f:local_changes} so that we achieve that $e$ intersects evenly each
of the two edges of $\gamma$ incident with $u$. Finally, we consider any edge
$e\notin E(\gamma)\cup E(Z)$ adjacent to $u\in\{x,y\}$, one of the endpoints of $\gamma$ on $Z$.
Let $f$ be the edge of $\gamma$ incident with $u$. If $e$ and $f$ intersect oddly, we
perform the move from Fig.~\ref{f:local_changes2}. This is possible since $Z$
is free of crossings. This way we achieve that every edge of $\gamma$ is even.

\begin{figure}
\begin{center}
  \includegraphics{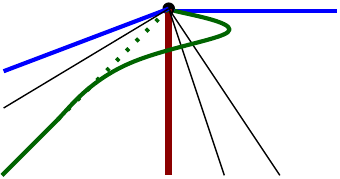}
  \caption{Local changes at $u$. The original drawing of the edge is dotted, $Z$ is depicted in blue, $\gamma$ in red.
  The changed edge in green.}
  \label{f:local_changes2}
\end{center}
\end{figure}

As the last step of the redrawing of $\gamma$, we want to make $\gamma$ simple (again
without affecting $Z$). This can be done in the same way as in
Lemma~\ref{lem:simple}. We almost-contract all edges of $\gamma$ but one so
that there is only one edge of $\gamma$ that intersects with other edges. Then
we remove eventual self-intersections.

The rest of the argument is easier to explain if we switch inside and outside (this is easily doable by a homeomorphism of
$S^2$) and treat drawings on $S^2$ as drawings in the plane. 

We may assume that after the homeomorphism $Z$ is drawn in the plane as
a circle with the inner region empty and with $x$ and $y$ antipodal. 
The vertices $x$ and $y$ split $Z$ into two paths; we denote by $p_1$ the
`upper' one and by $p_2$ the `lower' one.  We may also assume that $\gamma$ 
is `above' $p_1$ by eventually adapting the initial choice of the
correspondence between $S^2$ and the plane.

Now we continuously deform the plane so that $Z$ becomes flatter and flatter
until it coincides with the line segment connecting $x$ to $y$, as depicted in
Fig.~\ref{f:flattening} a).
We may further require that no inner vertex of $p_1$ was identified with any inner vertex of $p_2$.
\begin{figure}
\begin{center}
  \includegraphics{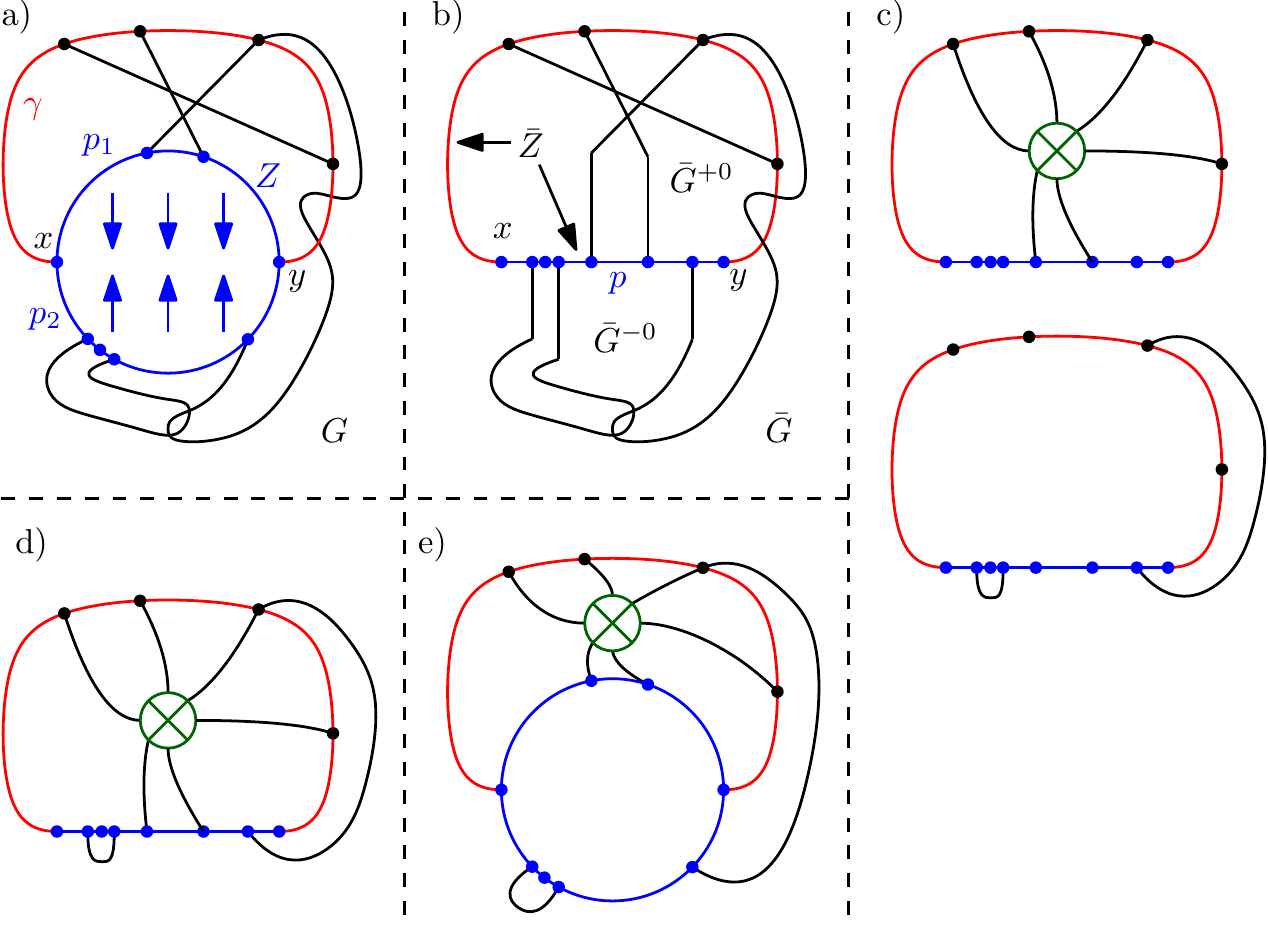}
  \caption{The deformation of the plane that changes $G$ into $\bar{G}$, the redrawing of $\bar{G}$ and the resulting embeddings of $\bar G$ and $G$.}
  \label{f:flattening}
\end{center}
\end{figure}

This way, we get a projective HT-drawing $(\bar D,\bar \lambda)$ of a new graph
$\bar G$: all the vertices
of $G$ remain present in $\bar G$, that is, $V(G) = V(\bar G)$. Also the edges of $G$ which are not on $Z$
are present in $\bar G$. Only some of the edges of $Z$ may disappear and they are
replaced with edges forming a path $p$ between $x$ and $y$. Note that we
did not introduce any multiple edges, because there is no edge in $G$
connecting an inner vertex of $p_1$ with an inner vertex of $p_2$. It also
turns out that $\bar{G}$ has one edge less than $G$. Regarding $\bar \lambda$, we have
$\lambda(e) = \bar \lambda(e)$ if $e$ is an edge of $E(G) \setminus E(Z)$ and we
have $\bar \lambda(e) = 0$ if $e$ belongs to $p$.

Now consider the cycle $\bar Z$ in $\bar G$ formed by $\gamma$ and $p$. It is trivial and
simple. In particular, we distinguish the inside and the outside according to
Definition~\ref{d:in_out}. For example, $\bar G^{+0}$ corresponds to the part
of $G$ in between $\gamma$ and $p_1$ before the flattening; see
Fig.~\ref{f:flattening} a) and b). 

Now, we apply Theorem~\ref{t:black_box} and we get 
a drawing $D'$ of $\bar G$. When we look at the two sides of $\bar G$
separately, we get that the drawing of one of the sides, say the
drawing of $\bar G^{+0}$, is a projective HT-drawing, while there is an ordinary
HT-drawing on $S^2$ on the other side. If, in addition, $D$ were already an
ordinary HT-drawing, we get an ordinary HT-drawing on both sides by
Theorem~\ref{t:PSS_in_out}.

Note also that since $G$ was $2$-connected, both parts of
$\bar G$ are $2$-connected as well. Subsequently, we examine each of these two
parts separately and use the inductive hypothesis; we obtain an embedding of
$\bar G^{+0}$ into $\RP^2$ such that $\bar Z$ bounds a face homeomorphic to a
disk as well as an embedding of $\bar G^{-0}$ into $S^2$ such that
$\bar Z$ bounds a face homeomorphic to a disk. If, in addition, $D$ were already
an ordinary HT-drawing, we get also the required embedding of $\bar G^{+0}$
into $S^2$.
We merge these two embeddings
along $\bar Z$ obtaining an embedding of $\bar G$ into $\RP^2$ (or $S^2$ if
$D$ were an ordinary HT-drawing). See Fig.~\ref{f:flattening} c) and d).

Finally, we need to undo the identification of $p_1$ and $p_2$ into $p$.
Whenever we consider a vertex $v$ on $p$ different from $x$ and $y$, it is
uniquely determined whether it comes from $p_1$ or $p_2$. In addition, if $v$
comes from $p_1$, then any edge $e \in E(G) \setminus E(Z)$ incident with $v$
must belong to $\bar G^{+0}$. Similarly, if $v$ 
comes from $p_1$, then any edge $e \in E(G) \setminus E(Z)$ incident with $v$
must belong to $\bar G^{-0}$. Therefore, it is possible to undo the
identification and we get the required embedding of $G$. See
Fig.~\ref{f:flattening} e).

\fi

\paragraph[\texorpdfstring{Case 2: All choices of $\gamma$ are nontrivial.}{Case 2: All choices of gamma are nontrivial.}]{Case 2: All choices of $\gamma$ are nontrivial.}
\ifconf
Now, we need to resolve the case when all possible choices of $\gamma$ are
nontrivial. Let $A^{+0}$ be the graph obtained from the inside arrow graph $A^+$ 
by adding the edges of $Z$ (in particular, $Z$ is a subgraph of $A^{+0})$. 
We aim to show that $A^{+0}$ admits an embedding in $\RP^2$ such that $Z$ 
bounds a disk face. As soon as we show this, we aim to replace the embedding of 
each arrow of $A^{+0}$ by an embedding of the inside bridges inducing this
arrow (if there are more such bridges, we embed them in parallel). The key
fact that makes it possible is that each inside bridge meets
$Z$ in exactly two points and induces a single arrow and no loop. (Here, we
leave this fact without a proof.) We also need to check that each of the
bridges, together with $Z$, admits an embedding. This follows from
Proposition~\ref{p:redrawings} for inside fans and from Case 1 of this proof.

It remains to sketch why $A^{+0}$ admits the required embedding.
We know that any two disjoint arrows interleave using Lemma~\ref{l:impossible}(b).
Let us consider two concentric closed disks $E_1$ and $E_2$ such that $E_1$
belongs to the interior of $E_2$. Let us draw $Z$ to the boundary of $E_2$. Let
$a$ be the number of arrows of $A^+$ and let us consider $2a$ points on the
boundary of $E_1$ forming the vertices of a regular $2a$-gon. These points will be
marked by ordered pairs $xy$ where $\ar xy$ is an inside arrow. We mark 
the points so that the cyclic order of the points respects the cyclic order
on $Z$ in the first coordinate (the pairs with the same first
coordinate are consecutive). However, for a fixed $x$, the pairs $xy_1,
\dots, xy_k$ corresponding to all arrows emanating from $x$ are ordered in
the reverted order when compared with the order of $y_1, \dots, y_k$ on $Z$.

\begin{center}
\includegraphics{only_nontrivial_conference}
\end{center}

It is not hard to check that the points marked $xy$ and $yx$ are precisely
the opposite points.
Now, we get the required drawing in the following way. For any arrow $\ar xy$ we
connect $x$ with the point marked $xy$ and $y$ with $yx$. We can do
all the connections simultaneously for all arrows without introducing any crossing since we
have respected the cyclic order in the first coordinate. We remove the interior
of $E_1$ and identify the opposite points on its boundary. This way we introduce a crosscap.
Finally, we glue another disk along its boundary to $Z$ and we get the required
drawing on $\RP^2$.
\qed
\end{proofsketch}

\else

Now we deal with the situation when all possible choices of $\gamma$ are nontrivial.
We will first analyse which situations allow such configuration. Later we will show
how to draw each of these situations.

Let us consider the inside arrow graph $A^+$. 
Since all choices of $\gamma$ are nontrivial, Lemma~\ref{l:single_arrow} shows
that every inside bridge induces a single inside arrow. This allows us to
redraw inside bridges separately as is provided by the following claim.

\begin{claim}\label{claim:planar_bridge}
For any inside bridge $B$ there exists a planar drawing of $Z\cup B$
in which $Z$ is the outer face.
\end{claim}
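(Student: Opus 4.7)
My plan is to reduce the claim to the planarity of the graph $B^{*}:=B+e_{xy}$ obtained by adding one new edge $e_{xy}$ between the two attachment vertices $x,y$ of $B$. Lemma~\ref{l:single_arrow}, whose hypothesis holds in Case~2, gives $V(B)\cap V(Z)=\{x,y\}$, and its proof also establishes that every cycle in $B$ is trivial. Once we know that $B^{*}$ is planar, we take a planar embedding of $B^{*}$ in which $e_{xy}$ lies on the boundary of the outer face, delete $e_{xy}$, and draw the cycle $Z$ (through its remaining vertices) in the newly-merged outer region so that $Z$ encircles $B$; this produces the required planar embedding of $Z\cup B$ with $Z$ bounding the outer face.

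To exhibit a projective HT-drawing of $B^{*}$ in which every cycle is trivial, I work inside the subgraph $Z\cup B$ of $G$. Because every cycle of $B$ is trivial, there exists $\epsilon\colon V(B)\to\mathbb{Z}_{2}$ with $\lambda(uv)=\epsilon(u)+\epsilon(v)$ for every edge $uv\in E(B)$; normalizing $\epsilon(x)=0$, the nontriviality of every $x$--$y$ path in $B$ forces $\epsilon(y)=1$. I then perform vertex-crosscap switches at $U:=\{v\in V(B)\setminus\{x,y\}:\epsilon(v)=1\}$. Since $U\cap V(Z)=\emptyset$, no edge of $Z$ is incident to a switched vertex and every resulting finger-move concerns an edge of $B$ pulled over a vertex of $B$, both of which lie inside the disk bounded by $Z$; thus $\lambda$ is unchanged on $E(Z)$, the cycle $Z$ stays simple and crossing-free, and $B$ stays drawn inside $Z$. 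The new drawing $(D',\lambda')$ satisfies $\lambda'(f)=0$ for every edge $f\in E(B)$ with no endpoint in $\{x,y\}$, since $\epsilon'\equiv 0$ on $V(B)\setminus\{x,y\}$. I extend $D'$ by routing $e_{xy}$ as a simple arc through the outside region $S^{-}$, which contains no edges of $G$ since $\gout$ is empty by hypothesis, and set $\lambda'(e_{xy}):=1$; the projective HT condition $\crno(e_{xy},f)=\lambda'(e_{xy})\lambda'(f)$ holds automatically, because every $f$ independent from $e_{xy}$ satisfies $\lambda'(f)=0$ while $e_{xy}$ has no crossings.

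Every cycle of $B^{*}$ is then trivial under $\lambda'$: cycles lying inside $B$ stay trivial (triviality is preserved by vertex-crosscap switches, as noted in Section~\ref{s:ht_drawings}), and any cycle through $e_{xy}$ consists of $e_{xy}$ together with an $x$--$y$ path in $B$ of $\lambda'$-sum $\epsilon(x)+\epsilon(y)=1$, giving a total of $1+1=0$. Applying Lemma~\ref{l:planarize} to $B^{*}$ with $P=B^{*}$ now produces a projective HT-drawing with $\lambda\equiv 0$, that is, an ordinary Hanani--Tutte drawing of $B^{*}$ on $S^{2}$; the strong Hanani--Tutte theorem in the plane then yields planarity of $B^{*}$, completing the proof. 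The delicate point, which I expect to require the most care, is ensuring that the vertex-crosscap switches at $U$ simultaneously leave $Z$ intact and drive $\lambda$ to zero on every edge of $B$ independent from $e_{xy}$; restricting $U$ to internal vertices of $B$ and exploiting the $\epsilon$-structure provided by the triviality of cycles in $B$ is exactly what makes this possible.
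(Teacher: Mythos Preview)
Your proof is correct but takes a genuinely different route from the paper's.

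The paper argues that since $B$ induces a single arrow, the graph $Z\cup B$ forms an inside fan; it then invokes Proposition~\ref{p:redrawings} (the fan case) to obtain an \emph{ordinary} HT-drawing of $Z\cup B$ in which $Z$ is still the outer cycle, and finally feeds that ordinary HT-drawing back into Case~1 of Proposition~\ref{p:inductive_hypothesis} (which is already established at this point) to produce the planar embedding with $Z$ bounding a face. So the paper's argument is a two-step reduction through already-proved machinery: first Proposition~\ref{p:redrawings}, then the inductive Case~1.

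Your argument bypasses both steps. You add the auxiliary edge $e_{xy}$, use the potential function $\epsilon$ (available because all cycles of $B$ are trivial) together with vertex-crosscap switches to make every edge of $B^*$ independent from $e_{xy}$ trivial, observe that all cycles of $B^*$ are trivial, and then appeal directly to Lemma~\ref{l:planarize} and the planar strong Hanani--Tutte theorem to conclude that $B^*$ is planar. The final embedding of $Z\cup B$ is then obtained by elementary surgery on a planar drawing of~$B^*$. This is more self-contained: it avoids the fan-redrawing lemma and the recursive appeal to Case~1, at the price of the explicit $\epsilon$-computation and the small topological argument at the end (which you leave terse---drawing $Z$ in the merged face so that it bounds a disk on one side works because that face is itself a disk, $B$ being connected). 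Both approaches ultimately rest on the same structural input from Lemma~\ref{l:single_arrow}, namely that $B$ meets $Z$ in exactly two vertices and contains no nontrivial cycle.
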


\begin{proof}
 Since we know that $B$ induces only a single arrow, we get that 
 $Z\cup B$ forms an inside fan, according to
 Definition~\ref{d:redrawable}. It follows from Proposition~\ref{p:redrawings}
 that $Z\cup B$ admits an ordinary HT-drawing such that $Z$ is an outer cycle.
 However, the setting of ordinary HT-drawings is already fully resolved in Case~1.
That is, we may already use Proposition~\ref{p:inductive_hypothesis} for this
 drawing and we get the required conclusion.
\end{proof}

We consider the graph $A^{+0}$ obtained from $A^+$ by adding the
edges of $Z$ to it, where $A^+$ is the inside arrow graph. (Note that $V(A^+) = V(Z)$ according to our definition of
the arrow graph.)

Our main aim will be to find an embedding of $A^{+0}$ to $\RP^2$ such that $Z$
bounds a face. As soon as we reach this task, then we can replace an embedding
of each arrow by the embedding of inside bridges inducing this arrow via
Claim~\ref{claim:planar_bridge} in a close neighbourhood of the arrow. If there are, possibly, 
more inside bridges inducing the arrow,
then they are embedded in parallel. 

Finally, we show that it is possible to embed $A^{+0}$ in the required way. 
By Lemma~\ref{lem:disjoint_arrows_interleave}, any two disjoint arrows interleave.

Let us consider two concentric closed disks $E_1$ and $E_2$ such that $E_1$
belongs to the interior of $E_2$. Let us draw $Z$ to the boundary of $E_1$. Let
$a$ be the number of arrows of $A^+$ and let us consider $2a$ \emph{points} on the
boundary of $E_1$ making the vertices of regular $2a$-gon. These points will
marked by ordered pairs $(x, y)$ where $\ar xy$ is an inside arrow. We mark 
the points so that the cyclic order of the points respect the cyclic order as
on $Z$ in the first coordinate (in particular pairs with the same first
coordinate are consecutive). However, for a fixed $x$, the pairs $(x,y_1),
\dots (x,y_k)$ corresponding to all arrows emanating from $x$ are ordered in
the reverted order when compared with the order of $y_1, \dots, y_k$ on $Z$.
 See Fig.~\ref{f:only_nontrivial}.

We show that it follows that the points marked $(x, y)$ and $(y, x)$
are directly opposite on $E_1$ for every inside arrow $\ar xy$. For
contradiction, let us assume that $(x, y)$ and $(y, x)$ are not
directly opposite for some $\ar xy$. Then there is another arrow $\ar uv$ such
that $(x, y)$ and $(y, x)$ do not interleave with $(u, v)$ and
$(v, u)$. Indeed, such an arrow must exist because the arrows induce a
matching on the points, and $(x, y)$ and $(y, x)$ do not split the
points equally. However, if $\ar xy$ and $\ar uv$ do not share an endpoint, we
get a contradiction with the fact that disjoint arrows interleave. If $\ar xy$
and $\ar uv$ share an endpoint, we get a contradiction that we have reverted
the order on the second coordinate.

Now, we get the required drawing in the following way. For any arrow $\ar xy$ we
connect $x$ with the point $(x, y)$ and $y$ with $(y, x)$. We can do
all the connections simultaneously for all arrows without introducing any crossing since we
have respected the cyclic order on the first coordinate. We remove the interior
of $E_1$ and we identify the boundary. This way we introduce a crosscap.
Finally, we glue another disk along its boundary to $Z$ and we get the required
drawing on $\RP^2$.
\begin{figure}
\begin{center}
  \includegraphics{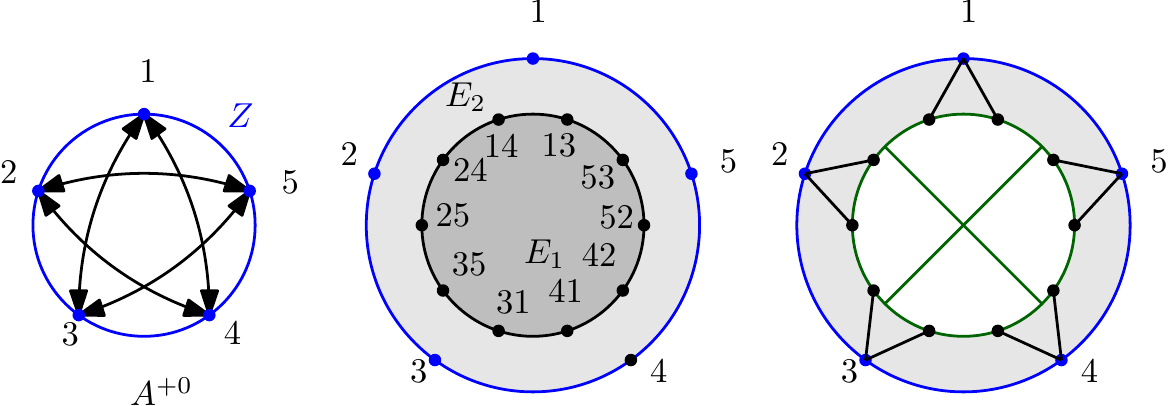}
  \caption{Redrawing the case where every inside bridge induces a single
  arrow.}
  \label{f:only_nontrivial}
\end{center}
\end{figure}
\end{proof}
\fi

\ifconf
\else

Finally, we prove Theorem~\ref{thm:main}.
\begin{proof}[Proof of Theorem~\ref{thm:main}]
 We prove the result by induction in the number of vertices of $G$. We can
 trivially assume that $G$ has at least three vertices.

 If $G$ has at least two blocks of 2-connectivity, $G$ can be written as $G_1\cup G_2$,
 where $G_1\cap G_2$ is a minimal cut of $G$ and, therefore, has at most one
 vertex. 
 By Lemma~\ref{l:2-connected} we may assume that $G_1$ is planar and $G_2$ non-planar.
 By induction, there exists an embedding $D_2$ of $G_2$ into $\R P^2$.
 So $G_1$ is planar, $G_2$ is embeddable into~$\RP^2$ and $G_1\cap G_2$ has
 at most one vertex.  From these two embeddings, we easily derive an
 embedding of  $G=G_1\cup G_2$ in~$\RP^2$.

We are left with the case when $G$ is $2$-connected. By Observation~\ref{obs:no_trivial_cycle},
we may assume that there is at least one trivial cycle $Z$ in $(D,\lambda)$.
We can also make each of its edges trivial by Lemma~\ref{l:planarize} 
and even by Lemma~\ref{lem:even}. Then we make $Z$, in addition, simple using Lemma~\ref{lem:simple}.
Hence $G$, $Z$ and the current projective HT-drawing satisfy the
separation assumptions.

Then we use $Z$ to
redraw $G$ as follows. At first, we apply Theorem~\ref{t:black_box} to get a projective HT-drawing $(D', \lambda')$ 
that separates $\gin$ and $\gout$. We define $\inn{D}:=D'(\gin)$ and $\out{D}:=D'(\gout)$---without loss of generality, 
$\out{D}$ is an ordinary HT-drawing on $S^2$, while $\inn{D}$ is a projective HT-drawing on $S^2$. 

Finally, we apply 
Proposition~\ref{p:inductive_hypothesis} above 
to $\inn{D}$ and $\out{D}$ separately. Thus, we get embeddings of $\gin$ and $\gout$---one of them in $S^2$, the
other one in $\RP^2$. In addition, $Z$ bounds a face in both of them; hence, 
we can easily glue them to get an embedding of the whole graph $G$ into $\RP^2$.
\end{proof}

\fi

\ifconf 

\paragraph{Acknowledgment.} We would like to thank Alfredo Hubard for fruitful
discussions and valuable comments.

\else 
\section{Labellings of Inside/Outside Bridges and the~Proof of
Pro\-position~\ref{p:redrawable_exist}}
\label{s:labellings}

In this section, given an inside (or outside) bridge $B$, we first describe
what are possible combinations of arrows induced by $B$. Then we use the
obtained findings for a proof of Proposition~\ref{p:redrawable_exist}, assuming
validity of Lemmas~\ref{lem:arrows_touch}, \ref{lem:disjoint_arrows_interleave}
and~\ref{l:no_triangle_arrows} which will be proved in Sect.~\ref{s:forbidden_arrows}.

\paragraph{Labelling the vertices of the inside/outside bridges.}
We start with the first step. As usual, we only describe
the `inside' case; the `outside' case will be analogous. We introduce certain
labellings of $V(B) \cap V(Z)$ which will help us to determine arrows.

\begin{definition}[Labelling of $V(B)\cap V(Z)$]\label{def:labels}
  A \emph{valid} labelling $L = L_B$ for $B$ is a mapping $L\colon V(B)\cap
  V(Z)\rightarrow\{\{0\},\{1\},\{0,1\}\}$ obtained in the following way.

If $V(B) \setminus V(Z) \neq \emptyset$ we pick a \emph{reference vertex} $v_B \in V(B) \setminus
V(Z)$ for $L$. Then we fix a \emph{labelling parameter} $\alpha_B \in \Z_2$ for $L$.
Finally, for any $u \in V(B)\cap V(Z)$ and for any proper walk $\omega$ with endpoints $u$
and $v_B$, the vertex $u$ receives the
label $\alpha_B + \lambda(\omega) \in \Z_2$. Note that $u$ may receive two labels after considering all such walks. On the other hand, each vertex of $V(B) \cap V(Z)$ obtains at least one label, which follows from the definition of bridges (Definition~\ref{d:bridge}).

If $V(B) \subseteq V(Z)$, then $B$ comprises only of one edge $e = uv$
  connecting two vertices of $V(Z)$. In such case, there are two valid
  labellings for $B$. We set $L(u) = \{\alpha_B\}$ and $L(v) = \{\lambda(e) +
  \alpha_B\}$ for a chosen labelling parameter $\alpha_B \in \Z_2$.
\end{definition}

If the bridge $B$ is understood from the context we may write just $v$ instead of $v_B$ for the reference vertex and $\alpha$ instead of $\alpha_B$ for the labelling parameter. By alternating the choice of $\alpha$ in the definition we may swap all labels.
This means that there are always at least two valid labellings for a given inside bridge.
On the other hand, a different choice of the reference vertex either does not influence the resulting labelling,
or has the same effect as swapping the value of the labelling parameter $\alpha$.
In other words, there are always exactly two valid labellings of the given inside/outside bridge $B$ corresponding to two possible choices of the labelling parameter $\alpha$, as is explained below.

To see this, consider a vertex $u\in V(B)\setminus V(Z)$ different from $v=v_B$. By Definition~\ref{d:bridge}, there is a proper $uv$-walk $\gamma$ in $B$ not using any vertex of $Z$. Now, for any $x\in V(B)\cap V(Z)$ and for any proper $xv$-walk $\omega_{xv}$ in $B$, the concatenation of the walks $\omega_{xv}$ and $\gamma$ is a proper $xu$-walk in $B$ of type $\lambda(\omega_{xv})+\lambda(\gamma)$. Also, for any proper $xu$-walk $\omega_{xu}$ in $B$, the concatenation of the walks $\omega_{xu}$ and $\gamma$ is a proper $xv$-walk in $B$ of type $\lambda(\omega_{xu})+\lambda(\gamma)$.
As a result, choosing $u$ as the reference vertex with $\alpha+\lambda(\gamma)$ as the labelling parameter leads to the same labelling as the choice of $v$ as the reference vertex with the labelling parameter $\alpha$.

The idea presented above can be used to establish the following simple observation, which we later use several times in the proofs.
\begin{observation}\label{o:reference_vertex}
Let $B$ be an inside or an outside bridge containing at least one inside/outside vertex. Moreover, let $L$ be a valid labelling for $B$ and $v$ the reference vertex for $L$. Let $x,y\in V(B)$ and let $\omega$ be a proper $xy$-walk in $B$. Then there is a proper $xy$-walk $\omega'$ in $B$ containing the reference vertex $v$ such that $\lambda(\omega)=\lambda(\omega')$.
\end{observation}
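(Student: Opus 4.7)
The core idea is simple: to force a proper walk to visit~$v$, insert a \emph{round trip} to~$v$ at some vertex of~$\omega$. Since a round trip traverses each of its edges twice (or rather, its $\lambda$-value is counted twice), it contributes $0$ modulo~$2$ to the type, so the resulting walk has the same $\lambda$-value as the original. The only delicate point is that the inserted walk and the resulting concatenation must still be \emph{proper}, i.e.\ no internal vertex may lie in~$V(Z)$.

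The plan is to distinguish cases according to whether $\omega$ contains a vertex outside $V(Z)$. Recall that since $\omega$ is proper, all of its internal vertices lie in $V(B)\setminus V(Z)$; only the endpoints $x,y$ may lie on~$Z$. In the main case, assume $\omega$ contains some vertex $u\in V(B)\setminus V(Z)$ (either internal or an endpoint). By Definition~\ref{d:bridge}, there exists a proper walk $\delta$ in~$B$ from $u$ to the reference vertex~$v$; note that because neither~$u$ nor~$v$ lies in~$V(Z)$, such a $\delta$ can be required to avoid $V(Z)$ entirely. Then the walk $\omega'$ obtained from~$\omega$ by inserting $\delta$ followed by its reverse $\delta^{-1}$ at~$u$ is still a proper $xy$-walk in~$B$, passes through~$v$, and has $\lambda(\omega')=\lambda(\omega)+2\lambda(\delta)=\lambda(\omega)$ in~$\Z_2$.

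It remains to handle the degenerate case in which every vertex of~$\omega$ lies in~$V(Z)$. Since internal vertices of a proper walk cannot belong to~$V(Z)$, this forces~$\omega$ to have no internal vertices; thus either $\omega$ has length~$0$ (so $x=y\in V(Z)$), or $\omega$ consists of a single edge $e=xy$ with $x,y\in V(Z)$. In both sub-cases, we use that~$v\in V(B)\setminus V(Z)$ guarantees the existence of a proper $xv$-walk $\gamma$ in~$B$. In the length-$0$ case, take $\omega':=\gamma\cdot\gamma^{-1}$; in the single-edge case, take $\omega':=\gamma\cdot\gamma^{-1}\cdot e$. Either way $\omega'$ is a proper $xy$-walk through~$v$ with $\lambda(\omega')=\lambda(\omega)$.

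The main potential obstacle is verifying that the concatenations remain \emph{proper}, i.e.\ that no new internal occurrence of a $V(Z)$-vertex is created. This is automatic once we select~$\delta$ (respectively~$\gamma$) to avoid $V(Z)$ except at its $V(Z)$-endpoint (if any), which is possible exactly because $v\notin V(Z)$ and because of the connectivity built into the definition of a bridge. With that checked, the identity $\lambda(\omega')=\lambda(\omega)$ is immediate from the fact that each inserted edge appears an even number of times in the round trip.
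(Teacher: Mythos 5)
Your proof follows the same strategy as the paper's: detour the walk $\omega$ through a round trip to the reference vertex~$v$, and note that the inserted round trip contributes $0$ to $\lambda$ modulo~$2$. The main case is handled correctly, and your observation that $\delta$ can be chosen to avoid $V(Z)$ entirely (because $u$ and $v$ both lie in the connected component of $G - V(Z)$ inside $B$) is the right justification for properness.

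There is, however, a flaw in your degenerate case. You consider the possibility that $\omega$ is a single edge $e = xy$ with $x, y \in V(Z)$, but this sub-case cannot occur: since $B$ contains at least one inside/outside vertex, it is a bridge of the second kind in Definition~\ref{d:bridge}, and every edge of such a bridge has at least one endpoint outside $V(Z)$. The paper's proof notes precisely this, concluding that the only degenerate possibility is $x = y$. It is fortunate that your extra sub-case is vacuous, because the walk $\omega' := \gamma\cdot\gamma^{-1}\cdot e$ you propose there would not be proper: after the round trip to $v$, the vertex $x \in V(Z)$ would reappear as an \emph{internal} vertex of $\omega'$ before the final edge $e$, violating the definition of a proper walk. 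So the proof is correct overall, but only because the broken branch is unreachable; the cleaner route, as in the paper, is to argue that the single-edge sub-case is impossible rather than to attempt to handle it.
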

\begin{proof}
If $\omega$ contains inside/outside vertices, we choose one of them and denote it by $u$. If it does not contain any such vertex, then $x\in V(Z)$ and $x=y$, since $B$ cannot consist of just one edge. In this case we choose $u=x$.

Now we find a proper $uv$-walk $\gamma$ in $B$ and use it as a detour. More precisely, $\omega'$ starts at $x$ and follows $\omega$ to the first occurrence of $u$ in $\omega$. Then it goes to $v$ and back along $\gamma$. Finally, it continues to $y$ along $\omega$. It is clear that $\lambda(\omega)=\lambda(\omega')$. By the choice of $u$, the walk $\omega'$ is also proper.
\ifconf
\qed
\fi
\end{proof}

Now, whenever $u$ and $w$ are two vertices from $V(B) \cap V(Z)$, there is an
arrow $\ar{u}{w}$ arising from $B$ if and only if the vertices $u$ and $w$ were assigned
different labels by $L_B$---this is proved in Proposition~\ref{prop:arrows_labels} below.

\begin{proposition}\label{prop:arrows_labels}
Let $B$ be an inside bridge and $L$ be a valid labelling for $B$. Let $x, y
\in V(B) \cap V(Z)$ (possibly $x = y$).
Then the
inside arrow graph $A^+$ contains an arrow $\ar xy$ arising from $B$ if and only if $L(x) \cup L(y) = \{0,1\}$.
\end{proposition}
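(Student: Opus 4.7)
The plan is to use the reference vertex $v = v_B$ as a relay between the labels at $x$ and the arrows out of $x$, and similarly for $y$. The main observation is that a walk from $x$ to $y$ passing through $v$ decomposes as a concatenation of an $xv$-walk and a $vy$-walk, whose $\lambda$-values add (in $\Z_2$), while labels are (up to the common shift $\alpha$) precisely the $\lambda$-values of proper walks from a vertex to $v$. So $L(x)\cup L(y)=\{0,1\}$ will be equivalent to the existence of proper walks $\omega_x$ from $x$ to $v$ and $\omega_y$ from $v$ to $y$ with $\lambda(\omega_x)+\lambda(\omega_y)=1$, which is exactly the existence of a proper nontrivial $xy$-walk in $B$ passing through $v$.

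First I would dispose of the degenerate case $V(B)\subseteq V(Z)$: then $B$ is a single edge $uv$, and the only nontrivial walk from $u$ to $v$ is the one traversing $e$, giving an arrow iff $\lambda(e)=1$, which matches the definition of the labelling. No proper nontrivial closed walk exists, so there is no loop, and accordingly each label is a singleton, so $L(u)\cup L(u)\neq\{0,1\}$. After that I would assume $V(B)\setminus V(Z)\neq\emptyset$, so that the reference vertex $v$ exists.

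For the backward implication ($L(x)\cup L(y)=\{0,1\}\Rightarrow$ arrow exists), I would take $i\in L(x)$ and $j\in L(y)$ with $i\neq j$, so there exist proper walks $\omega_x$ from $x$ to $v$ and $\omega_y$ from $y$ to $v$ with $\alpha+\lambda(\omega_x)=i$ and $\alpha+\lambda(\omega_y)=j$, whence $\lambda(\omega_x)+\lambda(\omega_y)=1$. The concatenation $\omega_x\cdot\overline{\omega_y}$ is a walk in $B$ from $x$ to $y$ of $\lambda$-value $1$; its internal vertices lie in $B$ and, since $\omega_x$ and $\omega_y$ are proper, none of them lies in $V(Z)$ (aside possibly from $x$ and $y$ themselves at the endpoints). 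Hence it is proper and nontrivial, giving the arrow $\ar{x}{y}$ arising from $B$. This also covers $x=y$, producing a loop.

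For the forward implication (arrow $\Rightarrow L(x)\cup L(y)=\{0,1\}$), assume there is a proper nontrivial walk $\omega$ in $B$ from $x$ to $y$. By Observation~\ref{o:reference_vertex} I may replace $\omega$ by a proper walk $\omega'$ from $x$ to $y$ passing through $v$ with $\lambda(\omega')=\lambda(\omega)=1$. Split $\omega'$ at its first occurrence of $v$ into $\omega_x$ (from $x$ to $v$) and $\omega_y'$ (from $v$ to $y$); both are proper since $\omega'$ is, and $\lambda(\omega_x)+\lambda(\omega_y')=1$. The labels $\alpha+\lambda(\omega_x)\in L(x)$ and $\alpha+\lambda(\omega_y')\in L(y)$ therefore differ, so $L(x)\cup L(y)\supseteq\{0,1\}$, hence equals $\{0,1\}$. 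The only step needing care is ensuring that the split walks are still proper and that Observation~\ref{o:reference_vertex} applies even when $x=y$; in the loop case $\omega$ is closed at $x$, and the same detour-through-$v$ argument produces two walks from $x$ to $v$ of different types, so $L(x)=\{0,1\}$ as required. No step is truly hard; the main subtlety is bookkeeping the `proper' condition through the concatenation and the split, which is handled cleanly by Observation~\ref{o:reference_vertex}.
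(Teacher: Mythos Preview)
Your proof is correct and follows essentially the same approach as the paper's own proof: handle the single-edge bridge case directly, then for the general case use the reference vertex~$v$ as a relay, concatenating proper $xv$- and $vy$-walks of differing $\lambda$-value for the backward direction, and invoking Observation~\ref{o:reference_vertex} to route a given nontrivial proper $xy$-walk through~$v$ before splitting it for the forward direction. Your treatment is slightly more explicit about the loop case and the properness bookkeeping, but the argument is the same.
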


\begin{proof}
  It is straightforward to check the claim if $B$ is just an edge $e$. Indeed, if
  $x \neq y$, then $e=xy$, and it defines the arrow $\ar xy$ arising from $B$
  if and only if $\lambda(e)=1$, which in turn happens if and only if $L(x)\cup
  L(y)=\{0,1\}$ according to Definition~\ref{def:labels}. If $x = y$, then
  $\ar xx$ is not induced by $B$ and $|L(x) \cup L(x)| = 1$.

If $V(B) \setminus V(Z) \neq \emptyset$, let $v=v_B$ be the reference vertex for $L$.
First, let us assume that $L(x) \cup L(y) =
\{0,1\}$. Let us consider a proper $xv$-walk $\omega_{xv}$ and a proper $vy$-walk $\omega_{vy}$ in $B$ 
such that $\lambda(\omega_{xv}) \neq \lambda(\omega_{vy})$. Such walks
exist by Definition~\ref{def:labels}, since $L(x) \cup L(y) =
\{0,1\}$. Then the concatenation of these two walks is a nontrivial walk which
belongs to $W^+_{xy,B}$; therefore, $\ar xy$ is induced by $B$.

On the other hand, let us assume that there is a nontrivial walk $\omega$
in $W^+_{xy,B}$ defining the arrow $\ar xy$. We can assume that $\omega$ is not just an
edge, because it would mean that $B$ consists only of that edge.
By Observation~\ref{o:reference_vertex}, we may assume that $\omega$ contains the reference vertex $v$.
This vertex splits $\omega$ into two proper walks $\omega_1$ and $\omega_2$ so that each of them has at least
one edge. Since $\lambda(\omega)=1$, we have $\lambda(\omega_1)\neq \lambda(\omega_2)$.
Consequently, $L(x) \cup L(y) = \{0, 1\}$.
\ifconf
\qed
\fi
\end{proof}

The argument from the last two paragraphs of the proof above can also be used
to establish the following lemma.

\begin{lemma}
\label{l:single_labels}
Let $B$ be an inside or an outside bridge, let $L$ be a valid labelling for
$B$, and let $x,y \in V(B) \cap V(Z)$ be two distinct vertices. Moreover, we
assume that $|L(x)| = |L(y)| = 1$. Then for any proper $xy$-walks $\omega_1$,
$\omega_2$ in $B$ we have $\lambda(\omega_1) = \lambda(\omega_2)$.
\end{lemma}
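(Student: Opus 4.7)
The plan is to reduce to walks passing through the reference vertex, so that the definition of the labelling directly computes $\lambda$. First I would dispose of the degenerate case $V(B)\subseteq V(Z)$. Here $B$ consists of a single edge $e$ with both endpoints in $V(Z)$; since $x\neq y$, necessarily $e=xy$. Any walk in $B$ can use only $e$, and to go from $x$ to $y$ it must use $e$ an odd number of times. But any intermediate vertex would have to be $x$ or $y$, both lying in $V(Z)$, which is forbidden for a proper walk. So the only proper $xy$-walk is $e$ itself, and $\lambda(\omega_1)=\lambda(\omega_2)=\lambda(e)$.

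For the main case $V(B)\setminus V(Z)\neq\emptyset$, let $v=v_B$ be the reference vertex and $\alpha=\alpha_B$ the labelling parameter of $L$, and write $L(x)=\{a\}$, $L(y)=\{b\}$ for some $a,b\in\Z_2$. For each $i\in\{1,2\}$, I would first apply Observation~\ref{o:reference_vertex} to replace $\omega_i$ by a proper $xy$-walk $\omega_i'$ in $B$ that contains $v$ and satisfies $\lambda(\omega_i')=\lambda(\omega_i)$. Splitting $\omega_i'$ at its first occurrence of $v$ yields a proper $xv$-walk $\sigma_i$ and a proper $vy$-walk $\tau_i$; each half is still proper because $v\notin V(Z)$ and no internal vertex of $\omega_i'$ lies in $V(Z)$.

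By Definition~\ref{def:labels}, the value $\alpha+\lambda(\sigma_i)$ belongs to $L(x)=\{a\}$, hence $\lambda(\sigma_i)=a+\alpha$; similarly $\lambda(\tau_i)=b+\alpha$. Adding the two contributions in $\Z_2$ gives
\[
\lambda(\omega_i)\;=\;\lambda(\omega_i')\;=\;\lambda(\sigma_i)+\lambda(\tau_i)\;=\;a+b,
\]
which is independent of $i$, proving the claim.

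I do not anticipate any real obstacle: the only subtlety is verifying that the split halves $\sigma_i$ and $\tau_i$ remain proper, which is immediate once Observation~\ref{o:reference_vertex} has been used to route $\omega_i$ through the reference vertex $v\notin V(Z)$.
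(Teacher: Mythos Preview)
Your proof is correct and follows essentially the same approach as the paper: both handle the single-edge bridge separately, invoke Observation~\ref{o:reference_vertex} to route the walks through the reference vertex $v$, and then split at $v$ to compute $\lambda$ via the labelling. Your version is simply more explicit in writing out $\lambda(\sigma_i)=a+\alpha$ and $\lambda(\tau_i)=b+\alpha$, whereas the paper just observes that all proper $xv$-walks (resp.\ $vy$-walks) share the same $\lambda$-value since $|L(x)|=|L(y)|=1$.
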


\begin{proof}
If $B$ contains just the edge $xy$, the observation is trivially true.
Therefore, we assume that there is the inside/outside reference vertex $v\in V(B)$ for $L$.
By the assumption, every two proper $xv$-walks in $B$
have the same $\lambda$-value. The same holds also for proper $vy$-walks in $B$.
By Observation~\ref{o:reference_vertex}, we can assume that
both $\omega_1$ and $\omega_2$ contain $v$. Then the lemma follows.
\ifconf
\qed
\fi
\end{proof}

We will also need the following description of inside arrows induced by an inside bridge which
does not induce any loop.

\begin{lemma}
\label{l:bipartite}
 Let $B$ be an inside bridge which does not induce any loop. 
 Then the inside arrows induced by $B$ form a complete bipartite graph. (One of
 the parts is empty if $B$ does not induce any arrow.)
\end{lemma}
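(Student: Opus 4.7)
The plan is to derive this lemma directly from Proposition~\ref{prop:arrows_labels} by examining the possible label values under the no-loop assumption. Fix a valid labelling $L = L_B$ of $V(B) \cap V(Z)$ as in Definition~\ref{def:labels}. By Proposition~\ref{prop:arrows_labels} applied in the degenerate case $x = y$, the bridge $B$ induces a loop at $x$ if and only if $L(x) \cup L(x) = \{0,1\}$, which is the same as saying $L(x) = \{0,1\}$. Since we assume $B$ induces no loop, the first step is to conclude that $|L(x)| = 1$ for every $x \in V(B) \cap V(Z)$.

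Having ruled out the double label, I would then partition the set $V(B) \cap V(Z)$ into two classes according to this unique label, setting
\[
V_0 := \{x \in V(B) \cap V(Z) : L(x) = \{0\}\}, \qquad V_1 := \{x \in V(B) \cap V(Z) : L(x) = \{1\}\}.
\]
By Proposition~\ref{prop:arrows_labels}, for any two (not necessarily distinct) vertices $x, y \in V(B) \cap V(Z)$ the arrow $\ar{x}{y}$ is induced by $B$ if and only if $L(x) \cup L(y) = \{0,1\}$. Under the single-label conclusion of the previous step, this occurs exactly when one of $x, y$ lies in $V_0$ and the other in $V_1$. Hence the set of arrows induced by $B$ coincides with the edge set of the complete bipartite graph on parts $V_0$ and $V_1$, with the convention that this graph is empty if one of the parts is empty (which in particular covers the case where $B$ induces no arrow at all).

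There is no real obstacle here: the argument is a one-line consequence of the label description from Proposition~\ref{prop:arrows_labels}, together with the observation that forbidding loops is precisely what forces every label to be a singleton. The only mild care needed is to remember that the proposition also applies to the case $x = y$, so that ``no loop'' is what translates into the crucial singleton property $|L(x)| = 1$ used to produce a clean bipartition.
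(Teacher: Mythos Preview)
Your proof is correct and follows essentially the same approach as the paper: fix a valid labelling, use Proposition~\ref{prop:arrows_labels} with $x=y$ to deduce $|L(x)|=1$ from the no-loop hypothesis, and then apply Proposition~\ref{prop:arrows_labels} again to see that the arrows induced by $B$ are exactly those joining the $0$-labelled vertices to the $1$-labelled ones. The paper's version is terser but identical in substance.
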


\begin{proof}
  Let us consider a valid labelling $L$ for $B$. By
  Proposition~\ref{prop:arrows_labels}, $\abs{L(x)} = 1$ for any $x \in V(B) \cap
  V(Z)$, since $B$ does not induce any loop. By
  Proposition~\ref{prop:arrows_labels} again, the inside arrows induced by $B$
  form a complete bipartite graph, in which one part corresponds to the vertices labelled
  $0$ and the second part corresponds to the vertices labelled $1$.
\ifconf
\qed
\fi
\end{proof}

We conclude this section a by a proof of Proposition~\ref{p:redrawable_exist}.

\begin{proof}[Proof of Proposition~\ref{p:redrawable_exist}]
We need to distinguish few cases. 

First, we consider the case when we have two disjoint inside arrows, but at least one of
them is a loop. In this case, it is easy to see that
Lemma~\ref{lem:arrows_touch} implies that $G$ forms the
outside fan and we are done.

\bigskip

Second, let us consider the case that we have two disjoint inside arrows $\ar
ab$ and $\ar cd$ which are not loops. Lemma~\ref{lem:arrows_touch} implies that
the only possible outside arrows are $\ar ac$, $\ar ad$, $\ar bc$, $\ar bd$.
(In particular, there are no loops outside.) If there are not two disjoint arrows
outside, then $G$ forms an outside fan and we are done.
Therefore, we may assume that there are two disjoint arrows outside, without
loss of generality, $\ar ac$ and $\ar bd$ (otherwise we swap $a$ and $b$).
By swapping outside and inside in the previous argument, we get that only
further possible arrows inside are $\ar ad$ and $\ar bc$.

Now we distinguish a subcase when there is an inside bridge inducing the
inside arrows $\ar ab$ and $\ar cd$. In this case, $\ar ad$ and $\ar bc$ must be
inside arrows as well by Lemma~\ref{l:bipartite}.
By Lemma~\ref{lem:arrows_touch}, we know that $\ar ac$ and $\ar bd$ are the only
outside arrows and we get that they must alternate by
Lemma~\ref{lem:disjoint_arrows_interleave}. That is, up to relabelling of the
vertices, we get the right cyclic order for an inside square. In order to check
that $G$ indeed forms an inside square, it remains to verify that $G$ has only
one nontrivial inside bridge. The inside arrows are $\ar ab$, $\ar bc$, $\ar
cd$ and $\ar ad$. If any of these arrows, for example $\ar ab$, is induced by
two bridges, then we get a contradiction with
Lemma~\ref{lem:disjoint_arrows_interleave}, in this case on arrows $\ar ab$ and
$\ar cd$.

By swapping inside and outside we solve the subcase when there is an outside
bridge inducing the outside arrows $\ar ac$ and $\ar bd$; we get that $G$ forms
an outside square.

It remains to consider the subcase when $\ar ab$ and $\ar cd$ arise from
different inside bridges and $\ar ac$ and $\ar bd$ arise from different
outside bridges. However, Lemma~\ref{lem:disjoint_arrows_interleave} applied
to the inside and then to the outside reveals that these two events cannot
happen simultaneously.

Consequently, we have proved Proposition~\ref{p:redrawable_exist} in case there are two disjoint inside arrows. Analogously, we resolve the case when we have two disjoint arrows outside.

\bigskip

Finally, we consider the case when every pair of inside arrows
shares a vertex and every pair of outside arrows shares a vertex. If there is a
vertex $v$ common to all the inside arrows, then we get an inside fan and we are done.

It remains to consider the last subcase when there is no vertex common to all
inside arrows while every pair of inside arrows shares a vertex. This leaves
the only option that there are three distinct vertices $a$, $b$ and $c$ on $Z$ and all
three inside arrows $\ar ab, \ar ac$ and $\ar bc$ are present. Then, the
only possible outside arrows are $\ar ab, \ar ac$ and $\ar bc$ as well due
to Lemma~\ref{lem:arrows_touch}. In addition, all three outside arrows $\ar
ab$, $\ar ac$ and $\ar bc$ must be present, otherwise we have an outside fan 
and we are done.

In the present case, an inside bridge can induce at most two
arrows by Lemma~\ref{l:bipartite}. Let us consider the three pairs
of arrows $\{\ar ab, \ar ac\}$, $\{\ar ab, \ar bc\}$, and $\{\ar ac, \ar bc\}$.
If at most one of these pairs is induced by an inside bridge, then $G$ forms
an inside split triangle and we are done. Analogously, we are done, if at most
one of these pairs is induced by an outside bridge. Therefore, it remains to
consider the case that at least two such pairs are induced by inside bridges
and at least two such pairs are induced by outside bridges. However, this
yields a contradiction to Lemma~\ref{l:no_triangle_arrows}.
\ifconf
\qed
\fi
\end{proof}

\section{Forbidden Configurations of Arrows}
\label{s:forbidden_arrows}

In this section we show that certain combinations of arrows are not possible.
That is, we prove Lemmas~\ref{lem:arrows_touch},
\ref{lem:disjoint_arrows_interleave} and~\ref{l:no_triangle_arrows}.
As before, we have a fixed graph $G$, its drawing $(D,\lambda)$ on $S^2$ and a cycle $Z$ in $G$.
Again, we assume that $G, (D,\lambda)$ and $Z$ satisfy the separation assumptions.

\paragraph{Homology and intersection forms.}
We start with a brief explanation of intersection forms that will help us to
prove the required lemmas.

We assume that the reader is familiar with basics of homology theory, otherwise
we refer to the introductory books by Hatcher~\cite{Hatcher2002} or Munkres~\cite{Munkres2000}. 
We always work with homology
over $\Z_2$ and, unless stated otherwise, we work with singular homology.
Let $S$ be a surface. We will mainly work with the first homology group
and we denote by $B_1(S)$, $Z_1(S)$ and $H_1(S) := Z_1(S)/B_1(S)$ the
group of \emph{$1$-boundaries}, of \emph{$1$-cycles} and the first
\emph{homology group}, respectively. Given a $1$-cycle $z \in Z_1(S)$, if there
is no risk of confusion, we also consider it as an element of $H_1(S)$,
although, formally speaking, we should consider its homology class $[z]$.
Similarly, if there is no risk of confusion, we do not distinguish a 1-cycle
and its support. Namely, by an intersection of two $1$-cycles we actually mean
an intersection of their images. We use the same convention for crossings, that
is, transversal intersections.

Let $S$ be a surface. The \emph{intersection form} on $S$ is a unique bilinear
map $\Omega_S\colon H_1(S) \times H_1(S) \to \Z_2$ with the following property.
Whenever $z_1, z_2 \in Z_1(S)$ are two $1$-cycles intersecting in finite number
of points and crossing in every such point (i.~e., intersecting transversally), then $\Omega_S(z_1, z_2)$ is the
number of crossings of $z_1$ and $z_2$ modulo $2$; we refer to \cite[Sect. 8.4]{Fuchs2004} for the existence of $\Omega_S$.
In particular,
$\Omega_{S^2}$ is the trivial map since $H_1(S^2)$ is trivial. On the other
hand, $\Omega_{\RP^2}$ is already nontrivial:

\begin{lemma}[Intersection form on $\R P^2$]\label{lem:2cycles}
Let $z_1$ and $z_2$ be two homologically nontrivial $1$-cycles in $\R P^2$.
Then $\Omega_{\RP^2}(z_1, z_2) = 1$. In particular, if $z_1$ and $z_2$ have a
finite number of intersections and they cross at every intersection, then they
have to cross an odd number of times.
\end{lemma}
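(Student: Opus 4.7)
The plan is to exploit that $H_1(\RP^2;\Z_2)\cong\Z_2$ is one-dimensional, so the bilinear form $\Omega_{\RP^2}$ is determined by its value on a single pair. Concretely, since every homologically nontrivial $1$-cycle represents the unique nonzero class $[\gamma]\in H_1(\RP^2)$, both $z_1$ and $z_2$ are homologous to $\gamma$, and bilinearity gives
\[
\Omega_{\RP^2}(z_1,z_2)=\Omega_{\RP^2}([\gamma],[\gamma]).
\]
So the whole statement reduces to verifying that $\Omega_{\RP^2}([\gamma],[\gamma])=1$.

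To evaluate this self-intersection, I would exhibit a single explicit pair of transversally intersecting nontrivial $1$-cycles that cross in an odd number of points, and then invoke the defining property of $\Omega_{\RP^2}$. Using the crosscap model from Section~\ref{s:ht_drawings}, represent $\RP^2$ as a disk $B$ with antipodal boundary identifications. Pick two distinct diameters of $B$; after the identification, each diameter becomes a simple closed curve $c_i$ in $\RP^2$, and the two curves meet transversally in exactly one point, namely the center of $B$. Each $c_i$ is homologically nontrivial because, by construction, it crosses the crosscap $\gamma$ exactly once (the boundary identification contributes a single crossing with any reference nontrivial curve), so by the defining property of $\Omega_{\RP^2}$ we would have $\Omega_{\RP^2}(c_i,\gamma)=1$, forcing $[c_i]=[\gamma]\neq 0$. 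Hence
\[
\Omega_{\RP^2}([\gamma],[\gamma])=\Omega_{\RP^2}([c_1],[c_2])=1,
\]
as desired.

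The ``In particular'' clause then follows immediately: if $z_1$ and $z_2$ have finitely many intersections and cross at every intersection, the number of crossings modulo $2$ is, by definition, $\Omega_{\RP^2}(z_1,z_2)=1$, so the number of crossings is odd.

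The main (minor) obstacle is the verification that the two explicit diameters really do give nontrivial homology classes; I would handle this by appealing to the interpretation of $\lambda$ from Section~\ref{s:ht_drawings}, where a simple closed curve in $\RP^2$ is nontrivial precisely when it crosses the crosscap an odd number of times. Everything else is formal manipulation of a rank-one bilinear form, which needs no real computation.
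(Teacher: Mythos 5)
Your proof is correct and follows the same strategy as the paper: both observe that since $H_1(\RP^2;\Z_2)\cong\Z_2$, the bilinear form is determined by a single evaluation, and then reduce to exhibiting one transversal pair of nontrivial $1$-cycles crossing oddly. The paper leaves that exhibition as ``an easy task,'' while you supply the explicit pair of diameters and (slightly circuitously, but correctly) verify their nontriviality via $\Omega_{\RP^2}(c_i,\gamma)=1$.
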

\begin{proof}
  Since the intersection form $\Omega_{\RP^2}$ depends only on the homology
  class, and since $H_1(\RP^2) = \Z_2$, it is sufficient to exhibit any two nontrivial $1$-cycles that
  intersect an odd number of times on $\RP^2$. This is an easy task.
\ifconf
\qed
\fi
\end{proof}

\paragraph{From sphere to the projective plane.} Although it is overall simpler
to do the proof of Theorem~\ref{thm:main} in the setting of
projective HT-drawings on $S^2$, it is easier to prove
Lemmas~\ref{lem:arrows_touch},
\ref{lem:disjoint_arrows_interleave} and~\ref{l:no_triangle_arrows} in the
setting of HT-drawings on $\RP^2$. A small drawback is that we need to check
that splitting of $S^2$ to the inside and outside part works analogously on
$\RP^2$ as well.

\begin{lemma}
  \label{l:split_RP2}
  Let $(D, \lambda)$ be a projective HT-drawing of a graph $G$ on $S^2$ and let 
  $Z$ be a cycle satisfying the separation assumptions.
  Let $D_\otimes$ be the
  HT-drawing of $G$ on $\RP^2$ coming from the proof of
  Lemma~\ref{l:S2_to_RP2}. Then $D_\otimes(Z)$ is a simple cycle such that
  each of its edges is even, which
  separates $\RP^2$ into two parts, $(\RP^2)^+$ and $(\RP^2)^-$. In addition,
  every inside edge (with respect to $D$) which is incident to a vertex of $Z$
  points locally into $(\RP^2)^+$ in $D_\otimes$ as well as every outside edge
  (with respect to $D$) which is incident to a vertex of $Z$ points locally
  into $(\RP^2)^-$.
\end{lemma}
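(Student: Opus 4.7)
The plan is to trace through the construction of $D_\otimes$ given in the proof of Lemma~\ref{l:S2_to_RP2} and repeatedly exploit the hypothesis $\lambda(e) = 0$ for every $e \in E(Z)$. First, since this hypothesis means that no edge of $Z$ is subjected to a finger-move, the drawing $D_\otimes$ coincides with $D$ on $Z$; in particular, $D_\otimes(Z) = D(Z)$ is a simple cycle. I would also arrange the finger-moves on the other edges to be supported in narrow tubes that avoid small fixed neighborhoods of the vertices of $Z$, so that the germs of all edges at each vertex $v \in V(Z)$ are the same in $D$ and $D_\otimes$.

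For the evenness of the edges of $Z$, the identity $\crno_{D_\otimes}(e,f) = \crno_D(e,f) + \lambda(e)\lambda(f)$ from Lemma~\ref{l:S2_to_RP2} reduces, for $e \in E(Z)$ and any other edge $f$ of $G$, to $\crno_D(e,f) = 0$, since $\lambda(e) = 0$ and $e$ is even in $D$.

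To show that $D_\otimes(Z)$ separates $\RP^2$, I would appeal to the identity $\lambda(Z) = \sum_{e \in E(Z)} \lambda(e) = 0$: as recalled in Section~\ref{s:ht_drawings}, this forces $D_\otimes(Z)$ to be homologically trivial in $\RP^2$, and a homologically trivial simple closed curve on $\RP^2$ is two-sided, hence separating. Call the two resulting components $(\RP^2)^+$ and $(\RP^2)^-$.

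For the last assertion, I would choose the small disk $B$ used in constructing $D_\otimes$ (whose boundary becomes the crosscap) to lie entirely in $S^- \setminus D(G)$, which is possible because $B$ may be taken arbitrarily small. Under this choice, the open region $S^+$ is untouched by the construction and is canonically identified with one of the two components of $\RP^2 \setminus D_\otimes(Z)$; I label this component $(\RP^2)^+$. Because the germ at every vertex $v \in V(Z)$ is unchanged, any edge pointing locally into $S^+$ at $v$ in $D$ still points locally into $(\RP^2)^+$ at $v$ in $D_\otimes$, and likewise for outside edges and $(\RP^2)^-$. The one point requiring care---verifying that the finger-moves and the redrawing near the crosscap can be supported away from small neighborhoods of the vertices of $Z$---was flagged in the first paragraph and is routine once the disk $B$ is chosen small enough.
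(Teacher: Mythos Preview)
Your proof is correct and follows essentially the same approach as the paper, which simply says to inspect the construction in Lemma~\ref{l:S2_to_RP2} and read off the conclusions; you have just spelled out the details that the paper leaves implicit (that edges of $Z$ are not moved because $\lambda(e)=0$, that finger-moves preserve germs at vertices, and that placing the crosscap disk $B$ in $S^-$ identifies $S^+$ with one complementary region of $D_\otimes(Z)$). The only remark is that your extra stipulation of placing $B$ in $S^-$ is a convenient normalization rather than a necessity: for any placement of $B$, the germs at $V(Z)$ are still preserved, and one simply labels $(\RP^2)^+$ to be whichever side the inside germs point into.
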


\begin{proof}
  By statement of Lemma~\ref{l:S2_to_RP2} we already know that $D_\otimes(Z)$
  is a simple cycle and that each of its edges is even. For the rest, we need to inspect the construction
  of $D_\otimes$ in the proof of Lemma~\ref{l:S2_to_RP2}. However, we get all
  the required conclusions directly from this construction.
\ifconf
\qed
\fi
\end{proof}

\paragraph{Drawings of walks.}
We also need to set up a convention regrading drawings of walks in a graph $G$.
Let $D$ be a drawing of a graph $G$ on a surface $S$. Let $\omega$ be a walk in
$G$. Then $D$ induces a continuous
map $D(\omega)\colon [0,1] \to S$; it is given by the concatenation of drawings
of edges of $\omega$. Here we also allow that $\omega$ is a walk of length $0$
consisting of a single vertex $v$. Then $D(\omega)$ is a constant map whose
image is $D(v)$. If $\omega$ is a closed walk, then we may regard it as an
element of $H_1(S)$. 

\paragraph{Proofs of the lemmas.} 
Now we have introduced enough tools to prove the required lemmas. In all three
proofs, $D_\otimes$ stands for the HT-drawing on $\RP^2$ from Lemma~\ref{l:split_RP2}.
First, we prove
Lemma~\ref{lem:disjoint_arrows_interleave} which has a very simple proof.
In fact, we prove slightly stronger statement which we plan to reuse later on.
Lemma~\ref{lem:disjoint_arrows_interleave} follows directly from
Lemma~\ref{l:noninterleaving_walks_intersect} below.

\begin{lemma}
\label{l:noninterleaving_walks_intersect}
Let $a$, $b$, $x$ and $y$ be four distinct vertices of $Z$ such that $x$ and $y$ are on
the same arc of $Z$ when split by $a$ and $b$. Then any two walks
$\omega^+_{ab} \in W^+_{ab}$ and $\omega^+_{xy} \in W^+_{xy}$ must share a
vertex.
\end{lemma}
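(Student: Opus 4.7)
The plan is to argue by contradiction using the intersection form on $\RP^2$. Assume that $\omega^+_{ab}$ and $\omega^+_{xy}$ share no vertex, and pass to the companion HT-drawing $D_\otimes$ of $G$ on $\RP^2$ given by Lemma~\ref{l:split_RP2}. Split $Z$ at $a$ and $b$ into two arcs $\alpha_1,\alpha_2$ with $\{x,y\}\subset \alpha_1$, and let $\beta\subseteq \alpha_1$ be the subarc with endpoints $x$ and $y$. The key idea is to complete each proper inside walk into a closed walk using arcs of~$Z$: let $C_1$ be the concatenation of $\omega^+_{ab}$ with $\alpha_2$ (traversed from $b$ back to $a$), and $C_2$ the concatenation of $\omega^+_{xy}$ with $\beta$.

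Since every edge of $Z$ is trivial we have $\lambda(\alpha_2)=\lambda(\beta)=0$, while $\lambda(\omega^+_{ab})=\lambda(\omega^+_{xy})=1$ by membership in $W^+_{ab}$ and $W^+_{xy}$, so $\lambda(C_1)=\lambda(C_2)=1$. By the construction of $D_\otimes$ in Lemma~\ref{l:S2_to_RP2}, $\lambda$ computes the parity of crossings with the crosscap, and the crosscap is the generator of $H_1(\RP^2)$; hence both $C_1$ and $C_2$ carry homologically nontrivial $1$-cycles in $D_\otimes$. Lemma~\ref{lem:2cycles} then yields $\Omega_{\RP^2}(C_1,C_2)=1$, i.e.\ after a small transversalizing perturbation $C_1$ and $C_2$ cross an odd number of times.

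The contradiction comes from counting the crossings edge by edge. By hypothesis $\omega^+_{ab}$ and $\omega^+_{xy}$ are vertex-disjoint; the arcs $\alpha_2$ and $\beta$ lie on disjoint open portions of $Z$ so they do not cross in the simply-drawn cycle~$Z$; the walk $\omega^+_{ab}$ meets $Z$ only at $a,b\notin \beta$, and $\omega^+_{xy}$ meets $Z$ only at $x,y\notin \alpha_2$. Hence every pair of edges $(e_1,e_2)$ with $e_1\in E(C_1)$ and $e_2\in E(C_2)$ is independent in~$G$, so the HT-property of $D_\otimes$ forces $\crno_{D_\otimes}(e_1,e_2)=0$. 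Summing these contributions by bilinearity of $\Omega_{\RP^2}$ gives an even mutual crossing count, contradicting the previous paragraph.

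The main subtlety I would address explicitly is that, as parametrized closed curves, $C_1$ has self-touches at $a$ and $b$ and $C_2$ has self-touches at $x$ and $y$, so a priori it is not obvious that edge-by-edge counting equals $\Omega_{\RP^2}(C_1,C_2)$. The remedy is a standard local perturbation of $C_1$ near $\{a,b\}$ and of $C_2$ near $\{x,y\}$: because the other curve avoids these four points, the perturbation does not change the mod-$2$ intersection count, and after it both curves become immersed transversal $1$-cycles to which Lemma~\ref{lem:2cycles} directly applies. A minor companion point is to treat edges appearing with multiplicity in a walk by working with $\Z_2$-coefficients, under which repeated edges either cancel out or remain single and the bilinearity argument is unaffected.
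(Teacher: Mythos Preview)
Your proof is correct and follows essentially the same approach as the paper: form the two closed walks by closing up each nontrivial proper walk with the arc of $Z$ that avoids the endpoints of the other, observe both are homologically nontrivial in $D_\otimes$, and derive a contradiction from the intersection form because all edge pairs between the two closed walks are independent in the HT-drawing. Your treatment of the self-touchings at $a,b,x,y$ and of edge multiplicities is more explicit than the paper's, but the argument is the same.
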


\begin{proof}

We consider a closed walk $\kappa^+_{ab}$ arising
from a concatenation of the walk $\omega^+_{ab}$ and the arc of $Z$ connecting $a$ and $b$ not containing
$x,y$. We also consider the closed walk $\kappa^+_{xy}$ obtained
analogously. 
See Fig.~\ref{f:non_interleaving}. 
The homological 1-cycles corresponding to
$D_\otimes(\kappa^+_{ab})$ and $D_\otimes(\kappa^+_{xy})$ are both non-trivial;
therefore, by Lemma~\ref{lem:2cycles}, $D_\otimes(\kappa^+_{ab})$ and
$D_\otimes(\kappa^+_{xy})$ must have an odd number of crossings.
(Note
that, for example, $D_\otimes(\kappa^+_{ab})$ may have self-intersections or
self-touchings, but there is a finite number of intersections between
$D_\otimes(\kappa^+_{ab})$ and
$D_\otimes(\kappa^+_{xy})$ which are necessarily crossings.)
However, if $\omega^+_{ab} \in W^+_{ab}$ and $\omega^+_{xy} \in W^+_{xy}$ did
not have a vertex in common, then $D_\otimes(\kappa^+_{ab})$ and $D_\otimes(\kappa^+_{xy})$
would have an even number of crossings, because $D_\otimes$ is an HT-drawing by
Lemma~\ref{l:S2_to_RP2}.
\ifconf
\qed
\fi
\end{proof}

\begin{figure}
\begin{center}
  \includegraphics{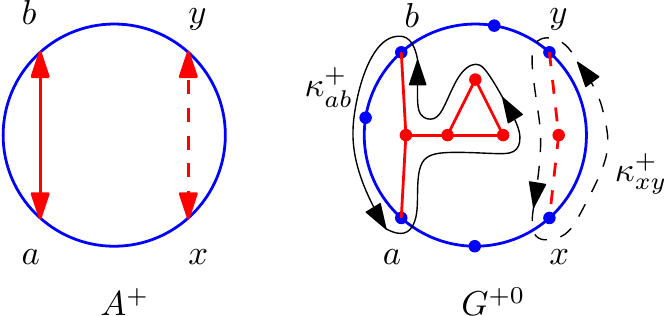}
\caption{Walks in Lemma~\ref{l:noninterleaving_walks_intersect}.}
\label{f:non_interleaving}
\end{center}
\end{figure}

We have proved Lemma~\ref{lem:disjoint_arrows_interleave} and we continue with
the proofs of the next two lemmas.

\begin{proof}[Proof of Lemma~\ref{lem:arrows_touch}]
To the contrary, we assume that we have an inside arrow $\ar{x}{y}$ and an
outside arrow $\ar{u}{v}$ which do not share any endpoint. However, we allow $x
= y$ or $u = v$, that is, we allow loops.
As before, we consider a closed walk $\kappa^+_{xy}$ obtained from the
concatenation of a walk from $\omega^+_{xy} \in W^+_{xy}$ and any of the two arcs of $Z$ connecting $x$
and $y$. If $x = y$, then we do not add the arc from $Z$.
Analogously, we have a closed walk $\kappa^-_{uv}$ coming from a walk in
$W^-_{uv}$ and an arc of $Z$ connecting $u$ and $v$. Both of these
walks are nontrivial and we aim to get a contradiction with
Lemma~\ref{lem:2cycles}.

Unlike the previous proof, this time $D_\otimes(\kappa^+_{xy})$ and
$D_\otimes(\kappa^-_{uv})$ may not cross at every intersection.
Namely, $\kappa^+_{xy}$ and $\kappa^-_{uv}$ may share some subpath of $Z$, but
apart from this subpath the intersections are crossings. We slightly modify
these drawings in the following way. Let us recall that
$D_\otimes(Z)$ splits $\RP^2$ into two parts $(\RP^2)^+$ and $(\RP^2)^-$ according to
Lemma~\ref{l:split_RP2}. We slightly push into $(\RP^2)^+$ the subpath of
$\kappa^+_{xy}$ shared with $Z$ (possibly consisting of a single vertex). 
This way, we obtain a drawing $D_\otimes^+$ of $\kappa^+_{xy}$.
Similarly, we slightly push the subpath of $\kappa^-_{uv}$ shared with $Z$ into
$(\RP^2)^-$, obtaining a drawing $D_\otimes^-$ of $\kappa^-_{uv}$. See
Fig.~\ref{f:non_disjoint}.
Now,
$D_\otimes^+(\kappa^+_{xy})$ and $D_\otimes^-(\kappa^-_{uv})$ cross at
every intersection and the crossings of $D_\otimes^+(\kappa^+_{xy})$ and
$D_\otimes^-(\kappa^-_{uv})$ correspond to the crossings of
$D_\otimes(\kappa^+_{xy})$ and
$D_\otimes(\kappa^-_{uv})$.

We now consider the crossings of $D_\otimes(\kappa^+_{xy})$ and
$D_\otimes(\kappa^-_{uv})$. Whenever $e$ is an edge of $\kappa^+_{xy}$ and $f$
is an edge of $\kappa^-_{uv}$ such that $e$ and $f$ are independent, then
$D_\otimes(e)$ and $D_\otimes(f)$ have an even number of crossings, because
$D_\otimes$ is an HT-drawing. However, if $e$ and $f$ are adjacent, then they
still cross evenly since one of these edges must belong to $Z$. Here we
crucially use that $\ar xy$ and $\ar uv$ do not share any endpoint. Therefore,
$D_\otimes(\kappa^+_{xy})$ and $D_\otimes(\kappa^-_{uv})$ have an even number
of crossings, and consequently, $D_\otimes^+(\kappa^+_{xy})$ and
$D_\otimes^-(\kappa^-_{uv})$ as well. This is a contradiction to
Lemma~\ref{lem:2cycles}.
\begin{figure}
\begin{center}
  \includegraphics{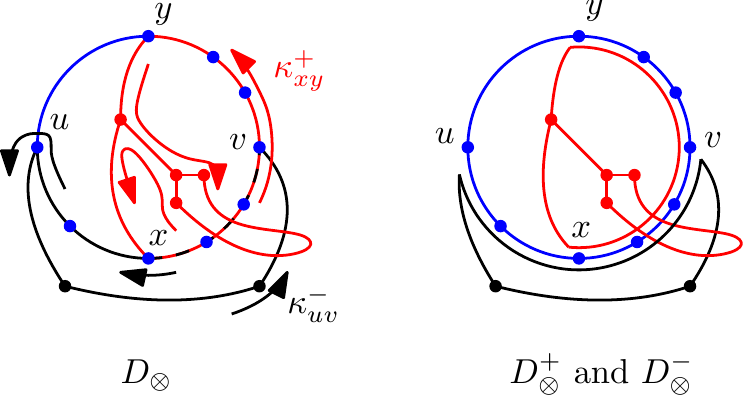}
\caption{Walks in Lemma~\ref{lem:arrows_touch}.}
\label{f:non_disjoint}
\end{center}
\end{figure}
\ifconf
\qed
\fi
\end{proof}

\begin{proof}[Proof of Lemma~\ref{l:no_triangle_arrows}]
For contradiction, there is such a configuration.
  
Let $e_a^+$ be any edge of $E(B^+)$ incident to $a$. Analogously, we define
edges $e_a^-$, $e_b^+$, $e_b^-$, $e_c^+$ and $e_c^-$. 
We observe that there is a walk $\omega^+_{ab} \in W^+_{ab}$ which uses the edges $e_a^+$ and $e_b^+$.
Indeed, it is sufficient to consider arbitrary proper walk using $e_a^+$ and
$e_b^+$ in $B^+$. This walk is nontrivial by Lemma~\ref{l:single_labels}. (The
assumptions of the lemma are satisfied by Proposition~\ref{prop:arrows_labels}
since $B^+$ does not induce any inside loops.)
We also let $\kappa^+_{ab}$ be the closed walk obtained from the
concatenation of $\omega^+_{ab}$ and the arc of $Z$ connecting $a$ and $b$ and
avoiding $c$. Analogously, we define $\omega_{ac}^+$, $\omega_{ab}^-$,
$\omega_{bc}^-$ and closed walks $\kappa_{ac}^+$, $\kappa_{ab}^-$ and
$\kappa_{bc}^-$. When defining the closed walks, we always use the arc of $Z$
which avoids the third point among $a$, $b$ and $c$. All these eight walks are nontrivial.

Now, we aim to show that $e_a^+$ and $e_a^-$ cross oddly in the drawing
$D_\otimes$. We consider the closed walks $\kappa_{ab}^-$ and $\kappa_{ac}^+$
and their drawings $D_\otimes(\kappa_{ab}^-)$ and $D_\otimes(\kappa_{ac}^+)$.
The walks $\kappa_{ab}^-$ and $\kappa_{ac}^+$ share only the point $a$;
therefore, $D_\otimes(\kappa_{ab}^-)$ and $D_\otimes(\kappa_{ac}^+)$ cross at
every intersection possibly except $D_\otimes(a)$. By Lemma~\ref{l:split_RP2}
we know that $e_a^+$ and $e_a^-$ point to different sides of $Z$ (in
$D_\otimes$); thus, $D_\otimes(\kappa_{ab}^-)$ and
$D_\otimes(\kappa_{ac}^+)$ actually touch in $D_\otimes(a)$. This touching can
be removed by a slight perturbation of these cycles, analogously as in the proof
of Lemma~\ref{lem:arrows_touch}, without affecting other intersections. By
Lemma~\ref{lem:2cycles} we therefore get that $D_\otimes(\kappa_{ab}^-)$ and
$D_\otimes(\kappa_{ac}^+)$ have an odd number of crossings. However, if we
consider any pair of edges $(e, f)$ where $e$ is an edge of $\kappa_{ab}^-$ and
$f$ is an edge of $\kappa_{ac}^+$ different from $(e_a^-, e_a^+)$, we get that
$e$ and $f$ cross an even number of times. Indeed, if we have such 
$(e,f) \neq (e_a^-, e_a^+)$, then either $e$ or $f$ belongs to $Z$, or they are
independent. Consequently, the odd number of crossings of
$D_\otimes(\kappa_{ab}^-)$ and
$D_\otimes(\kappa_{ac}^+)$ has to be realized on $e_a^+$ and $e_a^-$.

Analogously, we show that $e_b^+$ and $e_b^-$ must cross oddly by considering
the walks $\kappa_{ab}^+$ and $\kappa_{bc}^-$.

Now let us consider the closed walk $\kappa_{ab}^+$ and a closed walk $\mu_{ab}^-$
obtained from the concatenation of $\omega_{ab}^-$ and the arc of $Z$
connecting $a$ and $b$ which contains $c$. By analogous ideas as before, we get
that $D_\otimes(\kappa_{ab}^+)$ and $D_\otimes(\mu_{ab}^-)$ touch in
$D_\otimes(a)$ and $D_\otimes(b)$; if they intersect anywhere else, they cross there.
Using a small perturbation as before, they must have an odd number of crossings by Lemma~\ref{lem:2cycles}.
On the other hand, the pairs of edges $(e_a^+, e_a^-)$ and
$(e_b^+, e_b^-)$ cross oddly, as we have already observed. Any other pair $(e,
f)$ of edges where $e$ is an edge of $\kappa_{ab}^+$ and $f$ is an edge of
$\mu_{ab}^-$ must cross evenly since they are either independent or one of them
belongs to $Z$. This means that $D_\otimes(\kappa_{ab}^+)$ and
$D_\otimes(\mu_{ab}^-)$ intersect evenly, which is a contradiction.
\ifconf
\qed
\fi
\end{proof}

\paragraph{Intersection of trivial interleaving walks.}
We conclude this section by a proof of a lemma similar in spirit to
Lemma~\ref{l:noninterleaving_walks_intersect}. We will need this Lemma in
Sect.~\ref{s:redrawings}, but we keep the lemma here due to its similarity to
previous statements.

\begin{lemma}
\label{l:interleaving_walks_intersect}
Let $a$, $b$, $x$ and $y$ be four distinct vertices of $Z$ such that $x$ and $y$ are on
different arcs of $Z$ when split by $a$ and $b$. Let $\omega^+_{ab}$ and $\omega^+_{xy}$
be a proper $ab$-walk and a proper $xy$-walk in $\gin$, respectively, such that $\lambda(\omega^+_{ab}) =
\lambda(\omega^+_{xy}) = 0$. 
Then $\omega^+_{ab}$ and $\omega^+_{xy}$ must share a
vertex.
\end{lemma}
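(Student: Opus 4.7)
The plan is to mirror the proof of Lemma~\ref{l:noninterleaving_walks_intersect}, but observe at the outset that the naive copy of that argument fails: if I cap the two walks with arcs of $Z$ to form closed walks $\kappa^+_{ab}$ and $\kappa^+_{xy}$, then \emph{both} are trivial (since $\lambda(\omega^+_{ab})=\lambda(\omega^+_{xy})=0$ and, by the separation assumptions, every edge of $Z$ is trivial), so Lemma~\ref{lem:2cycles} yields only the useless fact that they cross evenly on $\RP^2$. I will therefore work directly in the drawing $D$ on $S^2$, exploiting that the separation assumptions guarantee that $D(Z)$ is a simple closed curve bounding the closed disk $\Delta:=\inn{S}\cup D(Z)$ inside which both walks are drawn, with their endpoints interleaving on $\partial\Delta$.

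First I would assume, for contradiction, that $\omega^+_{ab}$ and $\omega^+_{xy}$ share no vertex; then every pair $(e,f)$ with $e\in E(\omega^+_{ab})$ and $f\in E(\omega^+_{xy})$ is independent, and Definition~\ref{d:projective_HT} gives $\crno_D(e,f)=\lambda(e)\lambda(f)$. Summing over all such ordered pairs, with multiplicities if a walk repeats an edge, the total crossing count factors:
\[
\sum_{e,f}\crno_D(e,f)\;=\;\sum_{e,f}\lambda(e)\lambda(f)\;=\;\lambda(\omega^+_{ab})\cdot\lambda(\omega^+_{xy})\;=\;0\pmod 2.
\]
Hence the two walks, viewed as subsets of $S^2$, meet in an \emph{even} total number of transversal crossings.

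Next I would prove the topological counterpart: any two arcs in a closed disk whose endpoints on the boundary interleave must meet in an \emph{odd} number of transversal crossings. The standard argument is to cap $\Delta$ off by gluing a second disk $\Delta'$ along $\partial\Delta$, thus forming a $2$-sphere $\Sigma$; inside $\Delta'$ I choose an arc $\gamma_1$ from $a$ to $b$ and an arc $\gamma_2$ from $x$ to $y$ crossing transversally exactly once, which is possible because on $\partial\Delta'$ the four points still appear in the interleaving cyclic order $a,x,b,y$. Then $C_1:=D(\omega^+_{ab})\cup\gamma_1$ and $C_2:=D(\omega^+_{xy})\cup\gamma_2$ are closed $1$-cycles on $\Sigma$, and since $H_1(\Sigma)=H_1(S^2)=0$ the intersection form is trivial, so $C_1$ and $C_2$ meet in an even number of transversal crossings; but that number equals $|D(\omega^+_{ab})\cap D(\omega^+_{xy})|+1$, forcing the first summand to be odd. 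This contradicts the parity obtained in the previous paragraph.

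The main delicate point I expect is bookkeeping: identifying the edge-pair crossing sum $\sum_{(e,f)}\crno_D(e,f)$ with the topological intersection number of the two walks on $\Sigma$ modulo~$2$, particularly when a walk traverses an edge several times. This is handled by noting that both quantities count the same finite set of transversal intersection points of $D(\omega^+_{ab})$ and $D(\omega^+_{xy})$, each weighted by the product of the multiplicities of the two parameter values visiting it, so they agree mod~$2$. Once this is checked, the lemma is a direct transcription of the ideas already used in Sect.~\ref{s:forbidden_arrows}.
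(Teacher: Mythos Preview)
Your argument is correct and follows essentially the same route as the paper: close each trivial walk with an arc to form a homologically trivial closed curve, use the vanishing of the intersection form to conclude that the total number of transversal crossings is even, and compare with a ``one extra crossing'' coming from the interleaving of endpoints. The only cosmetic difference is that the paper carries this out on $\RP^2$ via $D_\otimes$ (pushing the $Z$-arc of one closed walk slightly into $(\RP^2)^+$ and the other into $(\RP^2)^-$), whereas you stay on $S^2$ with $D$ and cap the disk with a fresh $\Delta'$.

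There is one genuine gap in your write-up: you assert that both walks are drawn inside the closed disk $\Delta=\inn{S}\cup D(Z)$, but the separation assumptions do \emph{not} guarantee this. They only say that $Z$ is simple and that every edge of $Z$ is even; inside edges may still cross $Z$ and wander into $\out{S}$. If the walks exit $\Delta$, your capping construction $\Sigma=\Delta\cup\Delta'$ no longer contains their images, and the formation of $C_1,C_2$ on $\Sigma$ is ill-defined. The fix is easy and is exactly what the paper's pushing trick accomplishes: take $\Sigma=S^2$ itself (so $\Delta'=\out{S}$) and draw $\gamma_1,\gamma_2$ in a thin collar of $D(Z)$ inside $\out{S}$, each parallel to an arc of $Z$, crossing each other once. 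Any additional crossings of $\gamma_i$ with the opposite walk then come from that walk crossing $D(Z)$ along the corresponding arc, and since every edge of $Z$ is even these contributions are even. With this correction your count goes through unchanged.
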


\begin{proof}
We proceed by contradiction. As usual, we consider closed walks $\kappa_{ab}^+$ and $\kappa_{xy}^+$ defined as follows.
The walks $\kappa_{ab}^+$ consists of $\omega_{ab}^+$ and an arc of $Z$ connecting $a$ and $b$, while the walk $\kappa_{xy}^+$ is formed by $\omega_{xy}^+$ and an arc of $Z$ connecting $x$ and $y$. This time, $\omega_{ab}^+$ and $\omega_{xy}^+$ are trivial.

 We push $D_\otimes(\kappa_{ab}^+)$ a bit inside and $D_\otimes(\kappa_{xy}^+)$
 a bit outside of $Z$, similarly as in the proof of
 Lemma~\ref{lem:arrows_touch}. This time, however, we introduce one more
 crossing, because both $\kappa_{ab}^+$ and $\kappa_{xy}^+$ are walks in $\gin$.
 Since the intersection form of trivial cycles corresponding to the drawings of
 $\kappa_{ab}^+$ and $\kappa_{xy}^+$ is trivial, we get that these drawings
 have to cross an even number of times.
 This in turn means that the drawings of $\omega_{ab}^+$ and $\omega_{xy}^+$ cross an odd number of times.
 Since $D_\otimes$ is an HT-drawing,
 this yields a contradiction to the assumption that 
 $\omega_{ab}^+$ and $\omega_{xy}^+$ do not share a vertex.
\ifconf
\qed
\fi
\end{proof}

\section{Redrawings}
\label{s:redrawings}

We will prove Proposition~\ref{p:redrawings} in this section separately for each case.
That is, we show that if $\gin$ forms any of the configurations depicted in Fig.~\ref{f:redrawable},
then $\gin$ admits an ordinary HT-drawing on $S^2$.
However, we start with a general redrawing result that we will use in all cases.

\begin{lemma}\label{l:labels}
  Let $(D, \lambda)$ be a projective HT-drawing of $\gin$ on $S^2$ and $Z$ a
  cycle satisfying the separation assumptions. 
Let us also assume that that $D(\gin) \cap S^-
= \emptyset$.
Let $B$ be one of the inside bridges different from an edge and let $L$ be a valid labelling of
$B$. Let us assume that there is at least one vertex $x \in V(B) \cap V(Z)$
such that $|L(x)| = 1$. 
Then there is a projective HT-drawing $(D', \lambda')$ of $\gin$ on $S^2$ such that 

\begin{enumerate}[$(a)$]
  \item $D$ coincides with $D'$ on $Z$ and $D'(\gin) \cap S^- = \emptyset$;
  \item every edge $e \in E(\gin) \setminus E(B)$ satisfies $\lambda(e) =
    \lambda'(e)$;
  \item every edge $e\in E(B)$ that is not incident to $Z$ satisfies
    $\lambda'(e)=0$; and
  \item for every edge $uv=e\in E(B)$ such that $u\in V(Z)$, 
    we have $\lambda'(e)\in L(u)$.
\end{enumerate}
\end{lemma}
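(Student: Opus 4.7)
The plan is to obtain $(D', \lambda')$ from $(D, \lambda)$ by performing a suitable sequence of vertex-crosscap switches at inside vertices of $B$. Because $B$ is not just an edge, $V(B)\setminus V(Z)\neq\emptyset$ and $L$ has a reference vertex $v_B$ and a labelling parameter $\alpha_B\in\Z_2$. All switches will occur at vertices of $V(B)\setminus V(Z)\subseteq S^+$, and each vertex-crosscap switch can be realised by finger moves confined to a small neighbourhood of the switched vertex. Consequently the drawing of~$Z$ stays fixed and the image of $\gin$ stays in $S^+\cup D(Z)$, giving $(a)$. Lemma~\ref{l:vertex_crosscap_switch} ensures that $(D',\lambda')$ is still a projective HT-drawing. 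For $(b)$, any edge outside $E(B)$ has no endpoint in $V(B)\setminus V(Z)$ by Definition~\ref{d:bridge}, so the switches leave its $\lambda$-value unchanged.

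To describe the switches, recall that the subgraph of $B$ induced by $V(B)\setminus V(Z)$ is connected (it is one of the components of $G-V(Z)$). Fix a spanning tree $T$ of it rooted at $v_B$, and for every inside vertex $u$ of $B$ let $p_u$ be the unique $uv_B$-path in $T$, with the convention $\lambda(p_{v_B})=0$. I will switch exactly those inside vertices $u$ for which
\[
\sigma(u) \;:=\; \alpha_B+\lambda(p_u)\;\in\;\Z_2
\]
equals~$1$.

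The heart of the proof is $(c)$. For an edge $e=u_1u_2\in E(B)$ with both endpoints inside~$B$, the standard effect of a vertex-crosscap switch gives
\[
\lambda'(e) \;=\; \lambda(e)+\sigma(u_1)+\sigma(u_2) \;=\; \lambda(e)+\lambda(p_{u_1})+\lambda(p_{u_2}) \;=\; \lambda(\eta),
\]
where $\eta:=p_{u_1}\cdot e\cdot p_{u_2}$ is a closed walk based at $v_B$ whose vertices all lie in $V(B)\setminus V(Z)$. The crux is a small internal claim: every such closed walk at $v_B$ avoiding $V(Z)$ is $\lambda$-trivial. This is the only place the hypothesis $|L(x)|=1$ is used. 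To verify it, take any proper $xv_B$-walk $\omega_x$ in $B$; then $\omega_x\cdot\eta$ is again a proper $xv_B$-walk in $B$, and the two walks contribute the labels $\alpha_B+\lambda(\omega_x)$ and $\alpha_B+\lambda(\omega_x)+\lambda(\eta)$ to $L(x)$. Since $|L(x)|=1$, these labels coincide, forcing $\lambda(\eta)=0$ and hence $\lambda'(e)=0$.

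Finally, for $(d)$, consider $e=uv\in E(B)$ with $u\in V(Z)$. Because $B$ is not a single edge, the other endpoint $v$ must be an inside vertex, so only $v$ among $\{u,v\}$ is ever switched, and
\[
\lambda'(e) \;=\; \lambda(e)+\sigma(v) \;=\; \alpha_B+\lambda(e)+\lambda(p_v).
\]
The walk obtained by concatenating $e$ with $p_v$ is a proper $uv_B$-walk in $B$, so by Definition~\ref{def:labels} the right-hand side lies in $L(u)$, giving $\lambda'(e)\in L(u)$ as required. Apart from the triviality of internal closed walks at $v_B$, every step reduces to a routine parity count against the known effect of a vertex-crosscap switch.
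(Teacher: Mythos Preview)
Your proof is correct and follows the same core idea as the paper's---perform vertex-crosscap switches at the inside vertices of $B$, guided by a spanning tree of the interior subgraph---but your execution is more streamlined. By building the labelling parameter $\alpha_B$ directly into the switch rule $\sigma(u)=\alpha_B+\lambda(p_u)$, you achieve $(c)$ and $(d)$ in a single pass; the paper instead first invokes Lemma~\ref{l:planarize} to make $\lambda$ vanish on the interior edges, then possibly performs a second round of switches over \emph{all} of $V(B)\setminus V(Z)$ to get the sign at $x$ right, and finally verifies $(d)$ via Proposition~\ref{prop:arrows_labels} and Lemma~\ref{l:single_labels}. Your direct verification of $(d)$ from Definition~\ref{def:labels} is cleaner.

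One small imprecision: the finger moves realising a vertex-crosscap switch at $v$ are not confined to a small neighbourhood of $v$---each one stretches from some nontrivial edge all the way to $v$. What matters for $(a)$, and what the paper also uses, is that since every switched vertex and every edge of $\gin$ lies in $S^+\cup D(Z)$, the fingers can be routed entirely inside $S^+$, leaving $Z$ untouched.
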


Note that the condition $(b)$ allows that the edges in inside
bridges other than $B$ may be redrawn, but only under the condition, that their triviality/nontriviality
is not affected.

\begin{proof}
  Let $B^+$ be the subgraph of $B$ induced by the vertices of $V(B) \setminus
  V(Z)$. By the definition of the inside bridge, the graph $B^+$ is
  connected; it is also nonempty since we assume that $B$ is not an edge.

  Every cycle of the graph $B^+$ must be trivial. Indeed, if $B^+$ contained a
  nontrivial cycle, than this cycle could be used to obtain a nontrivial proper walk 
  from $x$ to $x$. This would contradict the fact that $|L(x)| = 1$ via
  Proposition~\ref{prop:arrows_labels}. That is, $B^+$ satisfies the
  assumptions of Lemma~\ref{l:planarize}. Let $U \subseteq V(B^+)$ be the set
  of vertices obtained from Lemma~\ref{l:planarize}. That is, if we perform the
  vertex-crosscap switches on $U$, we obtain a projective HT-drawing $(D_U,
  \lambda_U)$
  such that $\lambda_U(e) = 0$ for any edge $e \in E(B^+)$.

  Let us recall that every vertex-crosscap switch over a vertex $y$ 
  is obtained from vertex-edge switches of nontrivial edges over $y$ and then 
  from swapping the value of $\lambda$ on all edges incident to $y$. The vertex-edge switches do
  not affect the value of $\lambda$. Overall,
  we get that $D_U$ coincides with $D$ on $Z$. We also require that all vertex-edge
 switches are performed in $S^+$; therefore, $D_U$ does not reach $S^-$.
 Altogether,
 $D_U$ and $\lambda_U$ satisfy $(a)$, $(b)$ and $(c)$, but we do not know yet whether
 $(d)$ is satisfied.

 In fact, $(d)$ may not be satisfied and we still may need to modify $D_U$ and
 $\lambda_U$. Let $e_0$ be any edge incident with $x$. If $L(x) =
 \{\lambda_U(e_0)\}$, we set $D' := D_U$ and $\lambda' := \lambda_U$. If $L(x) \neq
  \{\lambda_U(e_0)\}$, we further perform vertex-crosscap switches over all
  vertices in $V(B^+)$, obtaining $D'$ and $\lambda'$. We want to check $(a)$
  to $(d)$ for $D'$ and $\lambda'$. 

  It is sufficient to check $(a)$, $(b)$ and $(c)$ only in the latter case. Regarding
 $(a)$, we again change the drawing only by vertex-edge switches over edges $e$
 with $\lambda_U(e) = 1$ inside $S^+$. Validity of $(b)$ is obvious from the fact
 that $\lambda_U$ may be changed only on edges incident with $V(B^+)$.
 Regarding $(c)$, for any edge $e \in
 E(B^+)$ we perform the vertex-crosscap switch for both endpoints of $e$.
 Therefore, $\lambda'(e) = \lambda_U(e) = 0$. It remains to check $(d)$.

 First, we realize that we have set up $D'$ and $\lambda'$ in such a way that
 $L(x) = \{\lambda'(e_0)\}$. Indeed, if $L(x) \neq
     \{\lambda_U(e_0)\}$, then we have made a vertex-crosscap switch over
     exactly one endpoint of $e_0$. In particular, we have just checked $(d)$
     if $e = e_0$.

     Now, let $e = uv \neq e_0$ be an edge from
     $(d)$. We need to check that $\lambda'(e) \subseteq L(u)$. If $L(u) = \{0,1\}$,
     then we are done; therefore, we may assume that $|L(u)| = 1$.
     Let
     $\omega$ be any proper $xu$-walk in $B$ containing $e_0$ and $e$. Such a walk
     exists from the definition of an inside bridge (see Definition~\ref{d:bridge}). 
     We have $\lambda(\omega) = \lambda'(\omega)$ because the vertex-crosscap
     switches over the inner vertices of $\omega$ do not affect the triviality of $\omega$. But
     we also have $\lambda'(\omega) = \lambda'(e_0) + \lambda'(e)$ because
     $\lambda'(f) = 0$ for any edge $f \in E(B^+)$. 
     Since $L(x) =
     \{\lambda'(e_0)\}$ and $|L(u)| = 1$, it follows that $L(u) =
     \{\lambda'(e)\}$ by Proposition~\ref{prop:arrows_labels} and
     Lemma~\ref{l:single_labels} applied to $x$ and $u$. 
\ifconf
\qed
\fi
\end{proof}

\paragraph{Inside fan.} Now we may prove Proposition~\ref{p:redrawings} for
inside fans, which is the simplest case.

\begin{proof}[Proof of Proposition~\ref{p:redrawings} for inside fans.]
  We assume that $\gin$ forms an inside fan; see
  Fig.~\ref{f:redrawable}. Let $x
  \in V(Z)$ be the endpoint common to all inside arrows.
  Let us consider any inside
  bridge $B$, possibly trivial. 
  Let $L = L_B$ be a valid labelling of $B$. It follows from
  Proposition~\ref{prop:arrows_labels} that $|L(u)| = 1$ for any $u \in V(B)
  \cap V(Z)$ different from $x$. (Actually, there is at least one such $u$,
  because we assume that $G$ is $2$-connected; this is contained in the
  separation assumptions.) In addition, all $u \in V(B)
    \cap V(Z)$ different from $x$ have to have the same labels, 
    because there are no arrows
  among them. Since we may switch all labels in a valid labelling by changing the value of the labelling parameter, we may
  assume that $L(u) = \{0\}$ for any such $u$.

  Now, we consider all inside bridges $B_1, \dots, B_\ell$ (possibly
  trivial) and
  the corresponding labellings $L_{B_1}, \dots L_{B_\ell}$ as above. We apply
  Lemma~\ref{l:labels} to each of these bridges which is not an edge one by one. 
  This way we get a
  projective HT-drawing $(D_1,\lambda_1)$ which satisfies:
  \begin{enumerate}[$(i)$]
\item $D$ coincides with $D_1$ on $Z$ and $D_1(\gin) \cap S^- = \emptyset$;
\item every edge $e \in E(\gin)$ which is not incident with $Z$ satisfies
  $\lambda_1(e) = 0$;
\item every edge $e\in E(\gin)$ such that $\lambda_1(e) = 1$ is incident with $x$.
\end{enumerate}

Indeed, property~$(i)$ follows from the iterative application of property $(a)$ of
Lemma~\ref{l:labels}. Property~$(ii)$ follows from the iterative application of
properties $(b)$ and $(c)$ of Lemma~\ref{l:labels}. Finally, property~$(iii)$
follows from $(ii)$, from the iterative application of properties $(b)$ and $(d)$ of
Lemma~\ref{l:labels} and from the fact that any nontrivial inside bridge
which is a single edge must contain $x$.

Finally, we set $D' := D_1$ and let $\lambda'\colon E(\gin) \to \{0,1\}$ be the
constantly zero function. We observe that from $(ii)$ and $(iii)$, it follows
that $\lambda'(e)\lambda'(f) = \lambda_1(e)\lambda_1(f)$ for any pair of
independent edges of $\gin$. Therefore $(D',\lambda')$ is a projective HT-drawing as well. But, since $\lambda'$ is
identically zero function, $D'$ is also just an ordinary HT-drawing on $S^2$.
\ifconf
\qed
\fi
\end{proof}

\paragraph{Inside square.} Now we prove Proposition~\ref{p:redrawings} for an
inside square. Let $B$ be the inside bridge inducing the inside square and
let $a$, $b$, $c$ and $d$ be the vertices of $V(B) \cap V(Z)$ labelled according
to Definition~\ref{d:redrawable}. The main ingredient for our proof of
Proposition~\ref{p:redrawings} is the following lemma, which shows that $B$ must
have a suitable cut vertex.

\begin{lemma}
\label{l:cut_vertex}
The inside bridge $B$, inducing the inside square, contains a vertex $v$
such that the graph $B - v$ is disconnected and the vertices $a$, $b$, $c$ and
$d$ belong to four different components of $B - v$.
\end{lemma}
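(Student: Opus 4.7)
The plan is to argue by contradiction: suppose no vertex $v$ with the stated property exists and derive a violation of the arrow structure of the inside square. As a first step, I fix a valid labelling of $B$. Since $B$ induces the four arrows $\ar ab$, $\ar bc$, $\ar cd$, $\ar ad$ and no loops, Proposition~\ref{prop:arrows_labels} forces $L(a) = L(c) = \{0\}$ and $L(b) = L(d) = \{1\}$ (up to swapping the labelling parameter). Moreover, any additional vertex in $V(B) \cap V(Z)$ would receive a single label in $\{0\}$ or $\{1\}$, and would therefore generate a forbidden arrow to one of $\{a, b, c, d\}$; hence $V(B) \cap V(Z) = \{a, b, c, d\}$, and every proper $ac$- and $bd$-walk in $B$ is trivial by Lemma~\ref{l:single_labels}.

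The heart of the proof is the claim that the minimum vertex cut between $a$ and $c$ in $B$ (and, symmetrically, between $b$ and $d$) is exactly $1$. The lower bound is immediate from the connectivity of $B$; for the upper bound, by Lemma~\ref{l:interleaving_walks_intersect} every proper $ac$-walk and every proper $bd$-walk in $B$ share a vertex, which must be interior (and therefore off $V(Z)$, since $V(B) \cap V(Z) = \{a,b,c,d\}$). Assuming two internally vertex-disjoint $ac$-paths $P_1, P_2$, any $bd$-path $Q$ in $B$ would meet each at interior vertices $u_i \in P_i \cap Q$; the resulting concatenations yield proper $ab$-, $bc$-, $ad$-, $cd$-walks whose $\lambda$-values are pinned to $1$ by Lemma~\ref{l:single_labels} and the arrow structure. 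I would then combine these identities with a topological crossing count on the embedding of $B$ in the disk $S^+$, in the style of the proofs of Lemmas~\ref{lem:arrows_touch}--\ref{l:no_triangle_arrows}, to reach a parity contradiction. Once both minimum cuts are singletons $v_{ac}$ and $v_{bd}$, a further application of Lemma~\ref{l:interleaving_walks_intersect} forces $v_{ac} = v_{bd}$: otherwise there would be a $bd$-walk in $B - v_{ac}$, which, combined with any $ac$-walk (all of which pass through $v_{ac}$), could be arranged to be vertex-disjoint from it, contradicting the lemma.

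Setting $v := v_{ac} = v_{bd}$, it only remains to verify that $a, b, c, d$ sit in four different components of $B - v$. The pairs $\{a, c\}$ and $\{b, d\}$ are separated by construction. For a mixed pair such as $\{a, b\}$: if $a$ and $b$ were in the same component of $B - v$, there would be a proper $ab$-walk $P$ avoiding $v$; combining $P$ with a $v$-to-$a$ subwalk of an $ac$-path through $v$ and a $v$-to-$b$ subwalk of a $bd$-path through $v$ yields two proper $ab$-walks in $B$ whose $\lambda$-values would have to agree by Lemma~\ref{l:single_labels}, but are forced to differ by the arrow parities at $v$, a contradiction. The main obstacle throughout is the first step: purely combinatorial $\lambda$-bookkeeping of the arrow conditions tends to be self-consistent with two internally disjoint $ac$-paths, so the proof essentially must exploit the topological embedding of $B$ in the disk $S^+$ via a Jordan-curve style crossing count rather than by manipulating arrow parities in isolation.
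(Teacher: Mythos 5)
Your plan (find a common cut vertex by first showing the $a$--$c$ and $b$--$d$ cuts are singletons, then identifying them) shares the spirit of the paper but has a real gap at the point that carries the entire weight of the argument: you never prove that the minimum $a$--$c$ vertex cut in $B$ is a singleton. You explicitly defer this to an unspecified ``topological crossing count'' and even concede at the end that the parity bookkeeping alone ``tends to be self-consistent with two internally disjoint $ac$-paths.'' The burden is therefore not discharged. The later identification $v_{ac}=v_{bd}$ is also argued too loosely: knowing that every $ac$-walk passes through $v_{ac}$, every $bd$-walk through $v_{bd}$, and every $ac$-walk meets every $bd$-walk does not by itself produce a vertex-disjoint $ac$/$bd$ pair when $v_{ac}\ne v_{bd}$. (Consider a path $a$--$v_1$--$v_2$--$c$ with $b$ attached at $v_1$ and $d$ at $v_2$: the two singleton cuts are distinct yet no $ac$-walk avoids any $bd$-walk. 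That particular graph is excluded only by the \emph{other} intersection conditions, $ab$/$cd$ and $ad$/$bc$, which your disjointness argument never invokes.)

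The paper sidesteps both difficulties by not going through the $a$--$c$ cut at all. It first establishes the three pairwise intersection facts: proper $ab$-walks meet proper $cd$-walks and proper $ad$-walks meet proper $bc$-walks (both via Lemma~\ref{l:noninterleaving_walks_intersect}, since those walks are nontrivial and the endpoint pairs do not interleave along $Z$), and proper $ac$-walks meet proper $bd$-walks (via Lemma~\ref{l:interleaving_walks_intersect}, since those walks are trivial and interleave). It then feeds all three conditions simultaneously into a single combinatorial lemma (Lemma~\ref{l:intersecting_paths}): adjoin an auxiliary vertex $x$ joined to $a,b$ and $y$ joined to $c,d$; the intersection conditions forbid two vertex-disjoint $x$--$y$ paths, so Menger's theorem yields one cut vertex $v$, which a short follow-up argument shows separates all four of $a,b,c,d$. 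If you want to salvage your write-up, replace the singleton-cut claim with this Menger step; trying to establish the $a$--$c$ cut directly is both harder and, as your own closing remark suggests, likely to force you to redo the topological work that the paper already packages in the two walk-intersection lemmas.
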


We first show how Proposition~\ref{p:redrawings} for inside squares
follows from Lemma~\ref{l:cut_vertex}. The proof is analogous to the previous
proof.

\begin{proof}[Proof of Proposition~\ref{p:redrawings} for inside squares.]
We assume that $B$ is the unique inside bridge inducing the inside square
and $a$, $b$, $c$ and $d$ are vertices of $V(B) \cap V(Z)$ as above. In
addition, let $v$ be the vertex from Lemma~\ref{l:cut_vertex}.

First we consider valid labellings of trivial inside bridges. After possibly
switching the value of the labelling parameter, we may achieve that all labels of a trivial inside bridge
are $0$.
We apply Lemma~\ref{l:labels} to all trivial inside bridges (which are
not an edge) and we get a projective $HT$-drawing $(D_1, \lambda_1)$ such that
$\lambda_1(e) = 0$ for any edge of $\gin$ which does not belong to the
nontrivial bridge $B$. Also, we did not affect $\lambda$ on edges of $B$,
$D_1$ coincides with $D$ on $Z$ and we still have $D_1(\gin)\cap\out{S}=\emptyset$.

Now, we consider a valid labelling $L$ of $B$. It is easy to check that, up to
switching all labels, we have $L(a) = L(c) = \{1\}$ and $L(b) = L(d) = \{0\}$.
We apply Lemma~\ref{l:labels} to $B$ according to this labelling and we get a
projective HT-drawing $(D_2, \lambda_2)$ such that the only edges $e$ of $\gin$ with
$\lambda_2(e) = 1$ are the edges of $B$ incident to $a$ or $c$.

Next, let $C_a$ and $C_c$ be the components of $B - v$ which contains
$a$ and $c$, respectively. We perform vertex-crosscap switches over all vertices of $C_a$
and $C_c$ except $a$, $c$ and $v$. We perform the switches inside $S^+$ as
usual. This way we get a projective HT-drawing $(D_3, \lambda_3)$ such that only
edges $e$ of $\gin$ such that $\lambda_3(e) = 1$ are the edges of $B$ incident
to $v$.

Finally, we let $D' = D_3$ and we set $\lambda'(e) = 0$ for any edge $e$ of
$\gin$. Analogously as in the previous proof, $\lambda_3(e)\lambda_3(f) =
\lambda'(e)\lambda'(f)$ for any pair of independent edges of $\gin$. Therefore, 
$(D', \lambda')$ is a projective HT-drawing on $S^2$ and $D'$ is also an
ordinary HT-drawing on $S^2$, as required.
\ifconf
\qed
\fi
\end{proof}

It remains to prove Lemma~\ref{l:cut_vertex} to conclude the case of inside
squares.

We start with a certain separation lemma in a general graph 
and then we conclude the proof by
verification that the assumptions of this lemma are satisfied.

\begin{lemma}
\label{l:intersecting_paths}
  Let $G'$ be an arbitrary connected graph and $A = \{a_1, \dots, a_4\} \subseteq V(G')$ be
  a set of four distinct vertices. Let us assume that any $a_ia_j$-path has a
  common point in $V(G') \setminus A$ with any $a_ka_\ell$-path whenever
  $\{i,j,k,\ell\} = \{1,2,3,4\}$. Then there is
  a cut vertex $v$ of $G'$ such that $a_1, \dots, a_4$ are in four distinct
  components of $G' - v$.
\end{lemma}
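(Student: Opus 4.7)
My plan is to locate, for each of the three ways to partition $A = \{a_1, \ldots, a_4\}$ into two complementary pairs, a single cut vertex separating the two pairs via Menger's theorem, and then show that all three such cut vertices coincide and give the desired~$v$.

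First I would argue that for any partition $\{\{a_i, a_j\}, \{a_k, a_\ell\}\}$ of~$A$, there are no two vertex-disjoint paths from $\{a_i, a_j\}$ to $\{a_k, a_\ell\}$ in~$G'$. Such a pair of paths would, once we read off their endpoints, be either an $a_ia_k$-path vertex-disjoint from an $a_ja_\ell$-path or an $a_ia_\ell$-path vertex-disjoint from an $a_ja_k$-path, each contradicting the hypothesis applied to the complementary partition. Menger's theorem then yields a single vertex $v_{ij \mid k\ell} \in V(G') \setminus A$ meeting every path from $\{a_i, a_j\}$ to $\{a_k, a_\ell\}$, and such a vertex is automatically a cut vertex of~$G'$. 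Denote the three resulting cut vertices by $v_1 := v_{12 \mid 34}$, $v_2 := v_{13 \mid 24}$, and $v_3 := v_{14 \mid 23}$.

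The heart of the proof is to prove $v_1 = v_2$; the equality $v_1 = v_3$ then follows by the same argument with the indices permuted. Assuming $v_1 \neq v_2$, I would examine the position of $v_2$ among the components of $G' - v_1$. Write $C_i$ for the component containing $a_i$; then $\{C_1, C_2\} \cap \{C_3, C_4\} = \emptyset$, though possibly $C_1 = C_2$ or $C_3 = C_4$. The crucial fact is that $v_2$ is itself a cut vertex separating $a_1$ from $a_2$ and simultaneously $a_3$ from $a_4$, so every $a_1a_2$-path and every $a_3a_4$-path must pass through~$v_2$. Casing on whether $v_2$ lies in $C_1 \cup C_2$, in $C_3 \cup C_4$, or in a component of $G' - v_1$ disjoint from $A$, I would pick a suitable $a_3a_4$- or $a_1a_2$-path on the ``opposite'' side and observe that, since a simple path crosses $v_1$ at most once, its vertices occupy at most two components of $G' - v_1$ (together with $v_1$), so the forced detour through $v_2$ would require entering a third component, a contradiction.

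Once $v_1 = v_2 = v_3 =: v$ is established, the conclusion is immediate: $v$ separates $\{a_1, a_2\}$ from $\{a_3, a_4\}$ and simultaneously $\{a_1, a_3\}$ from $\{a_2, a_4\}$, so every pair $a_i, a_j$ with $i \neq j$ is separated by~$v$, placing the four vertices in four distinct components of $G' - v$. The main obstacle is the middle step: because $v_1$ does not necessarily separate $a_1$ from $a_2$, the subcases $C_1 = C_2$ and $C_3 = C_4$ need to be treated separately, and the possibility that $v_2$ sits in an $A$-free component of $G' - v_1$ requires an additional subcase handled in the same spirit.
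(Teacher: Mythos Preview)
Your approach is correct and goes through, but it takes a longer route than the paper and leaves one step unjustified.

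The paper applies Menger only once: it attaches auxiliary vertices $x$ (to $a_1,a_2$) and $y$ (to $a_3,a_4$), obtains a single cut vertex $v$ separating $x$ from $y$, and then exhibits an $a_1a_2$-path $p_1$ on the $x$-side and an $a_3a_4$-path $p_2$ on the $y$-side. These can meet only at~$v$, yet by hypothesis they must meet outside~$A$, so $v\notin A$. Now any $a_1a_2$-path lies entirely on the $x$-side (since both its endpoints are there), hence meets $p_2$ only at~$v$, forcing it through~$v$; symmetrically for $a_3a_4$-paths. One Menger application and two ``anchor'' paths do all the work, with no case analysis.

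In your plan, the assertion that Menger yields $v_{ij\mid k\ell}\in V(G')\setminus A$ is not automatic: the minimum $\{a_i,a_j\}$--$\{a_k,a_\ell\}$ separator could a priori be one of the~$a_m$. This can be ruled out---if, say, $a_1$ separated $\{a_1,a_2\}$ from $\{a_3,a_4\}$, then any $a_1a_2$-path and any $a_3a_4$-path would be confined to $\{a_1\}$ together with disjoint unions of components of $G'-a_1$, so they could meet only at $a_1\in A$, contradicting the hypothesis---but you should say so. Once this is filled in, your three-separator construction and the case analysis showing $v_1=v_2=v_3$ are correct; the paper's trick simply packages the same content more economically.
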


\begin{proof}
 Let us consider an auxiliary graph $G''$ which is obtained from $G'$ by adding two new vertices $x$, $y$
 and attaching $x$ to $a_1,a_2$ and $y$ to $a_3,a_4$.
 By the assumptions, $G''$ is connected and moreover, there are no two vertex-disjoint paths connecting $x$ and $y$.
 By Menger's theorem (see, e.g.,~\cite[Corollary~3.3.5]{diestel10}), there is a
 cut-vertex $v \in V(G'')\setminus\{x,y\}=V(G')$ disconnecting $x$ and $y$. 
 Let $C_1$ be the connected component of $G'' - v$ containing $x$ and $C_2$ be the component containing $y$.
 Let $C'_i$, for $i=1,2$, be the subgraph of $G'$ induced by $v$ and the vertices of $C_i \cap G'$.
 Note that, since $G'$ is connected, both $C'_1$ and $C'_2$ are connected.
 We show that $v$ is the desired cut vertex. 
  
 Let  $p_1$ be an $a_1 a_2$-path in  $C'_1$
 and $p_2$ an $a_3 a_4$-path in $C'_2$. Since $C'_1$ and $C'_2$ are connected, 
 such paths $p_1$ and $p_2$ exist.
 Moreover, $p_1$ and $p_2$ may intersect only in $v$;
 however, according to the assumptions, they have to intersect in a vertex outside $A$.
 Therefore, they must
 intersect in $v$ and $v\notin A$.
 Overall, we have verified that any $a_i a_j$-path passes through $v$, for $1
 \leq i < j \leq 4$, which shows that $v$ is the desired cut vertex.
\ifconf
\qed
\fi
\end{proof}

\begin{proof}[Proof of Lemma~\ref{l:cut_vertex}.]
  We apply Lemma~\ref{l:intersecting_paths} to $B$ and to  $A = \{ a, b, c, d \}$.
  Let us consider a valid labelling $L$ of $B$. Up to swapping the labels, we
  may assume that $L(a) = L(c) = \{1\}$ and $L(b) = L(d) = \{0\}$. Then
  Proposition~\ref{prop:arrows_labels} together with
  Lemma~\ref{l:single_labels} imply that any proper $ab$, $bc$, $cd$, or
  $ad$-walk is nontrivial, whereas any proper $ac$ or $bd$-walk is trivial.
  Then, the assumptions of Lemma~\ref{l:intersecting_paths} are satisfied due
  to Lemmas~\ref{l:noninterleaving_walks_intersect}
  and~\ref{l:interleaving_walks_intersect}.
\ifconf
\qed
\fi
\end{proof}

\paragraph{Inside split triangle.}
Finally, we prove Proposition~\ref{p:redrawings} for
an inside split triangle. 
\begin{proof}[Proof of Proposition~\ref{p:redrawings} for an inside split triangle]
Let $a$, $b$, $c$ be the three vertices of $Z$ from the
definition of the inside split triangle; see Definition~\ref{d:redrawable} or Fig.~\ref{f:redrawable}.

First, similarly as in the proof for inside squares, we take care of trivial
inside bridges via suitable labellings and Lemma~\ref{l:labels}. We reach a
projective HT-drawing $(D_1, \lambda_1)$ still satisfying the
assumptions of Proposition~\ref{p:redrawings}, which in addition satisfies
$\lambda_1(e) = 0$ for any edge $e$ of $\gin$ that does not belong to a
nontrivial bridge.

Now, let us consider nontrivial inside bridges. By the assumptions, each such
bridge is either an
\emph{$a$-bridge}, that is, a nontrivial inside bridge which contains
$a$ (and $b$ or $c$ or both), or a \emph{$bc$-bridge} which contains $b$ and $c$,
but not $a$. We consider valid labellings of these bridges. As usual, we
may swap all labels in a valid labelling when needed.
This way, it is easy to
check that every $a$-bridge $B$ admits a valid labelling $L_B$ such that
$L_B(a) = \{1\}$, whereas all other labels are $0$. Similarly, each
$bc$-bridge $B$ admits a valid labelling $L_B$ such that
$L_B(b) = \{1\}$ and $L_B(c) = \{0\}$. We apply Lemma~\ref{l:labels} and 
we reach a projective HT-drawing $(D_2, \lambda_2)$ still satisfying the
assumptions of Proposition~\ref{p:redrawings}, which in addition satisfies the
following property. The edges $e$ of $\gin$ with $\lambda_2(e) = 1$ are exactly
the edges of an $a$-bridge which are incident to $a$ or edges of a
$bc$-bridge incident to $b$.

If we do not have any $bc$-bridge, then all nontrivial edges are
incident to $a$ and we finish the proof by setting $D' = D_2$ and
letting $\lambda'$ be identically $0$, similarly as in the cases of an inside fan or
an inside square. However, if we have $bc$-bridge(s), we need to be more
careful. 

Let $E_a^x$ and $E^x_{bc}$ be the sets of edges incident to a vertex $x$ in
an $a$-bridge and the set of edges incident to $x$ in 
a $bc$-bridge, respectively. Because $D_2$ is a projective HT-drawing,
we have $\lambda_2(e)\lambda_2(f) = \crno_{D_2}(e,f)$ for any pair of
independent edges $e$ and $f$. In particular, $\crno_{D_2}(e,f) = 1$ for a pair of
independent edges if and only if one of the edges belongs to $E_a^a$ and the
second one to $E_{bc}^b$.

Now, for every edge $e \in E^b_{bc}$, we perform the vertex-edge switch over each
vertex different from $a$, $b$, $c$ of each $a$-bridge obtaining a drawing
$D_3$. We perform the
switches inside $S^+$. This way, we change the crossing number of such $e$ with
edges from $E^a_a$, $E^b_a$ and $E^c_a$. In particular, after this redrawing,
we get $\crno_{D_3}(e,f) = 1$ for a pair of
independent edges if and only if one of the edges belongs to $E_a^c$ and the
second one to $E_{bc}^b$. See Fig.~\ref{f:a_bc_bridges}.

Finally, for every edge $e \in E_a^c$, we perform the vertex-edge switch over
each vertex different from $b$ and $c$ of each $bc$-bridge obtaining the 
final drawing $D'$. Again, we perform the switches inside $S^+$.  This way,
we change the crossing number of such $e$ with
edges from $E_{bc}^b$ and $E^c_{bc}$. However, it means that  $\crno_{D'}(e,f)
= 0$ for any pair of independent edges. That is, $D'$ is the required
ordinary HT-drawing on $S^2$. See Fig.~\ref{f:a_bc_bridges}.
\ifconf
\qed
\fi
\end{proof}

\begin{figure}
\begin{center}
  \includegraphics{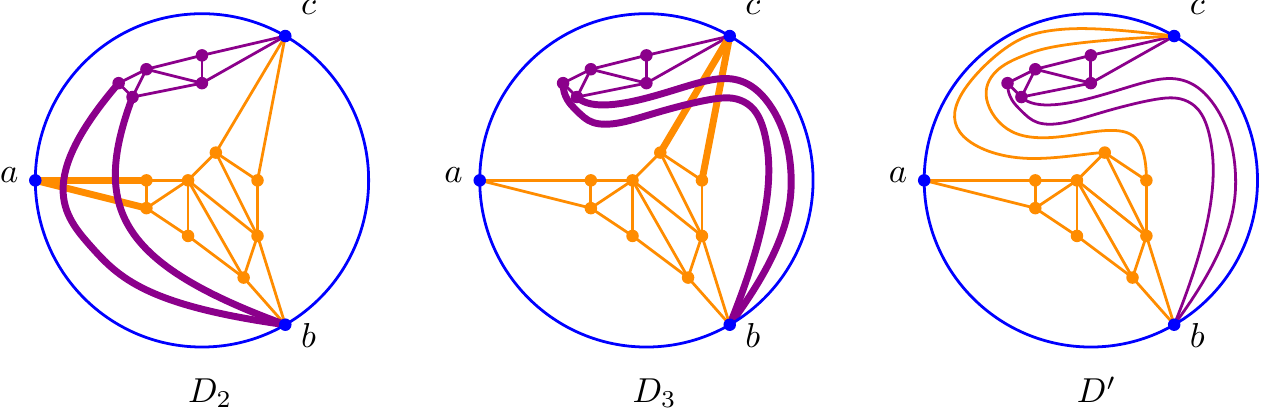}
  \caption{An example of redrawing an inside split triangle with one $a$-bridge and one
  $bc$-bridge. The edges participating in independent pairs crossing oddly
are thick. For simplicity of the picture, the drawings $D_3$ and $D'$ are
actually simplified. For example, the vertex-edge switches used to obtain
$D_3$ from $D_2$ introduce many pairs of independent edges crossing evenly and some
pairs of adjacent edges crossing oddly. These intersections are removed in the
picture as they do not play any role in the argument. (In particular, the
drawing $D'$ is, in fact, typically not a plane drawing.)}
  \label{f:a_bc_bridges}
\end{center}
\end{figure}

\section{Redrawing by Pelsmajer, Schaefer and \v{S}tefankovi\v{c}}
\label{s:PSS_in_out}

It remains to prove Theorem~\ref{t:PSS_in_out}. As mentioned above, our proof
is almost identical to the proof of Theorem~2.1~in~\cite{PSS07}. The only
notable difference is that we avoid contractions.\footnote{Our reason why we
avoid contractions is mainly for readability issues. Contractions yield
multigraphs and, formally speaking, we would have to redo several notions for
multigraphs. Introducing multigraphs in the previous sections would be disturbing and
it is not convenient to repeat all the definitions in such setting now.}
As noted before, the proof of Lemma~3 in~\cite{Fulek2012AdjCrossings} can also be extended to yield the desired result.

\begin{proof}
First, we want to get a drawing such that there is only one edge of $Z$
which may be intersected by other edges. Here, part of the argument is almost
the same as the analogous argument in the proof of Lemma~\ref{lem:simple}.

Let us consider an edge $e = uv \in E(Z)$
intersected by some other edges and let $f = vw \in E(Z)$ be a neighbouring edge of
$e$. 
We again almost-contract $e$ so that we move the vertex $v$ towards $u$ until we
remove all intersection of $e$ with other edges. This way, $e$ is now free of
crossings and these crossings appear on $f$. Since both $e$ and $f$ were even
edges in the initial drawing, $f$ remains even after the redrawing as well.
Finally, since we want to keep the position of $Z$, we consider a
self-homeomorphism of $S^2$ which sends $v$ back to its original position. See
Fig.~\ref{f:almost_contract}.

By such redrawings, it can be achieved that only one edge $e_0 = u_0v_0$ of $Z$ may be
intersected by other edges while keeping $Z$ fixed and $e_0$ even. Without
loss of generality, we may assume that the original drawing $D$ satisfies these
assumptions.

\begin{figure}
\begin{center}
\includegraphics{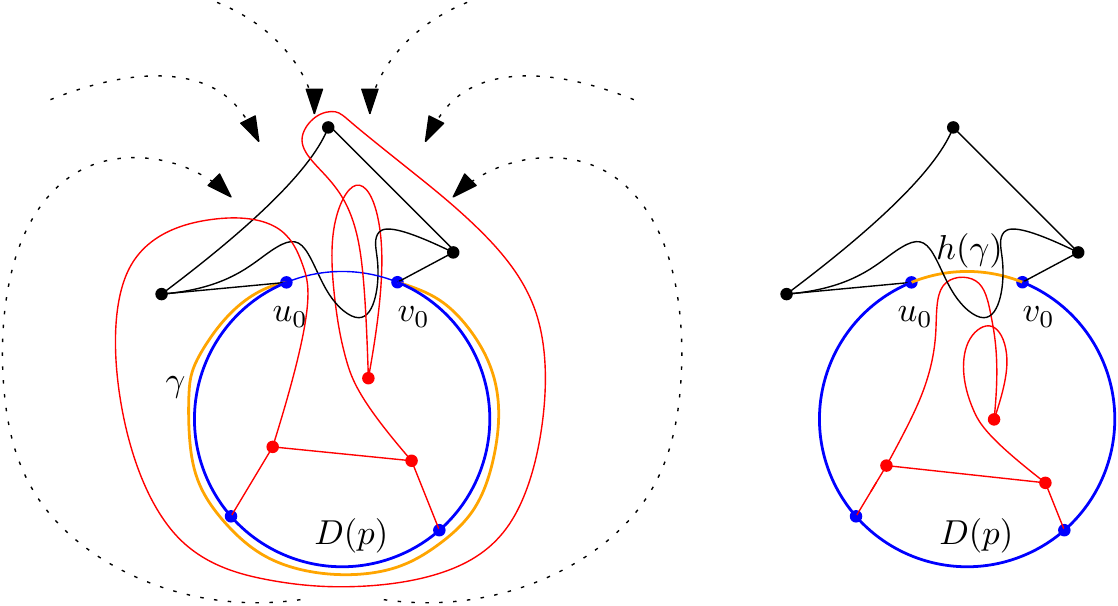}
  \caption{An illustration of the self-homeomorphism $h$, which maps $B$ to $S^+$, applied to the drawing of $\gin - e_o$.}
\label{f:PSS_deformation}
\end{center}
\end{figure}

Let $p$ be the path in $Z$ connecting $u_0$ and $v_0$ avoiding $e_0$.
Let us also consider an arc $\gamma$ connecting $u_0$ and $v_0$ outside (that is in
$S^-$) close to $D(p)$ such that it does not cross any inside edge. The closed arc
obtained from $\gamma$ and $D(p)$ bounds two disks ($2$-balls). Let $B$ be the
open disk which contains $S^+$. Finally, we consider a self-homeomorphism
$h$ of $S^2$ that keeps $D(p)$ fixed and maps $B$ to $S^+$. Considering the
drawing $h \circ D$ on $\gin - e_0$, it turns out that $\gin - e_0$ is now drawn in $S^+$,
up to $p$, which stays fixed. For the edge $e_0$, we also keep its original
position, that is, we do not apply $h$ to this edge. See Fig.~\ref{f:PSS_deformation}.

Since the redrawing is done by a
self-homeomorphism, we do not change the number of crossings among pairs of
edges in $\gin$.  Analogously, we map $\gout$ to $S^-$ and we get the
required drawing.
\ifconf
\qed
\fi
\end{proof}
\fi

\ifconf
\bibliographystyle{alpha}
\bibliography{citations}
\addcontentsline{toc}{section}{References}
\else

\section*{Acknowledgment} We would like to thank Alfredo Hubard for fruitful
discussions and valuable comments.

\bibliographystyle{alpha}
\bibliography{citations}

\begin{thebibliography}{dFOdM12}

\bibitem[Cho34]{Hanani34}
Ch. Chojnacki.
\newblock {\"{U}}ber wesentlich unpl{\"{a}}ttbare {K}urven im dreidimensionalen
  {R}aume.
\newblock {\em Fundamenta Mathematicae}, 23(1):135--142, 1934.

\bibitem[CN00]{weak-HT}
G.~Cairns and Y.~Nikolayevsky.
\newblock Bounds for {G}eneralized {T}hrackles.
\newblock {\em Discrete \& Computational Geometry}, 23(2):191--206, 2000.

\bibitem[dFOdM12]{Fraysseix12}
H.~de~Fraysseix and P.~Ossona~de Mendez.
\newblock Tr\'{e}maux trees and planarity.
\newblock {\em European Journal of Combinatorics}, 33(3):279 -- 293, 2012.
\newblock Topological and Geometric Graph Theory.

\bibitem[dFR85]{Fraysseix85}
H.~de~Fraysseix and P.~Rosenstiehl.
\newblock A characterization of planar graphs by {T}r{\'e}maux orders.
\newblock {\em Combinatorica}, 5(2):127--135, 1985.

\bibitem[Die10]{diestel10}
R.~Diestel.
\newblock {\em Graph theory}, volume 173 of {\em Graduate Texts in
  Mathematics}.
\newblock Springer, Heidelberg, fourth edition, 2010.

\bibitem[FKMP15]{Fulek2015clustered}
R.~Fulek, J.~Kyn{\v{c}}l, I.~Malinovi{\'c}, and D.~P{\'a}lv{\"o}lgyi.
\newblock Clustered planarity testing revisited.
\newblock {\em The Electronic Journal of Combinatorics}, 22(4):P4--24, 2015.

\bibitem[FPS{\v{S}}12]{Fulek2012AdjCrossings}
R.~Fulek, M.~J. Pelsmajer, M.~Schaefer, and D.~{\v{S}}tefankovi{\v{c}}.
\newblock {Adjacent} {Crossings} {Do} {Matter}.
\newblock {\em Journal of Graph Algorithms and Applications}, 16(3):759--782,
  2012.

\bibitem[FV04]{Fuchs2004}
D.~B. Fuchs and O.~Ya. Viro.
\newblock Homology and cohomology.
\newblock In V.~A Rokhlin and S.~P. Novikov, editors, {\em Topology II:
  Homotopy and Homology. Classical Manifolds}, pages 95--196. Springer Berlin
  Heidelberg, Berlin, Heidelberg, 2004.

\bibitem[GT87]{Gross87}
J.~L. Gross and T.~W. Tucker.
\newblock {\em Topological Graph Theory}.
\newblock Wiley-Interscience, New York, NY, USA, 1987.

\bibitem[Hat02]{Hatcher2002}
A.~Hatcher.
\newblock {\em Algebraic topology}.
\newblock Cambridge University Press, Cambridge, 2002.

\bibitem[HT74]{Hopcroft74}
J.~Hopcroft and R.~Tarjan.
\newblock Efficient {P}lanarity {T}esting.
\newblock {\em Journal of the ACM}, 21(4):549--568, October 1974.

\bibitem[KMR08]{Kawarabayashi08}
K.~Kawarabayashi, B.~Mohar, and B.~Reed.
\newblock A {S}impler {L}inear {T}ime {A}lgorithm for {E}mbedding {G}raphs into
  an {A}rbitrary {S}urface and the {G}enus of {G}raphs of {B}ounded
  {T}ree-{W}idth.
\newblock In {\em 49th Annual IEEE Symposium on Foundations of Computer
  Science, 2008.}, pages 771--780, Oct 2008.

\bibitem[Kur30]{Kuratowski30}
C.~Kuratowski.
\newblock Sur le probl{\`e}me des courbes gauches en {T}opologie.
\newblock {\em Fundamenta Mathematicae}, 15(1):271--283, 1930.

\bibitem[Lev72]{Levow72}
R.~B. Levow.
\newblock On {Tutte's} algebraic approach to the theory of crossing numbers.
\newblock In {\em Proceedings of the Third Southeastern Conference on
  Combinatorics, Graph Theory, and Computing}, pages 315--314. Florida Atlantic
  Univ., Boca Raton, Fla., 1972.

\bibitem[Moh99]{Mohar99}
B.~Mohar.
\newblock A {L}inear {T}ime {A}lgorithm for {E}mbedding {G}raphs in an
  {A}rbitrary {S}urface.
\newblock {\em SIAM Journal on Discrete Mathematics}, 12(1):6--26, 1999.

\bibitem[MT01]{mt-gs-01}
B.~Mohar and C.~Thomassen.
\newblock {\em Graphs on Surfaces}.
\newblock Johns Hopkins Studies in the Mathematical Sciences. Johns Hopkins
  University Press, 2001.

\bibitem[Mun00]{Munkres2000}
J.~R. Munkres.
\newblock {\em Topology}.
\newblock Prentice Hall, 2000.

\bibitem[PS{\v{S}}07a]{PSS07}
M.~J. Pelsmajer, M.~Schaefer, and D.~{\v{S}}tefankovi{\v{c}}.
\newblock Removing even crossings.
\newblock {\em Journal of Combinatorial Theory, Series B}, 97(4):489--500,
  2007.

\bibitem[PS{\v{S}}07b]{weak-HT-2}
M.~J. Pelsmajer, M.~Schaefer, and D.~{\v{S}}tefankovi{\v{c}}.
\newblock Removing even crossings on surfaces.
\newblock {\em Electronic Notes in Discrete Mathematics}, 29:85 -- 90, 2007.
\newblock European Conference on Combinatorics, Graph Theory and Applications.

\bibitem[PSS09]{PSS09}
M.~J. Pelsmajer, M.~Schaefer, and D.~Stasi.
\newblock {Strong} {Hanani}--{Tutte} on the {P}rojective {P}lane.
\newblock {\em SIAM Journal on Discrete Mathematics}, 23(3):1317--1323, 2009.

\bibitem[Sch13a]{Schaefer2013}
M.~Schaefer.
\newblock {Hanani}--{Tutte} and {Related} {Results}.
\newblock In I.~B{\'a}r{\'a}ny, K.~J. B{\"o}r{\"o}czky, G.~F. T{\'o}th, and
  J.~Pach, editors, {\em Geometry --- Intuitive, Discrete, and Convex: A
  Tribute to L{\'a}szl{\'o} Fejes T{\'o}th}, pages 259--299. Springer Berlin
  Heidelberg, Berlin, Heidelberg, 2013.

\bibitem[Sch13b]{SchaeferTToP}
M.~Schaefer.
\newblock {Toward} a {Theory} of {Planarity:} {Hanani}-{Tutte} and {Planarity}
  {Variants}.
\newblock {\em Journal of Graph Algorithms and Applications}, 17(4):367--440,
  2013.

\bibitem[Sha57]{Shapiro57}
A.~Shapiro.
\newblock Obstructions to the imbedding of a complex in a {Euclidean} space.
  {I.} {The} first obstruction.
\newblock {\em Annals of Mathematics, Second Series}, 66:256–269, 1957.

\bibitem[S{\v{S}}13]{SchaeferZ2_embed}
M.~Schaefer and D.~{\v{S}}tefankovi{\v{c}}.
\newblock {Block} {Additivity} of {$\mathbb{Z}_2$}-{Embeddings}.
\newblock In S.~Wismath and A.~Wolff, editors, {\em Graph Drawing: 21st
  International Symposium, GD 2013, Bordeaux, France, September 23-25, 2013,
  Revised Selected Papers}, pages 185--195, Cham, 2013. Springer International
  Publishing.

\bibitem[Tut70]{Tutte70}
W.~T. Tutte.
\newblock Toward a theory of crossing numbers.
\newblock {\em Journal of Combinatorial Theory}, 8(1):45 -- 53, 1970.

\bibitem[vK33]{Kampen33}
E.~R. van Kampen.
\newblock Komplexe in euklidischen {R}{\"a}umen.
\newblock {\em Abhandlungen aus dem {M}athematischen {S}eminar der
  {U}niversit{\"a}t {H}amburg}, 9(1):72--78, 1933.

\bibitem[Wu55]{Wu55}
W.~Wu.
\newblock On the realization of complexes in {Euclidean} spaces. {I}.
\newblock {\em Acta Mathematica Sinica}, 5:505–552, 1955.

\end{thebibliography}
\addcontentsline{toc}{section}{References}
\fi

\end{document}